\def\typeout#1{}
\def\doi{4 (2:7) 2008}
\begin{document}
\makeatletter
\def\letenv#1#2{\global\newenvironment{#1}{\begin{#2}}{\end{#2}}}
\letenv{theorem}{thm}
\let\c@theorem\c@thm
\letenv{definition}{defi}
\letenv{remark}{rem}
\letenv{lemma}{lem}
\let\c@lemma\c@thm
\letenv{proposition}{prop}
\letenv{corollary}{cor}
\letenv{example}{exa}
\makeatother
\title{On the Expressiveness and Complexity of~\ATL}
\keywords{multi-agent systems, temporal logic, model checking}
\subjclass[2000]{F.1.1,F.3.1} 
\titlecomment{This article is a long version 
  of~\cite{LMO-fossacs07}.}
\author[F.~Laroussinie]{Fran\c cois Laroussinie\rsuper a}
\address{{\lsuper a}LIAFA, Univ.\ Paris 7 \& CNRS, France}
\email{francoisl@liafa.jussieu.fr}
\author[N.~Markey]{Nicolas Markey\rsuper b}
\address{{\lsuper{b,c}}LSV, ENS Cachan \& CNRS, France}
\email{\{markey,oreiby\}@lsv.ens-cachan.fr}
\author[G.~Oreiby]{Ghassan Oreiby\rsuper c}
\thanks{{\lsuper c}This author is sponsored by a PhD grant from Region 
  \^Ile-de-France.}
%
\begin{abstract}
  \ATL is a temporal logic geared towards the specification and
  verification of properties in multi-agents systems. It allows to
  reason on the existence of strategies for coalitions of agents in
  order to enforce a given property.  
  In this paper, we first precisely characterize the complexity of
  \ATL model-checking over Alternating Transition Systems and
  Concurrent Game Structures when the number of agents is not
  fixed. We prove that it is \DD2- and \DD3-complete, depending on the
  underlying multi-agent model (\ATS and \CGS resp.).  We also consider the
  same problems for some extensions of~\ATL.
  We then consider expressiveness issues. We show how \ATS and \CGS are
  related and provide translations between these models w.r.t.\ alternating
  bisimulation. We also prove that the standard definition of~\ATL (built on
  modalities ``Next'', ``Always'' and ``Until'') cannot express the duals of
  its modalities: it is necessary to explicitely add the modality ``Release''.
\end{abstract}

\maketitle

\section{Introduction}

\subsection*{Model checking.}
Temporal logics were proposed for the specification of reactive
systems almost thirty years ago~\cite{CE81,Pnu77,QS82}.  They have
been widely studied and successfully used in many situations,
especially for model checking ---the automatic verification that a
finite-state model of a system satisfies a temporal logic
specification.
Two flavors of temporal logics have mainly been studied:
\emph{linear-time temporal logics}, \emph{e.g.} \LTL~\cite{Pnu77},
which  expresses properties on 
the possible
\emph{executions} of the model; and \emph{branching-time temporal
  logics}, such as \CTL~\cite{CE81,QS82}, which can express
requirements on \emph{states} (which may have several possible
futures) of the model.

\subsection*{Alternating-time temporal logic.}
Over the last ten years, a new flavor of temporal logics has been
defined: \emph{alternating-time temporal
  logics}~(\ATL)~\cite{focs1997-AHK}.
\ATL is a fundamental logic for verifying 
properties in 
\emph{synchronous multi-agent systems}, 
in which several agents can concurrently act upon 
the behavior of the system.
This is particularly interesting for modeling control problems. 
In that setting, it is not only interesting to know if something \emph{can
  arrive} or \emph{will arrive}, as can be expressed in~\CTL or~\LTL, but
rather if some agent(s) can \emph{control} the evolution of the system in order to
enforce a given property. 

The logic \ATL can precisely express this kind of properties, and can
for instance state that ``there is a strategy for a coalition~$A$ of agents
in order to eventually reach an accepting state, whatever the other
agents do''.  \ATL can be seen as an extension of~\CTL; 
its formulae are built on
atomic propositions and boolean combinators, and
(following the seminal 
papers~\cite{focs1997-AHK,compos1997-AHK,jacm49(5)-AHK})
on modalities~$\Diam[A]\X\phi$ (
coalition~$A$ has a
strategy to immediately enter a state satisfying~$\phi$),
$\Diam[A]\G\phi$ (
coalition~$A$ can force  the system 
to always satisfy~$\phi$) and
$\Diam[A]\phi\Until\psi$ (
coalition~$A$ has a strategy
to enforce $\phi\Until\psi$).

\subsection*{Multi-agent models.}
While linear- and branching-time temporal logics are interpreted on Kripke
structure, alternating-time temporal logics are interpreted on models that 
incorporate the notion of \emph{multiple agents}. Two kinds of synchronous
multi-agent models have been proposed for~\ATL in the literature.
First \emph{Alternating Transition
  Systems}~(\ATS{}s)\cite{compos1997-AHK} have been defined: in any
  location of an~\ATS, each agent chooses one \emph{move},
  \emph{i.e.}, a subset of locations (the list of possible moves is
  defined explicitly in the model) in which she would like the
  execution to go~to. When all the agents have made their choice, the
  intersection of their choices is required to contain one single
  location, in which the execution enters.
In the second family of models, called \emph{Concurrent Game
  Structures}~(\CGS{}s)~\cite{jacm49(5)-AHK}, each of the~$n$ agents
has a finite number of possible moves (numbered with integers), and,
in each location, an $n$-ary transition function indicates the state
to which the execution goes.

\subsection*{Our contributions.}
First we precisely characterize the complexity of the model checking
problem. The original works about~\ATL provide model-checking
algorithms in time~$O(m\cdot l)$, where~$m$ is the number of
transitions in the model, and~$l$ is the size of the
formula~\cite{compos1997-AHK,jacm49(5)-AHK}, thus in~\PTIME.
However, contrary to Kripke structures, the number of transitions in a~\CGS or
in an~\ATS is not quadratic in the number of states~\cite{jacm49(5)-AHK}, and
might even be exponential in the number of agents.
\PTIME-completeness thus only holds for~\ATS  when the number of agents is bounded, and
it is shown in~\cite{JD05,JD-ifi} that the problem is strictly\footnote{%
  We adopt the classical hypothesis that the polynomial-time
  hierarchy does not collapse, and that $\PTIME\not=\NP$. 
  We refer to~\cite{Pap94} for the
   definitions about complexity classes, especially about 
  oracle Turing machines and the polynomial-time hierarchy. 
} harder otherwise, namely \NP-hard on~\ATS and \Ss2-hard on~\CGS{}s
where the transition function is encoded as a boolean function.
We prove that it is in fact \DD2-complete and \DD3-complete, resp.
We also precisely characterize the complexity of model-checking classical
extensions of~\ATL, depending on the underlying family of models.

Then we address expressiveness questions.  First we show how \ATS{}s and
\CGS{}s are related by providing  translations between these
models. Moreover we consider expressiveness questions about \ATL
modalities.  While in \LTL and~\CTL, the dual of ``Until'' modality
can be expressed as a disjunction of ``always'' and ``until'', we
prove that it is not the case in~\ATL. In other words, \ATL, as
defined in~\cite{focs1997-AHK,compos1997-AHK,jacm49(5)-AHK}, 
is not as 
expressive as one could expect (while the dual
modalities clearly do not increase the complexity of the verification
problems).

\subsection*{Related works.}
In~\cite{compos1997-AHK,jacm49(5)-AHK}, \ATL has been defined and studied over
\ATS{}s and~\CGS{}s. In~\cite{schobryan}, expressiveness 
issues are considered for $\ATL^*$ and~\ATL. Complexity of
satisfiability is addressed in~\cite{GvD-TCS,WLWW05}.  Complexity
results about model checking (for \ATL, \ATL+, $\ATL{}^*$) can be
found in~\cite{jacm49(5)-AHK,Schobbens-LCMAS03}.
Regarding control- and game theory, many papers have focused on this
wide area;
we refer to~\cite{Wal04} for a survey, and to its numerous references for a
complete overview.

\subsection*{Plan of the paper.} 
Section~\ref{defs} contains the formal definitions needed in
the sequel.  Section~\ref{complex} deals with the model-checking
questions and contains algorithms and complexity analysis for 
\ATS{}s and~\CGS{}s.
Section~\ref{sec-expr} contains our expressiveness results:
we first prove that \ATS{}s and \CGS{}s have
the same expressive power w.r.t.\ alternating bisimulation (\ie, any
\CGS can be translated into an equivalent \ATS, and vice-versa).  
We then present  our expressiveness results concerning 
\ATL modalities.

\section{Definitions}\label{defs}

\subsection{Concurrent Game Structures} 

Concurrent game structures are a multi-player extension of classical Kripke
structures~\cite{jacm49(5)-AHK}. Their definition is as follows:

\begin{definition}\label{def-cgs}
A \emph{Concurrent Game Structure} (\emph{\CGS{}} for short) $\mathcal C$ is a $6$-tuple $(\Agt,
\Loc,\penalty500 \AP,\penalty500 \Lab,\penalty500 \Chc, \Edg)$ where:
\begin{enumerate}[$\bullet$]
\item $\Agt=\{A_1,...,A_k\}$ is a finite set of \emph{agents} (or \emph{players}); 
\item \Loc and~\AP are two finite sets of \emph{locations} and
\emph{atomic propositions}, resp.;
\item $\Lab\colon \Loc\to 2^{\AP}$ is a function labeling each location by the
  set of atomic propositions that hold for that location;
\item $\Chc\colon \Loc\times\Agt \to \Part(\Nat)\smallsetminus\{\varnothing\}$ 
  defines the (finite) set of possible moves of each agent in each location.
\item $\Edg\colon \Loc \times \Nat^k \to \Loc$, where~$k=\size\Agt$, is a
  (partial) function defining the
  transition table. With each location and each set of moves of
  the agents, it associates the resulting location. 
\end{enumerate}
\end{definition}

The intended behaviour is as follows~\cite{jacm49(5)-AHK}: in a 
location~$\ell$, each player~$A_i$ chooses one possible move~$m_{A_i}$
in~$\Chc(\ell, A_i)$ and the next location is given by
$\Edg(\ell,m_{A_1},...,m_{A_k})$. We~write $\Next(\ell)$ for the set
of all possible successor locations from~$\ell$, and
$\Next(\ell,A_j,m)$, with $m\in\Chc(\ell,A_j)$, for the restriction of $\Next(\ell)$ to
locations reachable from~$\ell$ when player~$A_j$ makes the move~$m$.

\medskip
The way the transition table~$\Edg$ is encoded has not been made precise in
the original definition. Following the remarks of~\cite{JD05}, we propose two
possible encodings:

\begin{definition}\label{def-iecgs}\hfill

\begin{enumerate}[$\bullet$]
\item An \emph{\CGSe{}} is a \CGS where the transition table is defined
explicitly. 

\item An \emph{\CGSi{}} is a \CGS where, in each location~$\ell$, 
the transition function is defined by  a
finite sequence $((\phi_0, \ell_0),...,(\phi_n,\ell_n))$, where $\ell_i\in
\Loc$ is a location, and $\phi_i$ is a boolean combination of propositions
$A_j=c$ that 
evaluate to true iff agent~$A_j$ chooses move~$c$. 
The transition table is then defined as follows: 
\( \Edg(\ell,m_{A_1},...,m_{A_k}) = \ell_j \)
iff $j$ is the lowest index s.t.~$\phi_j$ evaluates to true when players~$A_1$
to~$A_k$ choose moves~$m_{A_1}$ to~$m_{A_k}$. We require that 
the last boolean formula~$\phi_n$ be~$\top$, so that no agent can enforce a deadlock.
\end{enumerate}
\end{definition}

Besides the theoretical aspect, the implicit description of~\CGS{}s may
reveal useful in practice, as it allows to not explicitly describe the
full transition table. 

The size $|\mathcal{C}|$ of a \CGS~$\mathcal{C}$ is defined as
$\size{\Loc}+\size{\Edg}$. For \CGSe{}s, $\size{\Edg}$ is the size of the
transition table. For \CGSi{}s, $\size{\Edg}$ is the sum
of the sizes of the formulas used for the definition of~$\Edg$.

\subsection{Alternating Transition Systems}
In the original works about~\ATL~\cite{focs1997-AHK}, 
the logic was interpreted on \ATS{}s, which are
transition systems slightly different from~\CGS{}s:

\begin{definition}\label{def-ats}
An \emph{Alternating Transition System} (\emph{\ATS{}} for short)
$\mathcal A$ is a $5$-tuple $(\Agt, \Loc,\penalty500 \AP,\penalty500
 \Lab, \Chc)$ where:
\begin{enumerate}[$\bullet$]
\item \Agt, \Loc, \AP and \Lab have the same meaning as in~\CGS{}s;
\item $\Chc\colon \Loc\times \Agt \to \Part(\Part(\Loc))$ associate
  with each location~$\ell$ and each agent~$a$ the set of possible
  moves, each move being a subset of~\Loc. For each location~$\ell$,
  it is required that, for any $Q_i\in \Chc(\ell,A_i)$,
  $\bigcap_{i\leq k} Q_i$ be a singleton.
\end{enumerate}
\end{definition}

The intuition is as follows: in a location~$\ell$, once all the agents
have chosen their moves (\ie, a subset of locations), the execution
goes to the (only) state that belongs to all the sets chosen by the
players. Again $\Next(\ell)$ (resp.~$\Next(\ell,A_j,m)$) 
denotes the set of all possible
successor locations (resp.~the set of possible successor locations
when player~$A_j$ 
chooses the move~$m$).

The size of an \ATS is $\size\Loc+\size\Chc$ where $\size\Chc$ is the sum of
the number of locations in each possible move of each agent in each location.

We prove in Section~\ref{traduc-models} that
\CGS{}s and \ATS{}s have the same expressiveness
(w.r.t.~alternating bisimilarity~\cite{AHKV-concur98}).

\subsection{Coalition, strategy, outcomes of a strategy}

A coalition is a subset of agents. In~multi-agent systems, 
a~coalition~$A$ plays against its opponent coalition~$\Agt\smallsetminus A$ 
as if they were two single players.
We thus extend~$\Chc$ and~$\Next$ to coalitions:

\begin{enumerate}[$\bullet$]
\item Given $A\subseteq \Agt$ and $\ell\in\Loc$, $\Chc(\ell,A)$
denotes the possible moves for the coalition $A$ from~$\ell$. Such a
move~$m$ is composed of a single move for every agent of the
coalition, that is $m \egdef (m_a)_{a\in A}$.  Then, given a move
$m'\in\Chc(\ell,\Agt\backslash A)$, we~use $m\oplus m'$ to denote the
corresponding \emph{complete} move (one~for each~agent). In~\ATS{}s, such
a move $m\oplus m'$ corresponds to the unique resulting location;
in~\CGS{}s, it is given by $\Edg(\ell,m\oplus m')$.
\item $\Next$ is extended to coalitions in a natural way: given $m =
(m_a)_{a\in A}\in \Chc(\ell,A)$, we let $\Next(\ell,A,m)$ denote the
restriction of $\Next(\ell)$ to locations reachable from~$\ell$ when
every player $A_j\in A$ makes the move~$m_{A_j}$.
\end{enumerate}

\noindent Let $\calS$ be a \CGS or an \ATS.  A \emph{computation} of~$\calS$ is
an infinite sequence $\rho=\ell_0\ell_1\cdots$ of locations such that
for any $i$, $\ell_{i+1} \in \Next(\ell_i)$. We~write
$\rho[i]$ for the~$i+1$-st location~$\ell_i$.
A~\emph{strategy} for a player $A_i\in \Agt$ is a function~$f_{A_i}$
that maps any finite prefix of a computation to a possible move
for~$A_i$, \ie, satisfying $f_{A_i}(\ell_0\cdots\ell_m) \in
  \Chc(\ell_m,A_i)$.  A~strategy is \emph{state-based} 
(or~\emph{memoryless}) if it only depends on the current state
(\ie, $f_{A_i}(\ell_0\cdots\ell_m) = f_{A_i}(\ell_m)$).

A strategy induces a set of computations from~$\ell$ ---called
the \emph{outcomes} of~$f_{A_i}$ from~$\ell$ and denoted%
\footnote{We might omit to mention~$\calS$ when it is clear from the context.}
\newcounter{fnote}\setcounter{fnote}{\thefootnote}%
$\Out_{\calS}(\ell,f_{A_i})$--- that player~$A_i$ can enforce:
$\ell_0\ell_1\cdots \in \Out_{\calS}(\ell,f_{A_i})$ iff $\ell=\ell_0$ and
for any~$i$ we have
$\ell_{i+1}\in\Next(\ell_i,A_i,f_{A_i}(\ell_0\cdots\ell_i))$.
Given~a coalition~$A\subseteq \Agt$, a strategy for~$A$ is a
tuple~$F_A$ containing one strategy for each player in~$A$: $F_A=\{
f_{A_j} | A_j\in A\}$. The~outcomes of~$F_A$ from a location~$\ell$
contains the computations enforced by the strategies in~$F_A$:
$\ell_0\ell_1\cdots \in \Out_{\calS}(\ell,F_A)$ iff $\ell=\ell_0$ and
for any~$i$, $\ell_{i+1} \in \Next(\ell_i,A,(f_{a} (\ell_0, \cdots
,\ell_i))_{a\in A})$.
The set of strategies for~$A$ is
denoted\footnotemark[\thefnote]~$\Strat_{\calS}(A)$. 
Finally, note that $F_\emptyset$ is empty and
 $\Out_{\calS}(\ell,\emptyset)$ represents the set of all
computations from~$\ell$.

\subsection{The logic \texorpdfstring{\ATL}{ATL}} 

We now define the logic \ATL, whose purpose is to express controllability
properties on \CGS{}s and~\ATS{}s. Our definition is slightly different from
the one proposed in~\cite{jacm49(5)-AHK}. This difference will be explained
and argued in Section~\ref{expr}.
\begin{definition}\label{def-ATL}
The syntax of~\ATL is defined by the following grammar:
\begin{eqnarray*}
\ATL \ni \phi_s,\psi_s& ::=&\top\,\mid\, P \,\mid\, \non\phi_s \,\mid\, \phi_s\ou\psi_s
   \,\mid\, \Diam[A]\phi_p   \\
          \phi_p& ::= &  \non\phi_p \,\mid\, \X\phi_s\,\mid\,  \phi_s\Until\psi_s
\end{eqnarray*}
where $P$ ranges over the set~$\AP$ and $A$ over the subsets of~$\Agt$.

\medskip
Given a formula~$\phi\in \ATL$, the size
of~$\phi$, denoted by~$\size\phi$, is the size of the tree representing that
formula. The DAG-size of~$\phi$ is the size of the directed acyclic graph
representing that formula (\ie, sharing common subformulas).
\end{definition}
In addition, we use standard abbreviations such as $\top$, $\bot$, 
$\F$,~etc.  
\ATL~formulae are interpreted over states of a game structure~$\calS$.  The 
semantics of the main operators is defined 
as follows\footnotemark[\thefnote]:
\begin{xalignat*}1
\ell \sat_{\calS}  \Diam[A] \phi_p & \qquad\mbox{iff}\qquad  
  \exists F_A\in\Strat(A).\ \forall \rho\in\Out(\ell,F_A).\  \rho \sat_{\calS} \phi_p, \\
%
\rho \sat_{\calS}  \X \phi_s & \qquad\mbox{iff}\qquad    \rho[1] \sat_{\calS} \phi_s, \\
\rho \sat_{\calS}  \phi_s \Until \psi_s & \qquad\mbox{iff}\qquad  \exists i.\  
  \rho[i] \sat_{\calS} \psi_s \ \mbox{and}\   \forall 0\leq j < i.\ 
  \rho[j]\sat_{\calS} \phi_s.
\end{xalignat*} 
It is well-known that, for the logic~\ATL, it is sufficient to restrict to
state-based strategies (\emph{i.e.}, $\Diam[A]\phi_p$ is satisfied iff there
is a state-based strategy all of whose outcomes
satisfy~$\phi_p$)~\cite{jacm49(5)-AHK,Schobbens-LCMAS03}.  

Note that $\Diam[\emptyset] \phi_p$ corresponds to the \CTL
formula~$\All \phi_p$ (\ie, universal quantification over all
computations issued from the current state), while $\Diam[\Agt]
\phi_p$ corresponds to existential quantification~$\Ex\phi_p$. 
However, $\non\Diam[A]\phi_p$ is generally \emph{not} equivalent
to~$\Diam[\Agt\smallsetminus
A]\non\phi_p$~\cite{jacm49(5)-AHK,GvD-TCS}: indeed the absence of a
strategy for a coalition $A$ to ensure $\phi$ does not entail the
existence of a strategy for the coalition $\Agt\backslash A$ to ensure
$\non\phi$. For~instance, Fig.~\ref{fig-exCGS} displays a (graphical
representation of a) $2$-player~\CGS for which, in~$\ell_0$, both
$\non\Diam[A_1]\X p$ and $\non\Diam[A_2]\non\X p$ hold. In~such a
representation, a transition is labeled with $\listof{m_1,m_2}$ when
it corresponds to move~$m_1$ of player~$A_1$ and to move~$m_2$ of
player~$A_2$.  Fig.~\ref{fig-exATS} represents an
``equivalent'' 
\ATS with the same property.
\begin{figure}[!ht]
\begin{minipage}[b]{.45\linewidth}
\centering
\begin{tikzpicture}
\path (0,0) node[draw,circle] (A) {$\phantom p$} -- +(60:5mm) node {$\scriptstyle \ell_0$};
\path (-30:2cm) node[draw,circle] (B) {\hbox to 2mm{\hss $p$\hss}} 
  -- +(-30:5mm) node {$\scriptstyle \ell_1$};
\path (-70:2cm) node[draw,circle] (C) {\hbox to 2mm{\hss $\non p$\hss}} 
  -- +(-30:5mm) node {$\scriptstyle \ell'_1$};
\path (-110:2cm) node[draw,circle] (D) {\hbox to 2mm{\hss $\non p$\hss}} 
  -- +(-30:5mm) node {$\scriptstyle \ell'_2$};
\path (-150:2cm) node[draw,circle] (E) {\hbox to 2mm{\hss $p$\hss}} 
  -- +(-30:5mm) node {$\scriptstyle \ell_2$};
\draw[-latex'] (A) -- (B) node[midway,above] {$\scriptstyle \listof{1,1}$};
\draw[-latex'] (A) -- (C) node[pos=.7,right=-1mm] {$\scriptstyle \listof{1,2}$};
\draw[-latex'] (A) -- (D) node[pos=.7,left=-1mm] {$\scriptstyle \listof{2,1}$};
\draw[-latex'] (A) -- (E) node[midway,above] {$\scriptstyle \listof{2,2}$};
\end{tikzpicture}
\caption{A \CGS that is not determined.}\label{fig-exCGS}
\end{minipage}
\hfill
\begin{minipage}[b]{.5\linewidth}
\[\Loc=\{\ell_0,\ell_1,\ell_2,\ell'_1,\ell'_2\}\]
\begin{xalignat*}1
\Chc(\ell_0,A_1) &= \{\{\ell_1,\ell'_1\},\{\ell_2,\ell'_2\}\} \\
\Chc(\ell_0,A_2) &= \{\{\ell_1,\ell'_2\},\{\ell_2,\ell'_1\}\} 
\end{xalignat*}
with $
\begin{cases}
\Lab(\ell_1)=\Lab(\ell_2) = \{p\} \\
\Lab(\ell'_1)=\Lab(\ell'_2) = \varnothing
\end{cases}
$
\caption{An \ATS that is not determined.}\label{fig-exATS}
\end{minipage}
\end{figure}

\section{Complexity of \texorpdfstring{\ATL}{ATL} model-checking}\label{complex}

In this section, we establish the precise complexity of \ATL
model-checking.  This issue has already been addressed in the seminal papers
about~\ATL, on both \ATS{}s~\cite{compos1997-AHK}
and~\CGS{}s~\cite{jacm49(5)-AHK}.  The time complexity is shown to be
in~$O(m\cdot l)$, where $m$ is the number of transitions and $l$ is
the size of the formula.  The authors then claim that the
model-checking problem is in~\PTIME{} (and obviously, \PTIME-complete,
since it is already for~\CTL). In~fact this only holds for explicit
\CGS{}s. In~\ATS{}s, the number of transitions might be exponential in the
size of the system (more~precisely, in the number of~agents).
This problem ---the exponential blow-up
of the number of transitions to handle in the verification
algorithm--- also occurs for implicit~\CGS{}s: the~standard
algorithms running in~$O(m\cdot l)$ require exponential time.

Basically, the algorithm for model-checking~\ATL is similar to that
for~\CTL: it~consists in recursively computing fixpoints, based
\emph{e.g.}~on the following equivalence:
\begin{equation}
\Diam[A] p \Until q \equiv \mu Z. (q \ou (p \et \Diam[A] \X Z))
\label{eq-fixpoint}
\end{equation}
The difference with~\CTL is that we have to deal with the modality
$\Diam[A] \X$ ---corresponding to the \emph{pre-image} of a set of states
\emph{for some coalition}--- instead of the standard modality~$\E\X$. 
In~control theory,  $\Diam[A] \X$ corresponds to
the \emph{controllable predecessors} of a set of states for a
coalition: $\CPre(A,S)$, with $A\subseteq \Agt$ and $S\subseteq \Loc$,
is defined as follows:
\[
\CPre(A,S) \mathrel{\egdef} \{ \ell \in \Loc \mid \exists m_A \in
\Chc(\ell,A) \:\mbox{s.t.}\: \Next(\ell,A,m_A)\subseteq S\}
\]
The crucial point of the model-checking algorithm is the computation
of the set~$\CPre(A,S)$. 

In the sequel, we establish the exact complexity of computing~$\CPre$ (more
precisely, given~$A\subset\Agt$, $S\subseteq\Loc$, and~$\ell\in\Loc$,
the complexity of deciding whether $\ell\in\CPre(A,S)$), and of \ATL 
model-checking for our three kinds of multi-agent systems.

\subsection{Model checking \texorpdfstring{\ATL}{ATL} on \texorpdfstring{\CGSe}{CGSe}{}s.}\label{ATLsurCGSex}

As already mentionned, the precise complexity of \ATL model-checking over
\CGSe{}s was established in~\cite{jacm49(5)-AHK}:
\begin{theorem}
\ATL model-checking over \CGSe{}s is \PTIME-complete.
\end{theorem}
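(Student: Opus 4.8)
The plan is to establish the two directions of the complexity characterization separately: membership in \PTIME and \PTIME-hardness.

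For the upper bound, the plan is to show that the fixpoint-based algorithm sketched above runs in polynomial time when the transition table is given explicitly. The key observation is that \ATL model-checking reduces to iteratively computing sets of the form $\CPre(A,S)$, following the $\mu$-calculus-style equivalence~(\ref{eq-fixpoint}) for the \Until{} modality (and a dual $\nu$-fixpoint for~$\G$). The standard \CTL-style labelling procedure evaluates each subformula bottom-up; for a state subformula of the form $\Diam[A]\psi_p$, one computes a least or greatest fixpoint by repeatedly applying $\CPre(A,\cdot)$, which stabilises after at most $\size\Loc$ iterations since each iteration adds or removes at least one location. It therefore suffices to argue that, for a \CGSe, deciding whether $\ell\in\CPre(A,S)$ is polynomial in $\size\calS$: one simply enumerates the possible moves $m_A\in\Chc(\ell,A)$ and, for each, checks that every resulting successor under any opponent move lies in~$S$. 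Because the transition table is stored explicitly, each relevant entry $\Edg(\ell,m_A\oplus m')$ is available in constant time and the number of table entries is bounded by $\size\Edg\le\size\calS$, so the whole check is polynomial. Aggregating over the $O(\size\phi)$ subformulas and the $O(\size\Loc)$ fixpoint iterations yields an overall polynomial bound, matching the $O(m\cdot l)$ complexity recalled above.

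For the lower bound, the plan is to note that \PTIME-hardness is inherited directly from~\CTL model-checking, which is already \PTIME-complete. Since $\Diam[\emptyset]\phi_p$ coincides with the \CTL{} universal path quantifier $\All\phi_p$ and $\Diam[\Agt]\phi_p$ with $\Ex\phi_p$ (as remarked after Definition~\ref{def-ATL}), any \CTL{} formula over a Kripke structure can be expressed as an \ATL{} formula over a \CGSe{} with a single agent whose explicit transition table simply mirrors the Kripke transition relation. This gives a logspace reduction from \CTL{} model-checking, establishing \PTIME-hardness.

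The main subtlety, and the crux distinguishing this theorem from the harder \CGSi{} and \ATS{} cases treated later, is precisely the cost of computing $\CPre$. Here it is genuinely easy because the explicit encoding makes the move sets and their outcomes directly accessible and polynomially bounded; the potential exponential blow-up in the number of transitions (in terms of the number of agents) noted in the introduction does \emph{not} arise, since for \CGSe{}s that blow-up is already paid in the input size $\size\Edg$. Thus the only point requiring care is to verify that the \CPre{} computation is measured against the true size of the explicit model rather than against the number of agents, after which both bounds follow routinely.
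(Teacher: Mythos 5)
Your proposal is correct and matches the approach the paper relies on: the paper does not give its own proof but cites the original \ATL paper, and its surrounding discussion justifies the result exactly as you do --- the $O(m\cdot l)$ fixpoint/\CPre{} algorithm is polynomial for \CGSe{}s precisely because the explicit transition table (and hence the potential exponential number of moves) is already part of the input size, while \PTIME-hardness is inherited from \CTL{} via the $\Diam[\emptyset]$/$\Diam[\Agt]$ correspondence. Your write-up simply fills in the details of that standard argument, including the key point that enumerating coalition moves is bounded by $\size\Edg$.
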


To our knowledge, the precise complexity of computing \CPre in \CGSe{}s has
never been considered. The best upper bound is \PTIME, which is sufficient for
deriving the \PTIME complexity of \ATL model-checking.

In fact, given a location~$\ell$, a set of locations~$S$ and a coalition~$A$,
deciding whether $\ell\in\CPre(A,S)$ has complexity much lower
than~\PTIME:
\begin{proposition}
Computing \CPre in \CGSe{}s is in~$\AC^0$.
\end{proposition}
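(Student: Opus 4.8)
The plan is to establish membership in $\AC^0$ by exhibiting a uniform family of constant-depth, polynomial-size, unbounded-fan-in boolean circuits that decide, on input a \CGSe~$\calC$ together with a designated location~$\ell$, a coalition~$A$, and a target set~$S\subseteq\Loc$, whether $\ell\in\CPre(A,S)$. I~would work directly from the definition
\[
\CPre(A,S) = \{ \ell \mid \exists m_A \in \Chc(\ell,A).\ \Next(\ell,A,m_A)\subseteq S\},
\]
reading each existential and universal quantifier as an unbounded-fan-in OR and AND gate respectively. The first step is to observe that the witness $m_A=(m_a)_{a\in A}$ ranges over $\Chc(\ell,A)=\prod_{a\in A}\Chc(\ell,a)$, and that every such tuple is a list of integers each of which is an index into the \emph{explicitly} tabulated move lists; since $\calC$ is a \CGSe, the transition table~$\Edg$ is given explicitly, so each complete move $m_A\oplus m'$ indexes a single entry of~$\Edg$, and the number of table entries is bounded by the input size~$\size\calC$.

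The key step is to unfold the containment test $\Next(\ell,A,m_A)\subseteq S$ into a purely combinatorial predicate over the input bits. For a fixed~$\ell$ and a fixed coalition move~$m_A$, the set $\Next(\ell,A,m_A)$ is exactly the set of locations $\Edg(\ell,m_A\oplus m')$ as $m'$ ranges over $\Chc(\ell,\Agt\smallsetminus A)$; each of these is a single entry of the transition table. Thus $\Next(\ell,A,m_A)\subseteq S$ is the conjunction, over all opponent moves~$m'$, of the predicate ``the table entry indexed by $(\ell,m_A\oplus m')$ names a location lying in~$S$''. The membership-in-$S$ test for one named location is a single OR over the (at~most $\size\Loc$) elements of~$S$ comparing location identifiers, which is depth~$1$; the comparison of two $O(\log\size\calC)$-bit identifiers for equality is itself a constant-depth subcircuit. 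Plugging this in, the whole predicate becomes
\[
\bigvee_{m_A\in\Chc(\ell,A)}\ \bigwedge_{m'\in\Chc(\ell,\Agt\smallsetminus A)}\ \bigvee_{\ell''\in S}\ \bigl[\Edg(\ell,m_A\oplus m')=\ell''\bigr],
\]
an alternation OR--AND--OR of unbounded-fan-in gates sitting above the constant-depth equality-testing gadgets. The fan-ins are each bounded by $\size\calC$, so the total size is polynomial and the depth is an absolute constant independent of the input.

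The main obstacle I~anticipate is bookkeeping around the \emph{uniformity} of the circuit family and the precise input encoding, rather than any conceptual difficulty. Because the coalition~$A$, the location~$\ell$, and the set~$S$ are part of the input (not hardwired), the ranges of the OR and AND gates and the indices fed into the $\Edg$-lookup all depend on input bits; one must check that selecting the correct table entry from $(\ell,m_A\oplus m')$ —effectively an addressable read into the explicitly given table— can be done in constant depth. This reduces to a standard multiplexer/addressing circuit, which is $\AC^0$-computable since addresses have logarithmic width. I~would make explicit that the whole construction is \textsc{DLOGTIME}-uniform (or at least \textsc{LOGSPACE}-uniform, which already suffices for the intended separation from~\PTIME), by noting that the gate structure is described by simple arithmetic predicates on gate indices. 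Once the addressing gadget is handled, the alternation count and fan-in bounds follow immediately from the display above, giving the desired $\AC^0$ upper bound.
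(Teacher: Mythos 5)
Your overall strategy---one constant-depth circuit implementing the quantifier pattern $\exists m_A.\ \forall m'.\ (\Edg(\ell,m_A\oplus m')\in S)$ directly---is viable, and it is genuinely different from the paper's construction: the paper instead builds one \emph{hardwired} depth-$2$ circuit for every possible coalition $B\subseteq\Agt$ (polynomially many gates in total, because the explicit table already has $p^k\geq 2^k$ entries when $p\geq 2$) and then selects, in $\AC^0$, the output of the circuit corresponding to the input coalition~$A$. However, as written your proof has a gap at exactly the point the paper identifies as the crux (``since circuits must only depend on the size of the input, we cannot design a circuit for coalition~$A$''): a circuit family may depend only on the input \emph{length}, whereas the gates you display are indexed by $\Chc(\ell,A)$ and $\Chc(\ell,\Agt\smallsetminus A)$, so the topology of your circuit depends on the input bits encoding~$A$ (and, mildly, on those encoding~$S$ in the innermost OR). The displayed expression is therefore not yet a circuit, and your multiplexer remark does not repair this: a multiplexer handles ``given address bits, fetch a table entry,'' but it does not determine how many children the top OR and the middle ANDs have, nor what they range over---that is a property of the wiring, which must be fixed before any input is read.

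The gap is repairable within your scheme, and the repair is the one idea your write-up is missing: re-index both quantifiers by \emph{complete} move tuples, which are input-independent. Take the OR over all tuples $m^0\in\{1,\ldots,p\}^k$ and the AND over all tuples $m\in\{1,\ldots,p\}^k$, and at each leaf test the implication ``if $m$ agrees with $m^0$ on every agent of $A$, then the table entry at the hardwired address $m$ names a location of~$S$.'' The agreement guard is a conjunction over $i\leq k$ of clauses ($a_i$ implies $m_i=m^0_i$), where $a_i$ is the $i$-th coalition input bit and $m_i=m^0_i$ is a constant, hence constant depth; and since the address $m$ is now hardwired, no multiplexer is needed for the table read at all (similarly, the innermost OR should range over all of $\Loc$, conjoining each branch with the input bit asserting membership in~$S$). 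Over-counting is harmless (several $m^0$ induce the same coalition move, several $m$ the same opponent response), and the size is $O(p^{2k})$ times a polynomial factor, which is polynomial in the input size since the explicit table itself has $p^k$ entries. With this restructuring your OR--AND--OR argument goes through and yields a legitimate alternative to the paper's enumerate-coalitions-and-select construction; without it (or the paper's device), the proposal does not produce a valid circuit family.
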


\begin{proof}
We begin with precisely defining how the input is encoded as a sequence of bits:
\begin{enumerate}[$\bullet$]
\item the first $\size\Agt$ bits define the coalition: the~$i$-th bit is a~$1$
  iff agent~$A_i$ belongs to~$A$;
\item the following $\size\Loc$ bits of input define the set~$S$;
\item for the sake of simplicity, we assume that all the agents have the same
  number of moves in~$\ell$. We~write~$p$ for that number, which we assume is
  at least~$2$. The transition table~$\Edg(\ell)$ is then given as
  a sequence of $p^k$ sets of $\log(\size\Loc)$ bits.
\end{enumerate}

\noindent As a first step, it is rather easy to modify the input in order to have the
following form:
\begin{enumerate}[$\bullet$]
\item first the $k$ bits defining the coalition;
\item then, a sequence of~$p^k$ bits defining whether the resulting state
  belongs to~$S$.
\end{enumerate}
This is achieved by $p^k$ copies of the same $\AC^0$ circuit.

We now have to build a circuit that will ``compute'' whether coalition~$A$ has
a strategy for ending up in~$S$. Since circuits must only depend on the size
of the input, we cannot design a circuit for coalition~$A$. Instead, we build
one circuit for each possible coalition (their number is exponential in the
number of agents, but polynomial in the size of the input, provided
that~$p\geq 2$), and then select the result corresponding to coalition~$A$.

Thus, for each possible coalition~$B$, we build one circuit whose final node
will evaluate to~$1$ iff $\ell\in\CPre(B,S)$. This~is achieved by an unbounded
fan-in circuit of depth~$2$: at~the first level, we put $p^{\size B}$
AND-nodes, representing each of the $p^{\size B}$ possible moves for
coalition~$B$. Each of those nodes is linked to~$p^{k-\size B}$ bits of the
transition table, corresponding to the set of possible $p^{k-\size B}$ moves
of the opponents. At~the second level, an OR-node is linked to all the nodes
at depth~$1$.

Clearly enough, the OR-node at depth~$2$ evaluates to true iff coalition~$B$
has a strategy to reach~$S$. Moreover, there are $\binom kl$ coalitions
of size~$l$, each of which is handled by a circuit of $p^l+1$~nodes. The
resulting circuit thus has $(p+1)^k+2^k$ nodes, which is polynomial in the
size of the input. This circuit is thus an $\AC^0$ circuit. 

It simply remains to return the result corresponding to the coalition~$A$. 
This is easily achieved in~$\AC^0$. 
\end{proof}

\subsection{Model checking \texorpdfstring{\ATL}{ATL} on \texorpdfstring{\CGSi}{CGSi}{}s.}\label{ATLsurCGS}

Assuming that the transitions issued from~$\ell$ are given ---in the
transition table--- by the 
sequence $((\phi_0,\ell_0),(\phi_1,\ell_1),\ldots,(\phi_n,\ell_n))$,
we have: $\ell \in \CPre(A,S)$ iff there exists
$m_A\in\Chc(\ell,A)$, s.t.~there is no
$m_{\bar{A}}\in\Chc(\ell,\Agt\backslash A)$ and $\ell_i\in
\Loc\backslash S$ s.t.\footnote{Given $m= (m_a)_{a\in A}$ for
  $A\subseteq \Agt$, $\phi[m]$ denotes the formula where every
  proposition "$A_j\!=\!c$'' with $A_j\in A$ is replaced by $\top$
   if $m_{A_j}=c$, and by $\bot$  otherwise. If $A=\Agt$,
  $\phi[m]$ is boolean expression.}  $\phi_i[m_A\oplus
m_{\bar{A}}]\equiv \top$ and $\phi_j[m_A\oplus m_{\bar{A}}]\equiv
\bot$ for any $j<i$.
Thus we look for a move $m_A\in \Chc(\ell,A)$
s.t.\ for all~$m_{\bar A}\in\Chc(\ell,\bar A)$, the negation of 
$\OU_{\ell_i\in\Loc\backslash S} (\phi_i[m_A] \et \ET_{j<i} \non
\phi_j[m_A])$ holds. 

This problem corresponds to an instance of the \Ss2-complete problem
\EQSAT[2]:
\begin{prob}{\EQSAT[2]}
\entree{two families of variables~$X=\{x^1, ..., x^n\}$ and~$Y=\{y^1, ..., y^n\}$,
  a boolean formula $\phi$ on the set of variables $X\cup Y$.}
\sortie{True iff 
  \( \exists X.\ \forall Y.\ \phi\).}
\end{prob}%
And indeed, as a direct corollary of~\cite[Lemma~$1$]{JD05}, we have:

\begin{proposition}
Computing $\CPre$ in \CGSi{}s is \Ss2-complete.
\end{proposition}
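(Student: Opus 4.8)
The plan is to prove $\Ss2$-completeness by establishing membership in $\Ss2$ and $\Ss2$-hardness separately, using the characterization of $\CPre$ in terms of $\EQSAT[2]$ that was derived just above the statement.

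For membership in $\Ss2$, I would argue directly from the quantifier structure already exhibited: $\ell\in\CPre(A,S)$ holds iff there exists a move $m_A\in\Chc(\ell,A)$ such that for all $m_{\bar A}\in\Chc(\ell,\bar A)$ the boolean formula $\non\bigl(\OU_{\ell_i\in\Loc\setminus S}(\phi_i[m_A]\et\ET_{j<i}\non\phi_j[m_A])\bigr)$ evaluates to true. A~move for the coalition is a tuple of one integer move per agent, which can be guessed in polynomial size; likewise the opponent move is polynomially bounded. Once both moves are fixed, each~$\phi_i[m_A\oplus m_{\bar A}]$ is a closed boolean expression that can be evaluated in polynomial time, so the innermost predicate is decidable in $\PTIME$. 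This gives exactly an $\exists\forall$ alternation over polynomially bounded certificates with a polynomial-time matrix, which is the definition of~$\Ss2$.

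For hardness, I would appeal to \cite[Lemma~1]{JD05} as the statement announces ``as a direct corollary''. Concretely, the plan is to take an arbitrary instance of $\EQSAT[2]$, namely variable families $X$ and $Y$ and a formula~$\phi(X,Y)$, and build an \CGSi together with a location~$\ell$, a coalition~$A$, and a target set~$S$ so that $\ell\in\CPre(A,S)$ holds iff $\exists X.\,\forall Y.\,\phi$. The~natural encoding assigns the $X$-variables to moves controlled by agents in~$A$ and the $Y$-variables to moves controlled by agents in $\bar A=\Agt\setminus A$, using moves~$0$ and~$1$ to encode the truth value of each variable; the~implicit transition table is then a short sequence of guarded pairs whose guards are built from the formula~$\phi$, arranged so that reaching~$S$ corresponds precisely to $\phi$ evaluating to true. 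Since the coalition has a strategy (a single memoryless choice of its moves in~$\ell$) forcing entry into~$S$ exactly when some assignment to~$X$ makes~$\phi$ hold under every assignment to~$Y$, the reduction is correct; it is plainly computable in polynomial time because the guards are essentially copies of~$\phi$.

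The main obstacle I anticipate is the hardness direction: one must ensure the implicit encoding faithfully reproduces the $\exists\forall$ quantifier alternation while respecting the structural constraints of the \CGSi model ---in particular that the last guard~$\phi_n$ is~$\top$ so that no agent can enforce a deadlock, and that the ``lowest index'' semantics of the guarded sequence selects the intended successor. Verifying that the controllable-predecessor condition $\Next(\ell,A,m_A)\subseteq S$ matches ``$\forall Y.\,\phi$ under the assignment coded by~$m_A$'' is the delicate bookkeeping step, but it is routine once the correspondence between moves and variable assignments is fixed. Since the cited lemma already carries out essentially this construction, I would keep the argument brief and simply transport its reduction into the present terminology.
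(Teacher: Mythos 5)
Your proposal is correct and follows essentially the same route as the paper: membership in~\Ss2 via the $\exists\forall$ characterization of $\CPre$ with a polynomial-time evaluable matrix (the paper just phrases this as an explicit guess-and-check procedure with an oracle call), and hardness by transporting the reduction of~\cite[Lemma~1]{JD05}, which is exactly the paper's three-state \CGSi with $X$-agents in the coalition, $Y$-agents as opponents, transition guard~$\phi$ leading to~$q_\top$, a final $(\top,q_\bot)$ pair, and $S=\{q_\top\}$. The only cosmetic difference is that the paper writes out this gadget concretely while you describe it generically, but the construction and correctness argument coincide.
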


\begin{proof}
  The membership in~\Ss2 follows directly the above remarks. A~$\Ss2$ 
  procedure is explicitly described in Algorithm~\ref{alg-cpre-cgsi}. 
\begin{algorithm}[ht]
\SetVline
\SetKwComment{Comment}{//}{}
\textbf{Procedure} \texttt{co-strategy}
	($q$, $(\phi_i,\ell_i)_i$, $\left(m_a\right)_{a\in A}$, $S$)\par
\Comment{checks if the opponents have a co-strategy to $(m_a)_{a\in A}$ to avoid $S$}
\Begin{
	\ForEach{$\bar a \in \bar A$}
	{
		$m_{\bar a} \leftarrow guess(q, \bar a)$\;
	}
	$i \leftarrow 0$\;
        \While{$\non\phi_i(m_a,m_{\bar a})$}
	  {$i \leftarrow i+1$\;}

        \If{$\ell_i\notin S$}{\Return{yes}\;}
	\Else{\Return{no}\;}
}
\textbf{Procedure} \texttt{CPre}($A$, $S$) 
\Begin{
	$W\leftarrow \varnothing$\;
	\ForEach{$q \in \calC$}
	{
		\ForEach{$a \in A$}
		{
			$m_a \leftarrow guess(q, a)$\;		
		}
		\If{not \texttt{\upshape co-strategy}($q$, $(\phi_i,\ell_i)_i$, $\left(m_a\right)_{a\in A}$, $S$)}
			{$W\leftarrow W\cup\{q\}$\;}
	}
	\Return{$W$}\;
}
\caption{Computing $\CPre$\/ on \CGSi.}\label{alg-cpre-cgsi}
\end{algorithm}

Concerning hardness in~$\Ss2$, we directly use the construction
of~\cite[Lemma~$1$]{JD05}: from an instance $\exists X.\ \forall Y.\
\phi$ of \EQSAT[2], one consider an implicit \CGS with three
states~$q_1$, $q_\top$ and~$q_\bot$, and $2n$~agents
$A^1$,~...,~$A^n$, $B^1$,~...,~$B^n$, each having two possible choices
in~$q_1$ and only one choice in~$q_\top$ and~$q_\bot$. The transitions
out of~$q_\top$ and~$q_\bot$ are self-loops. The transitions
from~$q_1$ are given by:\quad \( \delta(q_1)= ((\phi[ x^j \leftarrow
(A^j\stackrel?= 1), y^j \leftarrow (B^j\stackrel?=1)], q_\top)
(\top,q_\bot)).  \)

Then clearly,  $q_1$ belongs to $\CPre(\{A^1,...,A^n\},\{q_\top\})$  
iff there exists a valuation for variables in~$X$ s.t.
$\phi$ is true whatever $B$-agents choose for $Y$.
\end{proof}

The complexity of \ATL model checking over \CGSi is higher: the
proof of \Ss2-hardness of $\CPre(A,S)$ can easily be adapted to prove
\PP2-hardness. Indeed consider the dual (thus \PP2-complete) problem
\AQSAT[2], in which, with the same input, the output is the value of
$\forall X.\ \exists Y.\ \phi$.  Then it suffices to consider the same
\CGSi, and the formula $\non\Diam[A^1,...,A^n] \X \non q_\top$.  It
states that there is no strategy for players~$A^1$ to~$A^n$ to
avoid~$q_\top$: whatever their choice, players~$B^1$ to~$B^n$ can
enforce~$\phi$.

This contradicts the claim in~\cite{JD05} that model checking \ATL
would be \Ss2-complete. In fact there is a flaw in their algorithm
about the way it handles negation (and indeed their result holds only for
the \emph{positive} fragment of \ATL~\cite{jamroga-dix-2008}): games
played on~\CGS{}s (and~\ATS{}s) are generally not determined, and the
fact that a player has no strategy to enforce~$\phi$ does not imply
that the other players have a strategy to enforce~$\non\phi$.  It
rather means that the other players have a \emph{co-strategy} to
enforce~$\non\phi$ (by~a \emph{co-strategy}, we mean a way to react to
each move of their opponents~\cite{GvD-TCS}).

\smallskip
Still, using the expression of \ATL modalities as fixpoint formulas 
(see~Eq.~\eqref{eq-fixpoint}), we can compute the set of states
satisfying an \ATL formula by a polynomial number of computations
of~$\CPre$, which yields a $\DD3$ algorithm:

\begin{proposition}\label{implicit-easy}
Model checking \ATL on \CGSi{}s is in~$\DD3$.
\end{proposition}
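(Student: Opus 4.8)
The plan is to run the standard \CTL-like bottom-up labelling algorithm for \ATL, computing for each subformula~$\phi$ of the input the set $[\phi]\subseteq\Loc$ of locations satisfying it, and to charge each elementary step to a query to an \Ss2-oracle for~\CPre. This is legitimate: deciding $\ell\in\CPre(A,S)$ on \CGSi{}s is in~\Ss2 by the previous proposition, and $\DD3 = \PTIME^{\Ss2}$. It therefore suffices to show that the whole labelling runs in deterministic polynomial time while issuing only polynomially many \CPre-queries.

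I would proceed by induction on the structure of~$\phi$, labelling each distinct subformula once (so the number of labelling phases is bounded by the DAG-size, hence by $\size\phi$). The atomic and boolean cases are immediate: $[\top]=\Loc$, $[P]=\{\ell\mid P\in\Lab(\ell)\}$, $[\phi_s\ou\psi_s]=[\phi_s]\cup[\psi_s]$, and --- this being the point where determinacy must \emph{not} be assumed --- $[\non\phi_s]=\Loc\smallsetminus[\phi_s]$ is obtained by a plain set complementation once $[\phi_s]$ is known. For the modalities, once the satisfaction sets of the state subformulas are available, I compute $[\Diam[A]\X\phi_s]$ as $\CPre(A,[\phi_s])$, one query per location; and $[\Diam[A]\,\phi_s\Until\psi_s]$ by iterating the least-fixpoint equation~\eqref{eq-fixpoint}, i.e.\ $Z_0=\varnothing$ and $Z_{n+1}=[\psi_s]\cup([\phi_s]\cap\CPre(A,Z_n))$, an increasing sequence bounded by~\Loc{} that stabilises within $\size\Loc$ steps.

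For the negated path modalities I push the negation inward rather than dualising the coalition. Thus $\Diam[A]\non\X\phi_s$ is treated as $\Diam[A]\X\non\phi_s$, i.e.\ $\CPre(A,\Loc\smallsetminus[\phi_s])$, while $\Diam[A]\non(\phi_s\Until\psi_s)$ is computed as the greatest fixpoint
\[
\nu Z.\ \bigl(\non\psi_s \et (\non\phi_s \ou \Diam[A]\X Z)\bigr),
\]
that is, by the \emph{decreasing} iteration $Z_0=\Loc$, $Z_{n+1}=(\Loc\smallsetminus[\psi_s])\cap\bigl((\Loc\smallsetminus[\phi_s])\cup\CPre(A,Z_n)\bigr)$, again converging within $\size\Loc$ steps (in particular $\Diam[A]\G$ arises as the case $\phi_s=\top$, and double negations $\non\non\phi_p$ are simplified first). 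Counting the work, each of the $O(\size\phi)$ subformulas triggers at most $\size\Loc$ fixpoint iterations, each performing at most $\size\Loc$ membership queries $\ell\in\CPre(A,\cdot)$, so the total number of \Ss2-queries is $O(\size\phi\cdot\size\Loc^2)$; everything between queries (complementations, unions, intersections, and the book-keeping over the subformula DAG) is deterministic polynomial time. The procedure therefore lies in $\PTIME^{\Ss2}=\DD3$.

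The only genuinely delicate point is the treatment of negation. Because games on \CGS{}s are not determined, I cannot replace $\non\Diam[A]\phi_p$ by $\Diam[\Agt\smallsetminus A]\non\phi_p$ --- this is exactly the flaw in~\cite{JD05} discussed above. The algorithm sidesteps it by always materialising the full satisfaction set $[\cdot]$, so that state-level negation becomes a trivial complementation, and by handling the negated path modalities through the dual greatest-fixpoint computation, which captures the existence of a \emph{strategy} enforcing the dual path property rather than the existence of a counter-strategy for the opposing coalition. Verifying that these fixpoint characterisations are sound --- in particular that the greatest fixpoint above matches the semantics of $\Diam[A]\non(\phi_s\Until\psi_s)$ under memoryless strategies --- is the step requiring the most care, though it is routine once the equivalences are stated.
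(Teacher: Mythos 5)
Your proof is correct and follows essentially the same approach as the paper: the paper's own argument is precisely that the fixpoint characterisations of \ATL modalities reduce model checking on \CGSi{}s to a polynomial number of \CPre computations, each decided by a \Ss2 oracle, yielding a \DD3 labelling algorithm. The extra details you provide (set complementation for state-level negation, and the greatest-fixpoint iteration for $\Diam[A]\non(\phi\Until\psi)$) merely spell out what the paper leaves implicit.
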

Note that, since the algorithm consists in labeling the locations with the
subformulae it satisfies, that complexity holds even if we consider
the DAG-size of the formula.

To prove hardness in~$\DD3$, we introduce the following \DD3-complete
problem~\cite{fossacs2001-LMS,Schobbens-LCMAS03}.

\begin{prob}{\SNSAT[2]}
\entree{ $m$ families of variables~$X_i=\{x_i^1, ..., x_i^n\}$,
   $m$ families of variables~$Y_i=\{y_i^1, ..., y_i^n\}$,
   $m$ variables $z_i$,
   $m$ boolean formulae $\phi_i$, with $\phi_i$ involving variables in
   $X_i\cup Y_i \cup\{z_1,...,z_{i-1}\}$.}
\sortie{The value of~$z_m$, defined by \par
\(\left\{\begin{array}{r>{\ }c<{\ }l}
  z_1 &\eqdef& \exists X_1.\ \forall Y_1.\ \phi_1(X_1,Y_1) \\
  z_2 &\eqdef& \exists X_2.\ \forall Y_2.\ \phi_2(z_1,X_2,Y_2) \\
  && \ldots \\
  z_m &\eqdef& \exists X_m.\ \forall Y_m.\ \phi_m(z_1,...,z_{m-1},X_m,Y_m) \\
  \end{array}\right.\)}
\end{prob}%

And we have:

\begin{proposition}\label{implicit-hard}
Model checking \ATL on \CGSi{}s is \DD3-hard.
\end{proposition}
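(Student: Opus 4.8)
The plan is to reduce from \SNSAT[2], which is \DD3-complete, to \ATL model-checking on a \CGSi. The overall strategy mirrors the \Ss2-hardness construction from the proof of the previous proposition, but now I must chain together $m$ quantified formulas whose truth values feed into one another. The key idea is to build a \CGSi in which the truth value of each $z_i$ is encoded by the reachability of a designated ``true'' state, and to use a nested \ATL formula to successively force the evaluation of $z_1, z_2, \ldots, z_m$ in order.

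First I would introduce, for each index $i$, a location $q_i$ together with auxiliary ``sink'' locations (analogues of $q_\top$ and $q_\bot$) labelled by appropriate atomic propositions. For each $i$ I introduce agents $A_i^1,\ldots,A_i^n$ (controlling $X_i$) and $B_i^1,\ldots,B_i^n$ (controlling $Y_i$), each with two moves in $q_i$ and a single move elsewhere. As in the earlier construction, the implicit transition function out of $q_i$ is governed by $\phi_i$, but now $\phi_i$ also depends on $z_1,\ldots,z_{i-1}$; the crucial point is that the truth of each $z_j$ with $j<i$ has already been ``decided'' and can be fed in, either by wiring the dependency through the \CGSi structure or by using a fresh proposition $P_{z_j}$ that the \ATL formula establishes inductively. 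The statement $z_i \egdef \exists X_i.\forall Y_i.\phi_i$ is then captured by $\Diam[A_i^1,\ldots,A_i^n]\X (\cdots)$ exactly as in the \Ss2-hardness argument, with the existential quantifier realized by the coalition's strategy and the universal one by the opponents' co-strategy.

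The main step, and the principal obstacle, is correctly threading the dependency of $\phi_i$ on the \emph{values} of $z_1,\ldots,z_{i-1}$ rather than on free variables. I would handle this by defining, inductively on $i$, an \ATL formula $\Psi_i$ that holds at $q_i$ iff $z_i$ is true, where $\Psi_i$ substitutes the already-constructed formulas $\Psi_1,\ldots,\Psi_{i-1}$ for the occurrences of $z_1,\ldots,z_{i-1}$ inside the modality. Concretely, each atom ``$z_j$'' appearing in $\phi_i$ is replaced by a subformula asserting that from $q_i$ the coalition can reach (or the play is routed to) the corresponding $q_j$ satisfying $\Psi_j$. Care must be taken that the DAG-size of the resulting formula stays polynomial---since each $\Psi_j$ may be referenced by several later formulas, naive tree-unfolding would blow up exponentially, but sharing common subformulas (as permitted by the DAG-size convention noted after Proposition~\ref{implicit-easy}) keeps the construction polynomial. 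I would then argue by induction on $i$ that $q_i \sat \Psi_i$ iff $z_i = \top$ in the \SNSAT[2] instance, with the base case $i=1$ being exactly the \Ss2 construction and the inductive step following from the semantics of $\Diam[\cdot]\X$ together with the induction hypothesis for the substituted subformulas.

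Finally, the reduction is clearly computable in polynomial time: the \CGSi has $O(m)$ locations and $O(mn)$ agents, the implicit transition formulas are copied directly from the $\phi_i$, and the target \ATL formula $\Psi_m$ has polynomial DAG-size. Since $q_m \sat \Psi_m$ iff $z_m = \top$, and \SNSAT[2] is \DD3-complete, this establishes \DD3-hardness of \ATL model-checking on \CGSi{}s. Combined with the membership result of Proposition~\ref{implicit-easy}, this yields \DD3-completeness.
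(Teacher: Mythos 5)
Your high-level plan (reduce from \SNSAT[2], reuse the \Ss2 gadget for each $\exists X_i.\,\forall Y_i.\,\phi_i$, and thread the values of $z_1,\ldots,z_{i-1}$ into $\phi_i$) is the right one, but the central mechanism is left undefined, and the two options you offer for it do not work as stated. The formulas $\phi_i$ must live in the transition table of the \CGSi, where only boolean combinations of atoms ``$A_j=c$'' over \emph{moves} are allowed; an \ATL state formula $\Psi_j$ cannot be substituted for the atom $z_j$ there. Likewise, a ``fresh proposition $P_{z_j}$ that the \ATL formula establishes inductively'' is not meaningful: in model checking the labeling is fixed, and no formula can ``establish'' a proposition. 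What is needed --- and what the paper does --- is to make the value of $z_j$ a \emph{move} of a new agent $C_j$ (so that $C_j\stackrel?=1$ may legally appear in the transition formulas), and to add a challenger agent $D$ who can either let $\phi_i$ be evaluated or divert the play to verification states $q_j$ or $\overline{q_j}$, thereby forcing each $C_j$ to play according to the true value of $z_j$. The crux, which your proposal never touches, is how to verify a claim that $z_j$ is \emph{false}: since these games are not determined, ``the coalition has no strategy'' cannot be expressed positively, and the paper resolves this with the recursive formula $\psi_{k+1}=\Diam[\Coal](\non s)\Until(q_\top\ou\Ex\X(s\et\Ex\X\non\psi_k))$, where a challenge of a negative claim lands in $\overline{q_j}$, whose only successor is an $s$-labeled state leading to $q_j$, so that the subformula $\Ex\X(s\et\Ex\X\non\psi_k)$ checks exactly $q_j\models\non\psi_k$. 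This negation-handling is the whole content of the jump from \Ss2 to \DD3 (it is precisely the point on which the earlier \Ss2-completeness claim was wrong), and without it the reduction does not go through.

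There is also a quantitative defect in your substitution scheme. Since $\Psi_i$ would contain one copy of $\Psi_j$ for each occurrence of $z_j$ in $\phi_i$, its tree-size is exponential in general, and your appeal to DAG-sharing only yields hardness of model checking with \emph{DAG-represented} formulas --- a strictly weaker statement, since DAG-hardness does not imply hardness for the classical representation (note that the paper explicitly criticizes this very weakness in the known \DD3-hardness proof for \ATL+). The paper's formula avoids the issue entirely: challenges at all levels and against all agents $C_j$ are handled uniformly by the single subformula $\Ex\X(s\et\Ex\X\non\psi_k)$ --- whence the indexing ``for all $k\geq i$'' in its key lemma --- so that $\psi_{k+1}$ contains exactly one copy of $\psi_k$ and the final formula has linear tree-size.
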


\begin{proof}
  We pick an instance~$\calI$ of  \SNSAT[2], and reduce it to an
  instance of the \ATL model-checking problem.  Note that such an
  instance uniquely defines the values of variables~$z_i$.  We write
  $v_\calI\colon \{z_1,...,z_m\}\to \{\top,\bot\}$ for this valuation.
  Also, when $v_\calI(z_i)=\top$, there exists a witnessing valuation
  for variables in~$X_i$. We extend~$v_\calI$ to $\{z_1,...,z_m\}\cup
  \bigcup_i X_i$, with $v_\calI(x_i^j)$ being a witnessing valuation
  if $v_\calI(z_i)=\top$.

  We now define an \CGSi $\mathcal C$ as follows: it contains $mn$
  agents $A_i^j$ (one for each~$x_i^j$), $mn$ agents $B_i^j$ (one for
  each~$y_i^j$), $m$ agents~$C_i$ (one for each~$z_i$), and one extra
  agent~$D$. The~structure is made of $m$~states~$q_i$,
  $m$~states~$\overline{q_i}$, $m$~states~$s_i$, and two
  states~$q_\top$ and~$q_\bot$.  There are three atomic propositions:
  $s_\top$ and~$s_\bot$, that label the states~$q_\top$ and~$q_\bot$
  resp., and an atomic proposition~$s$ labeling states~$s_i$. The~other states
  carry no label. 

  Except for~$D$, the agents represent booleans, and thus always have
  two possible choices ($0$~and~$1$). Agent~$D$ always has~$m$~choices
  ($0$~to~$m-1$).  The~transition relation is defined as follows: for
  each~$i$,
\[
\begin{array}{c@{\qquad}c}
\begin{array}{l}
\delta(\overline{q_i}) = ((\top, s_i)); \\[1mm]
\delta(s_i) = ((\top, q_i)); \\[1mm]
\delta(q_\top) = ((\top, q_\top)); \\[1mm]
\delta(q_\bot) = ((\top, q_\bot)); 
\end{array}
&
\delta(q_i) = 
\left(
\begin{array}{l}
  ((D\stackrel?=0) \et \phi_i[x_i^j \leftarrow (A_i^j\stackrel?=1), \\
     \qquad y_i^j \leftarrow (B_i^j\stackrel?=1), 
     z_k \leftarrow (C_k\stackrel?=1)],  q_\top) \\
  ((D\stackrel?=0),  q_\bot) \\
  ((D\stackrel?=k) \et (C_k\stackrel?=1),  q_k)  \text{ for each }k<i\\
  ((D\stackrel?=k) \et (C_k\stackrel?=0),  \overline{q_k})  \text{ for each }k<i\\
  (\top, q_\top)
\end{array}
\right)
\end{array}
\]
Intuitively, from state~$q_i$, the boolean agents chose a valuation for the
variable they represent, and agent~$D$ can either choose to check if the
valuation really witnesses~$\phi_i$ (by choosing move~$0$), or
``challenge'' player~$C_k$, with move~$k<i$.

The \ATL formula is built recursively as follows: 
\begin{xalignat*}1
\psi_0 &\eqdef \top \\
\psi_{k+1} &\eqdef \Diam[\Coal] (\non s) \Until (q_\top \ou \Ex\X(s\et\Ex\X \non \psi_k))
\end{xalignat*}
where~$\Coal$ stands for the coalition~$\{A_1^1,..., A_m^n, C_1,...,C_m\}$.

Let $f_{\calI}(A)$ be the state-based strategy for agent~$A\in\Coal$ that 
consists in playing according to the valuation~$v_\calI$ (\emph{i.e.} 
move~$0$ if the variable associated with~$A$ evaluates to~$0$ in~$v_\calI$, and 
move~$1$ otherwise).
The following lemma 
 completes the proof of Proposition~\ref{implicit-hard}:
\newcounter{cptlem}
\setcounter{cptlem}{\value{lemma}}
\begin{lemma}\label{techlemma}
For any $i\leq m$ and~$k\geq i$, the following three statements are equivalent:
\begin{enumerate}[\em(a)]
\item \label{st1}
$\mathcal C, q_i\models\psi_{k}$;
\item \label{st2}
the strategies~$f_\calI$ witness the fact that $\mathcal C, q_i\models\psi_{k}$;
\item \label{st3}
variable $z_i$ evaluates to~$\top$ in~$v_\calI$.
\end{enumerate}
\end{lemma}

%
%
\begin{proof}
Clearly, \iteqref{st2} implies~\iteqref{st1}. 
We prove that \iteqref{st1} implies~\iteqref{st3}
and that \iteqref{st3} implies~\iteqref{st2} 
by induction on~$i$.\medskip

First assume that $q_1\models\psi_{j}$, for some~$j\geq 1$. Since
only $q_\top$ and~$q_\bot$ are reachable from~$q_1$, we have
$q_1\models \Diam[\Coal] \X  q_\top$. 
We are (almost) in the same case as in the \Ss2 reduction of~\cite{JD05}:
there is a valuation of the variables $x_1^1$ to~$x_1^n$ s.t., whatever
players~$D$ and~$B_1^1$ to~$B_m^n$ decide, the run will end up in~$q_\top$. 
This holds in particular if player~$D$ chooses move~$0$: for any valuation of
the variables $y_1^1$ to~$y_1^n$, $\psi_1(X_1,Y_1)$ holds true, and~$z_1$
evaluates to true in~$v_\calI$.

Secondly, if $z_1$ evaluates to true, then $v_\calI(x_1^1)$,~...,
$v_\calI(x_1^n)$ are such that, whatever the value of~$y_1^1$ to~$y_1^n$, $\psi_1$ holds
true. If players~$A_1^1$ to~$A_1^n$ play according to~$f_\calI$, then players~$D$
and~$B_1^1$ to~$B_1^n$ cannot avoid state~$q_\top$, and $q_1\models
\Diam[\Coal] \X  q_\top$, thus also~$\psi_k$
when~$k\geq 1$. 
\smallskip

We now assume the result holds up to index~$i \geq 1$, and prove that it also
holds at step~$i+1$. Assume $q_{i+1}\models\psi_{k+1}$, with $k\geq i$. 
There exists a strategy witnessing~$\psi_{k+1}$, \emph{i.e.},
s.t. all the outcomes following this strategy satisfy
$(\non s) \Until (q_\top \ou \Ex\X(s\et\Ex\X \non \psi_{k}))$. 
Depending on the move of player~$D$ in state~$q_{i+1}$, we get several informations:
 first, if player~$D$ plays move~$l$, with $1\leq l\leq i$, the play
 goes to state~$q_l$ or~$\overline{q_l}$, depending on the choice of
 player~$C_l$. 
  \begin{enumerate}[$\bullet$]
  \item if player~$C_l$ chose move~$0$, the run ends up in~$\overline{q_l}$.
  Since the only way out of that state is to enter state~$s_{l}$, labeled
  by~$s$, we get that $\overline{q_l}\models\Ex\X(s\et\Ex\X \non\psi_{k})$,
  \emph{i.e.}, that $q_l \models \non\psi_{k}$. By~i.h., we get that $z_l$
  evaluates to false in our instance of~\SNSAT[2]. 
  \item if player~$C_l$ chose move~$1$, the run goes to~$q_l$. In that state,
  players in~$\Coal$ can keep on applying their strategy, which ensures that 
  $q_l\models\psi_{k+1}$, and, by~i.h., 
  that $z_l$ evaluates to true in~$\calI$.
  \end{enumerate} 
  Thus, the strategy for~$\Coal$ to enforce~$\psi_{k+1}$ in~$q_{i+1}$
  requires players~$C_1$ to~$C_i$ to play according to~$v_\calI$ and
  the validity of these choices can be verified by the ``opponent''
  $D$.

Now, if player~$D$ chooses move~$0$, all the possible outcomes will necessarily
immediately go to~$q_\top$ (since $\psi_{k+1}$ holds, and
since~$q_\bot\not\models\Ex\X(s\et\Ex\X\non\psi_k)$). We immediately get that
players~$B_{i+1}^1$ to~$B_{i+1}^n$ cannot make~$\psi_{i+1}$ false, hence that
$z_{i+1}$ evaluates to true in~$\calI$.

Secondly, if $z_{i+1}$ evaluates to true, assume players in~$\Coal$ play
according to~$f_\calI$, and consider the possible moves of player~$D$:
\begin{enumerate}[$\bullet$]
\item if player~$D$ chooses move~$0$, since~$z_{i+1}$ evaluates to true and
  since players~$C_1$ to~$C_i$ and $A_{i+1}^1$ to~$A_{i+1}^n$ have played 
  according~$v_\calI$, there is no way for player~$B_{i+1}^1$ to~$B_{i+1}^n$ to avoid
  state~$q_\top$. 
\item if player~$D$ chooses some move~$l$ between~$1$ and~$i$, the execution 
  will go into state~$q_l$ or~$\overline{q_l}$, depending on the move
  of~$C_l$. 
 \begin{enumerate}[$-$]
 \item if $C_l$ played move~$0$, \emph{i.e.}, if $z_l$ evaluates to false
   in~$v_\calI$, the
   execution goes to state~$\overline{q_l}$, and we know by~i.h. that
   $q_l\models\non\psi_{k}$. Thus $\overline{q_l} \models 
   \Ex\X(s\et\Ex\X \non\psi_{k})$, and the strategy still fulfills the
   requirement.
 \item if $C_l$ played move~$1$, \emph{i.e.}, if $z_l$ evaluates to true, then
 the execution ends up in state~$q_l$, in which, by~i.h., the strategy~$f_\calI$ 
enforces~$\psi_{k+1}$. 
 \end{enumerate}
\item if player~$D$ plays some move~$l$ with~$l>i$, the execution goes
  directly to~$q_\top$, and the formula is fulfilled.\qed
\end{enumerate}
\end{proof}
%
%
\end{proof}

With Proposition~\ref{implicit-easy}, this implies:
\begin{theorem}
Model checking \ATL on \CGSi{}s is \DD3-complete.
\end{theorem}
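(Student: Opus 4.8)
The plan is to prove the two directions separately: membership in $\DD3$ (Proposition~\ref{implicit-easy}) and $\DD3$-hardness (Proposition~\ref{implicit-hard}), which together yield completeness. The membership direction should be the easier one. We already know from the preceding Proposition that computing $\CPre$ on \CGSi{}s is \Ss2-complete, hence decidable by a \Ss2 oracle. The key observation is that the \ATL model-checking algorithm, via the fixpoint characterization in Eq.~\eqref{eq-fixpoint}, reduces to a polynomial number of $\CPre$ computations. Concretely, I would label each location with the set of subformulae it satisfies, processing subformulae bottom-up; each modality $\Diam[A]\X$, $\Diam[A]\G$, $\Diam[A]\phi\Until\psi$ requires iterating $\CPre$ at most $\size\Loc$ times (fixpoint stabilization), and there are at most $\size\phi$ (or DAG-size) subformulae. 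Since each $\CPre$ query is in $\Ss2 = \Sigma_2^p$, a deterministic polynomial-time machine making polynomially many queries to a $\Sigma_2^p$ oracle runs in $\PP{\Sigma_2^p} = \DD3$. I would make the oracle calls explicit to confirm we stay within a single level of the hierarchy despite the nesting of negations across subformulae—the point being that the \emph{number} of calls is polynomial and each call is independently a $\Ss2$ query.

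\textbf{For hardness}, the heart of the matter is the reduction from \SNSAT[2] already sketched in the proof of Proposition~\ref{implicit-hard}. I would take an instance $\calI$ with its uniquely determined valuation $v_\calI$ of the $z_i$, construct the \CGSi $\mathcal C$ as specified (the $mn+mn+m+1$ agents, the states $q_i, \overline{q_i}, s_i, q_\top, q_\bot$, and the transition table $\delta(q_i)$ encoding each $\phi_i$ together with the ``challenge'' moves of agent~$D$), and build the nested formulae $\psi_k$. The correctness then rests entirely on the technical Lemma~\ref{techlemma}, establishing the three-way equivalence between $\mathcal C, q_i \models \psi_k$, the witnessing strategies $f_\calI$, and the truth of $z_i$ in $v_\calI$. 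Since the construction is polynomial in the size of $\calI$ and $z_m$ evaluates to $\top$ iff $\mathcal C, q_m \models \psi_m$, this gives a polynomial-time reduction from the \DD3-complete problem \SNSAT[2].

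\textbf{The main obstacle} is the proof of Lemma~\ref{techlemma}, specifically the induction step showing \iteqref{st1}$\Rightarrow$\iteqref{st3}. The subtlety, and the reason the earlier flawed \Ss2 claim fails, is that \ATL games are not determined: a coalition's inability to force $\psi$ is not the same as the opponents forcing $\non\psi$, but rather the opponents having a \emph{co-strategy}. The formula $\psi_{k+1} = \Diam[\Coal](\non s)\Until(q_\top \ou \Ex\X(s \et \Ex\X \non\psi_k))$ is engineered so that agent~$D$'s ``challenge'' moves let the opponents \emph{test} whether the $C_l$ agents played consistently with $v_\calI$: if $C_l$ deviates by claiming $z_l = \top$ when it is false (move~$1$ leading to $q_l$), the recursive requirement $q_l \models \psi_{k+1}$ fails by induction; if it claims $z_l = \bot$ falsely (move~$0$ leading through $\overline{q_l}$ and $s_l$), the nested $\Ex\X(s \et \Ex\X \non\psi_k)$ clause detects this. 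I would carefully verify that the embedded \CTL-style modalities $\Ex\X$ correctly encode the ``challenge'' mechanism—in particular that $q_\bot \not\models \Ex\X(s \et \Ex\X \non\psi_k)$, forcing the move-$0$ outcome genuinely into $q_\top$—so that the coalition $\Coal$ is compelled to commit to exactly the valuation $v_\calI$, which is precisely what reduces the strategy existence to the nested quantifier structure of \SNSAT[2].
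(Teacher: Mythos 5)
Your proposal is correct and follows essentially the same route as the paper: membership in \DD3 via the fixpoint algorithm making polynomially many \Ss2 oracle calls to \CPre (the paper's Proposition~\ref{implicit-easy}), and hardness via the \SNSAT[2] reduction with the challenge-agent construction and the three-way equivalence of Lemma~\ref{techlemma} (the paper's Proposition~\ref{implicit-hard}). Your emphasis on non-determinacy and on checking that $q_\bot\not\models\Ex\X(s\et\Ex\X\non\psi_k)$ matches exactly the points the paper's proof of the lemma relies on.
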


\subsection{Model checking \texorpdfstring{\ATL}{ATL} on \texorpdfstring{\ATS}{ATS}{}s.}\label{ATLsurATS}

For \ATS{}s also, computing~$\CPre$ (and~thus model-checking~\ATL)
cannot be achieved in~\PTIME.  A~direct corollary
of~\cite[Lemma~$4$]{JD05} is: 

\begin{proposition}
Computing~$\CPre$ in \ATS{}s  is \NP-complete.
\end{proposition}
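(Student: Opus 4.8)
The plan is to prove the two inclusions separately: hardness is inherited from the literature, while membership rests on a short combinatorial argument specific to \ATS{}s. For \NP-hardness I would simply invoke the construction of~\cite[Lemma~$4$]{JD05}: it encodes a \textsc{Sat} instance into an \ATS in which one coalition agent is attached to each propositional variable ---its chosen move fixing the truth value of that variable--- while the opponents play the role of a ``clause checker''. The coalition can force the successor into the designated set~$S$ exactly when the guessed valuation satisfies the formula, so that $\ell\in\CPre(A,S)$ holds iff the formula is satisfiable; since this is precisely the reduction of~\cite{JD05}, hardness comes for free, and the only thing left to establish is membership.

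For membership, note that a naive reading of ``$\ell\in\CPre(A,S)$'' is a $\Ss2$ statement ($\exists$ a coalition move, $\forall$ an opponent move, the successor lies in~$S$), and the opponents have a number of joint moves that is exponential in~$\size\Agt$; the real work is therefore to show that this inner universal quantification collapses to a polynomial test. First I would guess, for each agent $a\in A$, an index of a move $Q_a\in\Chc(\ell,a)$; this certificate has size polynomial in~$\size{\mathcal A}$. It then remains to verify deterministically, in polynomial time, that $\Next(\ell,A,m_A)\subseteq S$ for $m_A=(Q_a)_{a\in A}$. The key step is an exact description of this successor set. Writing $Q_A\egdef\bigcap_{a\in A}Q_a$, I claim that
\[
\Next(\ell,A,m_A)=\bigl\{\,t\in Q_A \,\bigm|\, \forall b\in\Agt\setminus A.\ \exists Q_b\in\Chc(\ell,b).\ t\in Q_b\,\bigr\}.
\]
The right-to-left inclusion is where the defining property of \ATS{}s enters: if $t\in Q_A$ and each opponent~$b$ owns a move $Q_b\ni t$, then the \emph{complete} tuple built from~$m_A$ together with these~$Q_b$ has an intersection that contains~$t$ and, being a singleton, equals~$\{t\}$; hence the opponents can force~$t$. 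Conversely, any location reachable from~$m_A$ lies in every agent's selected move and thus belongs to the right-hand side. Because the opponents are thereby \emph{decoupled} (each may aim at a target~$t$ independently), $\Next(\ell,A,m_A)\subseteq S$ holds iff there is no location $t\in Q_A\setminus S$ such that every opponent has a move containing~$t$ ---a condition tested by a double loop over the (polynomially many) locations and opponents. This makes the verification polynomial, so the whole procedure is in~\NP.

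The hard part is exactly this last observation. Without the singleton-intersection constraint of \ATS{}s, the inner~$\forall$ would genuinely range over exponentially many joint opponent moves and one would only obtain a~$\Ss2$ upper bound; it is the singleton property that allows a single move per opponent to ``witness'' each target location, dropping the complexity to~\NP. Combined with the \NP-hardness of~\cite{JD05}, this yields \NP-completeness.
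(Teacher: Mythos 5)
Your proof is correct, and its overall shape is the same as the paper's: \NP-membership by guessing one move per coalition agent and verifying deterministically in polynomial time, and \NP-hardness by invoking \cite[Lemma~4]{JD05} (the paper also spells out its own reduction from $3$SAT, but only because that construction is reused as a stepping stone towards the \DD2-hardness of full \ATL model checking; it too credits the hardness itself to~\cite{JD05}, so deferring to the citation loses nothing). The genuine difference lies in the verification step. The paper's algorithm accepts as soon as $\bigcap_{a\in A} m_a \subseteq S$, i.e.\ it treats the coalition intersection itself as the set of possible successors and never looks at the opponents at all. You instead prove the exact identity
$\Next(\ell,A,m_A)=\bigl\{t\in\bigcap_{a\in A}m_a \bigm| \forall b\in\Agt\setminus A.\ \exists Q_b\in\Chc(\ell,b).\ t\in Q_b\bigr\}$
and test that no $t\notin S$ survives this filter. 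With respect to this paper's own definition of $\Next(\ell,A,m)$ (the successors \emph{realizable} by some opponent completion), your test is the exact one: the two criteria differ precisely on ``phantom'' locations that lie in the coalition intersection but are contained in no move of some opponent, and on such degenerate \ATS{}s the paper's containment check is sufficient but not necessary, whereas it coincides with yours under the convention that the outcomes of a coalition move range over the whole intersection. Both tests are polynomial, so both arguments deliver the \NP upper bound; what your version adds is the decoupling lemma making explicit why the singleton-intersection requirement collapses the universal quantification over exponentially many joint opponent moves (the point you rightly identify as where a naive reading would only give \Ss2), a point the paper's simpler test sidesteps rather than addresses.
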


\begin{proof}
Algorithm~\ref{alg-cpre-ats} shows how to compute $\CPre$
in \NP\  in \ATS{}s: it amounts to guessing a move for each player in the
coalition, and to check whether the resulting possible next states are all in~$S$.

\begin{algorithm}[!ht]
\SetVline
\SetKwComment{Comment}{//}{}
\textbf{Procedure} \texttt{CPre}($A$, $S$)
\Begin{
	$W\leftarrow \varnothing$\;
	\ForEach{$q \in \calC$}
	{
		\ForEach{$a \in  A$}
		{
			\Comment{Guess a move for player $a$ from a state $q$}
			$m_a \leftarrow guess(q, a)$\;
		}
		\If{$\bigcap\limits_{a\in A} m_a \subseteq S$}
		{$W\leftarrow W\cup\{q\}$\;}

	}
	\Return{$W$}\;
}
\caption{Computing $\CPre$\/ for \ATS}\label{alg-cpre-ats}
\end{algorithm}

Again, \NP-hardness follows from~\cite[Lemma~$4$]{JD05}. We propose
here a slightly different proof, that will be a first step to the
\DD2-hardness proof below.

The proof is a direct reduction from $3$SAT: let $\calI=(S^1, ...,
S^n)$ be an instance of~$3$SAT over variables~$X = \{x^1,...,
x^m\}$. We assume that $S^j = \alpha^{j,1} s^{j,1} \ou \alpha^{j,2}
s^{j,2} \ou \alpha^{j,3} s^{j,3}$ where $s^{j,k} \in X$ and
$\alpha^{j,k} \in \{0,1\}$ indicates whether variable~$s^{j,k}$ is
taken negatively~($0$) or positively~($1$). We assume without loss of
generality that no clauses contain both one proposition and its
negation.

With such an instance, we associate the following \ATS~$\calA$. 
It contains $8n+1$~states: one
state~$q$, and, for each clause~$S^j$, eight states $q^{j,0}$ to~$q^{j,7}$.
Intuitively, the state~$q^{j,k}$ corresponds to a clause~$B^{j,k} = k_1 s^{j,1} \ou 
k_2 s^{j,2} \ou k_3 s^{j,3}$, where $k_1k_2k_3$ corresponds to the binary
notation for~$k$. There is only one atomic proposition~$\alpha$ in
our~\ATS: a~state~$q^{j,k}$ is labeled with~$\alpha$ iff it does not correspond to 
clause~$S^j$. By~construction, for each~$j$, only one of the states~$q^{j,0}$
to~$q^{j,7}$ is not labeled with~$\alpha$. 

There are~$m+1$ players, where~$m$~is the number of variables that appear
in~$\calI$. 
With each~$x^i$ is associated a player~$A^i$. The~extra player is named~$D$. 
Only the transitions from~$q$ are relevant for this reduction. We~may assume
that the other states only carry a self-loop.
In~$q$, player~$A^i$ decides the value of~$x^i$. She~can
thus choose between two sets of next states, namely the states corresponding
to clauses that are not made true by her choice: 
\begin{xalignat*}2
\{ q^{j,k} \ \mid\ \forall l\leq 3.\ s^{j,l}\not =x^i \ \text{ or }\ \alpha^{i,l}=0\}
  & \qquad\text{ if }x^i=\top \\
\{ q^{j,k} \ \mid\ \forall l\leq 3.\ s^{j,l}\not =x^i \ \text{ or }\ \alpha^{i,l}=1\}
  & \qquad \text{ if }x^i=\bot 
\end{xalignat*}
Last, player~$D$ has~$n$ choices, namely $\{q^{1,0},...,q^{1,7}\}$
to~$\{q^{n,0},...,q^{n,7}\}$. 

We first prove the singleton requirement for \ATS{}s' transitions: the
intersections of the choices of the agents must be a singleton. Once
players~$A^1$ to~$A^m$ have chosen their moves, all the variables
have been assigned a value. Under that valuation, for each~$j\leq n$, exactly
one clause among $B^{j,0}$ to~$B^{j,7}$ evaluates to false (thanks to our
requirement that a literal cannot appear together with its negation in the same
clause). Intersecting with the choice of player~$D$, we end up with one single
state (corresponding to the only clause, among those chosen by~$D$, that evaluates
to false).

Now, let~$\phi=\Diam[A^1,...,A^m] \X\alpha$. That~$q\models\phi$
indicates that players~$A^1$ to~$A^m$ can choose a valuation for~$x^1$
to~$x^m$ s.t.~player~$D$ will not be able to find a clause of the original
instance (\emph{i.e.},~not labeled with~$\alpha$) that evaluates to false
(\emph{i.e.}, that is not made true by any of the choices of the players~$A^1$
to~$A^m$). In that case, the instance is satisfiable.
Conversely, if the instance is satisfiable, it suffices for the players~$A^1$
to~$A^m$ to play according to a satisfying valuation of variables~$x^1$ to~$x^m$. 
Since this valuation makes all the original clauses true, it yields a strategy
that only leads to states labeled with~$\alpha$.
\end{proof}

As in the case of \CGSi{}s, we combine 
the fixpoint expressions of \ATL modalities together with the \NP\ algorithm for
computing~$\CPre$.
This yield a \DD2 algorithm for full~\ATL:
\begin{proposition}\label{ats-easy}
Model checking \ATL over \ATS{}s is in~\DD2.
\end{proposition}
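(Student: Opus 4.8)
The plan is to run the standard bottom-up labelling procedure used for \CTL model-checking, calling the \NP\ algorithm for $\CPre$ (the preceding proposition) as a subroutine. I would process the subformulas of the input formula from the innermost outwards, computing for each state subformula $\phi$ the set $S_\phi\subseteq\Loc$ of locations satisfying it. The boolean cases ($\top$, atomic propositions, $\non$ and $\ou$) are obtained deterministically in polynomial time from the sets already computed. The only interesting cases are the strategic modalities $\Diam[A]\phi_p$. For the ``Next'' operator a single application of $\CPre$ suffices, $S_{\Diam[A]\X\phi}=\CPre(A,S_\phi)$; for ``Until'' I would use the fixpoint of Eq.~\eqref{eq-fixpoint}, namely $S_{\Diam[A]\phi\Until\psi}=\mu Z.\,\bigl(S_\psi\cup(S_\phi\cap\CPre(A,Z))\bigr)$, evaluated by the usual monotone iteration from $\varnothing$. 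The correctness of these $\CPre$-based fixpoints rests on the fact, recalled above, that memoryless strategies suffice for \ATL.

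I would then dispose of the negated path formulas permitted by the grammar. Double negations cancel, and $\Diam[A]\non\X\phi\equiv\Diam[A]\X\non\phi$ reduces to the ``Next'' case. The remaining case $\Diam[A]\non(\phi\Until\psi)$ is the dual (``Release'') modality, which I would compute directly by the corresponding greatest fixpoint $S=\nu Z.\,\bigl((\Loc\setminus S_\psi)\cap((\Loc\setminus S_\phi)\cup\CPre(A,Z))\bigr)$, evaluated by monotone iteration from $\Loc$.

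It then remains to bound the resources. Each fixpoint is monotone over the lattice $2^\Loc$ and hence stabilises after at most $\size\Loc$ iterations; each iteration evaluates $\CPre(A,Z)$ once, \emph{i.e.}\ decides $\ell\in\CPre(A,Z)$ for the $\size\Loc$ locations~$\ell$. By the preceding proposition, each such membership test is a single query to an \NP\ oracle. Labelling one modality therefore costs $O(\size\Loc^2)$ oracle queries together with polynomial deterministic bookkeeping, and since the formula has at most $\size\phi$ subformulas the whole run makes $O(\size\phi\cdot\size\Loc^2)$ oracle queries interleaved with polynomial-time computation. This is a polynomial-time computation with an \NP\ oracle, hence the problem lies in~$\DD2$. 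Because the procedure merely labels locations, the same bound holds even for the DAG-size of the formula.

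The main obstacle to watch is the treatment of negation under $\Diam[A]$: since the games played on \ATS{}s are not determined (Fig.~\ref{fig-exATS}), $\Diam[A]\non(\phi\Until\psi)$ is in general \emph{not} the complement of $\Diam[A](\phi\Until\psi)$, so the dual modality must be evaluated by the genuine greatest fixpoint above rather than by complementing the ``Until'' set. Everything else is routine.
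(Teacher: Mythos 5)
Your proposal is correct and follows essentially the same route as the paper: the paper's proof also combines the fixpoint characterization of the \ATL modalities (Eq.~\eqref{eq-fixpoint}) with the \NP\ algorithm for $\CPre$, so that a polynomial number of \NP-oracle queries suffices, yielding membership in~\DD2. Your explicit treatment of the negated path formulas (the ``Release'' case via the greatest fixpoint) and the explicit count of oracle queries merely spell out details the paper leaves implicit in its phrase ``full \ATL''.
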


This turns out to be optimal:
\begin{proposition}\label{ats-hard}
Model checking \ATL on \ATS{}s is \DD2-hard.
\end{proposition}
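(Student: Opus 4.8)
The plan is to mimic the $\DD3$-hardness proof for \CGSi{}s (Proposition~\ref{implicit-hard}), but reduce from the $\DD2$-complete problem \SNSAT{} (the single-quantifier-alternation version of \SNSAT[2]) instead of from \SNSAT[2]. Recall that \SNSAT{} asks, given families $X_i$ and formulae $\phi_i$ with $z_i \eqdef \exists X_i.\ \phi_i(z_1,\ldots,z_{i-1},X_i)$, for the value of $z_m$. The drop from two quantifier alternations ($\exists X_i.\ \forall Y_i$) to one ($\exists X_i$) matches exactly the drop from the \Ss2-hardness of $\CPre$ on \CGSi{}s to the \NP-hardness of $\CPre$ on \ATS{}s that we have just established: the \ATS{} reduction from $3$SAT realizes a purely existential $\CPre$ (the coalition guesses a valuation and wins iff it satisfies all clauses), with the opponent player $D$ serving only to \emph{check} the guess, clause by clause.

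First I would reuse the overall skeleton of the \CGSi{} construction: one state $q_i$ per variable $z_i$, auxiliary states $\overline{q_i}$ and $s_i$ to encode negation of the recursively-defined subformula, absorbing states $q_\top$ and $q_\bot$, boolean agents $A_i^j$ (for the $x_i^j$) and $C_i$ (for the $z_i$), plus a checker agent $D$. The key change is at the transitions out of $q_i$: since there are no $Y_i$-variables and no $\forall$-quantifier, the successful-check branch should encode only the purely existential constraint $\phi_i[x_i^j \leftarrow (A_i^j \stackrel?= 1),\ z_k \leftarrow (C_k \stackrel?= 1)]$. In the \ATS{} setting this must be realized not by a boolean guard on a transition but by the \emph{singleton-intersection} mechanism of \ATS{}s, exactly as in the $3$SAT reduction above: the $A_i^j$ and $C_k$ agents choose a valuation (restricting the set of reachable successors to those clause-states not yet satisfied), and agent $D$ plays the role of selecting which clause of $\phi_i$ to test. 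I would then take the same recursively-defined formula family $\psi_{k+1} \eqdef \Diam[\Coal] (\non s) \Until (q_\top \ou \Ex\X(s \et \Ex\X \non\psi_k))$ with $\Coal = \{A_i^j\} \cup \{C_i\}$, and prove the analogue of Lemma~\ref{techlemma}: that $\mathcal A, q_i \models \psi_k$ (for $k \geq i$) iff $z_i$ evaluates to $\top$, with the canonical strategy $f_\calI$ as witness.

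The inductive argument transfers almost verbatim. In the base case, $q_1 \models \psi_j$ reduces to $q_1 \models \Diam[\Coal]\X q_\top$, which by the $3$SAT-style singleton analysis holds iff the $A_1^j$ can pick a valuation satisfying $\phi_1$ (there being no universally quantified $Y_1$ to defeat), i.e.\ iff $z_1 = \top$. In the inductive step, when $D$ challenges a previously-decided variable $z_l$ ($l \leq i$) the play is routed to $q_l$ or $\overline{q_l}$ according to $C_l$'s choice, and the $\Ex\X(s \et \Ex\X \non\psi_k)$ disjunct together with the induction hypothesis forces $C_l$ to announce $v_\calI(z_l)$ faithfully; when $D$ chooses the ``check'' move, the outcome is driven into $q_\top$ exactly when the announced valuation satisfies $\phi_{i+1}$. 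The two implications (statement~\iteqref{st1} $\Rightarrow$ \iteqref{st3} and \iteqref{st3} $\Rightarrow$ \iteqref{st2}) are proved as before by induction on $i$.

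The main obstacle I anticipate is engineering the $q_i$-transitions so that the $3$SAT singleton trick and the $z_k$-challenge mechanism coexist cleanly within a single \ATS{}, since \ATS{}s express the move structure only through intersections of agent move-sets rather than through the explicit guarded sequences available in a \CGSi{}. In particular I must ensure that the singleton-intersection requirement of Definition~\ref{def-ats} holds at every state regardless of how the $A_i^j$, $C_k$ and $D$ agents combine their moves, and that $D$'s choice between ``check $\phi_{i+1}$'' and ``challenge $z_l$'' is faithfully reflected in the reachable-successor sets; this is where the bookkeeping of the $3$SAT encoding (eight clause-states $q^{j,k}$ per clause, one unlabeled ``false'' clause per valuation) must be blended with the layered $q_i/\overline{q_i}/s_i$ structure. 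Once the transition relation is shown to satisfy the \ATS{} constraints and to realize the intended $\CPre$ behaviour, the correctness proof is routine. Together with Proposition~\ref{ats-easy} this yields \DD2-completeness.
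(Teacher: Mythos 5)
Your high-level plan is exactly the paper's: reduce from \SNSAT, keep the layered $q_r/\overline{q_r}/s_r$ challenge structure, the coalition $\Coal$ of all boolean players against the checker~$D$, the recursive formula family, and the three-way lemma proved by induction, with the existential check realized through the $3$SAT singleton-intersection trick. However, as written the proposal contains a genuine inconsistency that would make it fail: you keep the absorbing states $q_\top$, $q_\bot$ and the disjunct $q_\top$ inside $\psi_{k+1}$, while simultaneously realizing the check through clause-states and intersections. In an \ATS{} these two mechanisms cannot coexist: the successor is the intersection of the agents' move sets, and with only two designated target states $q_\top,q_\bot$ an intersection can only compute a function that depends on a \emph{single} agent (if two agents could each force different singletons, some combination of their moves would have empty intersection, violating Definition~\ref{def-ats}). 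So there is no way to ``drive the outcome into $q_\top$ exactly when the announced valuation satisfies $\phi_{r+1}$''; the play always lands in some clause-state $q_r^{j,k}$, namely the one encoding the unique sub-clause falsified by the announced valuation, and success of the check is detectable only through the \emph{labelling}: that state carries $\alpha$ iff it does not correspond to an actual clause of~$\phi_{r+1}$. Accordingly, the paper's construction has no $q_\top,q_\bot$ at all, and its formula is $\psi_{r+1} \eqdef \Diam[\Coal](\non s)\Until(\alpha \ou \Ex\X(s\et\Ex\X\non\psi_r))$ --- with $\alpha$ in place of your $q_\top$. With your formula as stated, every $\psi_k$ (for $k\geq 1$) would simply be false at every $q_r$ of the structure you describe.

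Second, what you defer as ``the main obstacle'' is precisely the technical core of the reduction, and it is resolved by a specific choice of move sets, not by routine bookkeeping. From $q_r$, each player $A_r^j$'s two moves must consist of the clause-states not made true by the corresponding literal value \emph{together with all pairs} $\{q_t,\overline{q_t} \mid t<r\}$; each player $C_t$'s two moves must consist of its clause-state exclusions together with all pairs $\{q_u,\overline{q_u} \mid u\neq t\}$ \emph{plus exactly one} of $q_t$ or $\overline{q_t}$, matching its announced value of $z_t$; and $D$ chooses either a challenge set $\{q_t,\overline{q_t}\}$ ($t<r$) or a check set $\{q_r^{j,0},\ldots,q_r^{j,7}\}$. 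Including the challenge pairs in every boolean player's moves is what keeps all intersections singletons, and $C_t$'s asymmetric inclusion of $q_t$ versus $\overline{q_t}$ is what makes $D$'s challenge intersect to the singleton recording $C_t$'s announcement. Once these move sets are written down and the singleton property is verified (separately for the clause-states and for the $q_t/\overline{q_t}$ part), your inductive argument does go through essentially as in Lemma~\ref{techlemma}; without them, and with the $q_\top$ issue above, the proposal is a plan rather than a proof.
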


\begin{proof}
The proof is by a reduction of the 
\DD2-complete problem~\SNSAT~\cite{fossacs2001-LMS}:
\begin{prob}{\SNSAT}
\entree{ $p$ families of variables~$X_r=\{x_r^1, ..., x_r^m\}$,
   $p$ variables $z_r$,
   $p$ boolean formulae $\phi_r$ in $3$-CNF, with $\phi_r$ 
      involving variables in 
   $X_r \cup\{z_1,...,z_{r-1}\}$.}
\sortie{The value of~$z_p$, defined by \par
\(\left\{\begin{array}{r>{\ }c<{\ }l}
  z_1 &\eqdef& \exists X_1.\ \phi_1(X_1) \\
  z_2 &\eqdef& \exists X_2.\ \phi_2(z_1,X_2) \\
  z_3 &\eqdef& \exists X_3.\ \phi_3(z_1,,z_2,X_3) \\
  && \ldots \\
  z_p &\eqdef& \exists X_p.\ \phi_p(z_1,...,z_{p-1},X_p)
  \end{array}\right.\)}
\end{prob}

Let $\calI$ be an instance of \SNSAT, where we assume that each~$\phi_r$ is
made of~$n$ clauses $S_r^1$ 
to~$S_r^n$, with $S_r^j=\alpha_r^{j,1} s_r^{j,1} \ou
\alpha_r^{j,2} s_r^{j,2} \ou \alpha_r^{j,3} s_r^{j,3}$.
Again, such an instance uniquely defines a valuation $v_\calI$ for
variables~$z_1$ to~$z_r$, that can be extended to the whole set of variables
by choosing a witnessing valuation for $x_r^1$ to~$x_r^n$ when $z_r$ evaluates
to true.

We now describe the \ATS~$\mathcal A$: it contains $(8n+3)p$ states:
\begin{enumerate}[$\bullet$]
\item $p$ states~$\overline{q_r}$ and $p$ states~$q_r$,
\item $p$ states~$s_r$,
\item and for each formula~$\phi_r$, for each clause $S_r^j$ of~$\phi_r$,
eight states $q_r^{j,0},...,q_r^{j,7}$, as in the previous reduction.
\end{enumerate}

\noindent States~$s_r$ are labelled with the atomic proposition~$s$, and
states~$q_r^{j,k}$ that do not correspond to clause~$S_r^j$ are labeled
with~$\alpha$. 

There is one player~$A_r^j$ for each variable~$x_r^j$, one player~$C_r$
for each~$z_r$, plus one extra player~$D$. As regards transitions, there are
self-loops on each state~$q_r^{j,k}$, single transitions from
each~$\overline{q_r}$ to the corresponding~$s_r$, and from each $s_r$ to the
corresponding~$q_r$. 
From state~$q_r$,
\begin{enumerate}[$\bullet$]
\item player~$A_r^j$ will choose
the value of variable~$x_r^j$, by selecting one of the following two sets of
states:
\begin{xalignat*}{1}
\{ q_r^{g,k} \ \mid\ \forall l\leq 3.\ s_r^{g,l}\not =x_r^j \ \text{ or }\ \alpha_r^{g,l}=0\}
\cup \{q_t, \overline{q_t} \ \mid\ t<r\}
   &\qquad\text{ if }x_r^j=\top  \\
\{ q_r^{g,k} \ \mid\ \forall l\leq 3.\ s_r^{g,l}\not =x_r^j \
  \text{ or }\ \alpha_r^{g,l}=1\}
\cup \{q_t, \overline{q_t} \ \mid\ t<r\}
 &  \qquad \text{ if }x_r^j=\bot
\end{xalignat*}
Both choices also allow to go to one
of the states~$q_t$ or~$\overline{q_t}$. In~$q_r$, players~$A_t^j$ with
$t\not=r$ have one single choice, which is the whole set of states.

\item player~$C_t$ also chooses for the value of the variable it represents.
  As for players~$A_r^j$, this choice will be expressed by choosing between
  two sets of states corresponding to clauses that are not made true. But
  as in the proof of Prop.~\ref{implicit-hard}, players~$C_t$ will also offer
  the possibility to ``verify'' their choice, by going either to state~$q_t$
  or~$\overline{q_t}$. Formally, this yields two sets of states:
\begin{xalignat*}{1}
\{ q_r^{g,k} \,\mid\,\forall l\leq 3.\ s_r^{g,l}\not =z_t \ \text{ or }\ \alpha_r^{g,l}=0\}
\cup \{q_u, \overline{q_u} \,\mid\,u\not=t\}\cup \{q_t\}
& \quad\text{ if }z_t=\top
\\
\{ q_r^{g,k} \,\mid\,\forall l\leq 3.\ s_r^{g,l}\not =z_t \ \text{ or }\
 \alpha_r^{g,l}=1\}
\cup \{q_u, \overline{q_u} \,\mid\,u\not=t\} \cup\{\overline{q_t}\}
& \quad \text{ if }z_t=\bot 
\end{xalignat*}

\item Last, player~$D$ chooses either to challenge a player~$C_t$, with~$t<r$,
  by choosing the set $\{q_t, \overline{q_t}\}$, or to check that a
  clause~$S_r^j$ is fulfilled, by choosing $\{q_r^{j,0},...,q_r^{j,7}\}$.
\end{enumerate}

\noindent Let us first prove that any choices of all the players yields exactly one
state. It is obvious except for states~$q_r$. 
For a state~$q_r$, let us 
first restrict to the choices of all the players~$A_r^j$ and~$C_r$, then: 
\begin{enumerate}[$\bullet$]
\item if we only consider states~$q_r^{1,0}$ to~$q_r^{n,7}$, the same argument as in
the previous proof ensures that precisely on state per clause is chosen,
\item if we consider states~$q_t$ and~$\overline{q_t}$, the choices of
  players~$B_t$ ensure that exactly one state has been chosen in each
  pair~$\{q_t,\overline{q_t}\}$, for each~$t<r$. 
\end{enumerate}
Clearly, the choice of player~$D$ will select exactly one of the remaining 
states.\smallskip

Now, we build the \ATL formula. 
It is a recursive formula (very similar to the
one used in the proof of Prop.~\ref{implicit-hard}),
defined by~$\psi_0=\top$ and (again writing~$\Coal$ for the set of
players~$\{A_1^1,...,A_p^m,C_1,...,C_p\}$): 
\[
\psi_{r+1}  \eqdef  \Diam[\Coal] (\non s) \Until 
  (\alpha \ou 
\Ex\X(s\et\Ex\X \non \psi_r)).
\]
Then, writing~$f_\calI$ for the state-based strategy associated to~$v_\calI$:
\newcounter{lemmaD}
\setcounter{lemmaD}{\value{lemma}}
\begin{lemma}\label{lemmaD}
For any $r\leq p$ and~$t\geq r$, the following statements are equivalent:
\begin{enumerate}[(a)]
\item\label{sta}  $q_r\models\psi_{t}$;
\item \label{stb} the strategies~$f_\calI$ witness the fact that~$q_r\models\psi_{t}$;
\item \label{stc} variable $z_r$ evaluates to true in~$v_\calI$.
\end{enumerate}
\end{lemma}
%
%
\begin{proof}
We prove by induction on~$r$ that \iteqref{sta} implies~\iteqref{stc} and that \iteqref{stc}
implies~\iteqref{stb}, the last implication being obvious.
For~$r=1$, since no~$s$-state is reachable, it
amounts to the previous proof of \NP-hardness.

Assume the result holds up to index~$r$. Then, if $q_{r+1}\models\psi_{t+1}$ 
for some~$t\geq r$, we pick a strategy
for coalition~$\Coal$ witnessing this property. Again, we consider the
different possible choices available to player~$D$: 
\begin{enumerate}[$\bullet$]
\item if player~$D$ chooses to go to one of~$q_u$ and~$\overline{q_u}$,
  with~$u<r+1$: the execution ends up in~$q_u$ if player~$C_u$ chose to
  set~$z_u$ to true. But in that case, 
  formula~$\psi_{t+1}$ still holds in~$q_u$, which yields by~i.h. that $z_u$
  really evaluates to true in~$v_\calI$. Conversely, the execution ends up
  in~$\overline{q_u}$ if player~$C_u$ set~$z_u$ to false. In that case, we get
  that~$q_u\models\non \psi_{t}$, with $t\geq u$, which entails by~i.h. 
  that $z_u$ evaluates to false. 

  This first case entails that player~$C_1$ to~$C_r$ chose the correct value
  for variables~$z_1$ to~$z_r$. 
\item if player~$D$ chooses a set of eight states corresponding to a
  clause~$S_{r+1}^j$, then the strategy of other players ensures that the
  execution will reach a state labeled 
  with~$\alpha$. As in the previous reduction, this indicates that the
  corresponding clause has been made true by the choices of the other players.
\end{enumerate}
Putting all together, this proves that variable~$z_{r+1}$ evaluates to
true.

Now, if variable~$z_{r+1}$ evaluates to true, 
Assume the players in~$\Coal$ play according to valuation~$f_\calI$.
Then 
\begin{enumerate}[$\bullet$]
\item if player~$D$ chooses to go to a set of states that correspond to a
  clause of~$\phi_{r+1}$, he will necessarily end up in a state that is labeled
  with~$\alpha$, since the clause is made true by the valuation we selected.
\item if player~$D$ chooses to go to one of~$q_u$ or~$\overline{q_u}$, for
  some~$u$, then he will challenge player~$B_u$ to prove that his choice was
  correct. By~i.h., and since player~$B_u$ played according to~$f_\calI$,
  formula~$(\non s) \Until(\alpha \ou
  \Ex\X(s\et\Ex\X\non\psi_{t+1}))$ will be satisfied, for any~$t\geq u$.\qed
\end{enumerate}
\end{proof}
\end{proof}

%
%

We end up with the precise complexity of \ATL model-checking on \ATS{}s: 
\begin{theorem}
Model checking \ATL on \ATS{}s is \DD2-complete.
\end{theorem}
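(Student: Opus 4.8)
The plan is to obtain \DD2-completeness as the conjunction of a membership result and a matching hardness result, exactly the two halves recorded in Propositions~\ref{ats-easy} and~\ref{ats-hard}; so I would prove each bound separately and then chain them.

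For the upper bound, I would run the standard \CTL-style bottom-up labelling procedure, relying on the fixpoint characterisation~\eqref{eq-fixpoint} of the \ATL modalities. The only ingredient that goes beyond \CTL is the pre-image modality $\Diam[A]\X$, which is exactly the controllable-predecessor operator $\CPre(A,\cdot)$. Evaluating an \ATL formula of size~$l$ over an \ATS reduces to a polynomial number of membership tests ``$\ell \in \CPre(A,S)$'' (each fixpoint in~\eqref{eq-fixpoint} stabilises after at most $\size\Loc$ iterations), and every such test is resolved by the \NP\ procedure of Algorithm~\ref{alg-cpre-ats}. Since these queries are issued in polynomial time and their answers drive the labelling, the whole computation is a deterministic polynomial-time computation with an \NP\ oracle, hence lies in~\DD2.

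For the lower bound, I would reduce from the \DD2-complete problem~\SNSAT. The construction builds an \ATS with one boolean agent per $x$-variable, one agent $C_r$ per $z$-variable, and a single ``challenger'' $D$, together with the recursive formula $\psi_{r+1}=\Diam[\Coal](\non s)\Until(\alpha\ou\Ex\X(s\et\Ex\X\non\psi_r))$, whose nesting lets $D$ either verify a clause of~$\phi_{r+1}$ directly or descend to a lower-index pair $q_u,\overline{q_u}$ to test the value claimed for $z_u$. Correctness hinges on Lemma~\ref{lemmaD}, proved by induction on~$r$, identifying $q_r\models\psi_t$ (for $t\geq r$) with $v_\calI(z_r)=\top$.

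I expect the only real difficulty to be on the hardness side, and it is twofold. First there is the \emph{singleton-intersection} constraint of Definition~\ref{def-ats}: once the boolean agents fix a valuation, within each block of eight states exactly one corresponds to a clause that evaluates to false --- this is precisely where the $3$-CNF shape and the assumption that no literal occurs together with its negation are used --- and player~$D$'s single choice then isolates that state, an $\alpha$-labelled outcome meaning that $D$ failed to exhibit a violated clause. Second, and more delicate, is the induction for the recursive formula: because games on \ATS{}s are not determined, the failure of a coalition to have a strategy does not hand a dual strategy to the opponents, so I would have to track carefully that the witnessing strategy $f_\calI$ answers every challenge of~$D$, and conversely that any witnessing strategy forces agents $C_1,\dots,C_r$ to follow $v_\calI$. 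With both bounds in hand, combining Propositions~\ref{ats-easy} and~\ref{ats-hard} yields \DD2-completeness.
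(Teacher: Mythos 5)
Your proposal is correct and follows essentially the same route as the paper: the upper bound by combining the fixpoint labelling algorithm~\eqref{eq-fixpoint} with the \NP{} procedure for $\CPre$ (Proposition~\ref{ats-easy}), and the lower bound by the \SNSAT{} reduction with boolean agents, the checker players~$C_r$, the challenger~$D$, and the recursive formula~$\psi_r$ whose correctness is Lemma~\ref{lemmaD} (Proposition~\ref{ats-hard}). The two difficulties you identify --- the singleton-intersection requirement and the care needed in the induction because the games are not determined --- are precisely the points the paper's proof handles explicitly.
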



\subsection{Beyond \texorpdfstring{\ATL}{ATL}}

As for classical branching-time temporal logics, we can consider
several extensions of \ATL by allowing more possibilities in the way
of combining quantifiers over strategies and temporal modalities. We
define \ATL*~\cite{jacm49(5)-AHK} as follows:

\begin{definition}
   The syntax of \ATL* is defined by the following grammar:
  \begin{eqnarray*}
    \ATL*\,\ni\varphi_s,\psi_s& ::=& \top \mid P \mid\lnot\varphi_s \mid
    \varphi_s\lor\psi_s\mid\Diam[A]\varphi_p  \\
    \varphi_p, \psi_p& ::= &  \varphi_s \mid \non \varphi_p \mid \varphi_p \ou \psi_p \mid  \X\varphi_p \mid
    \varphi_p \Until\psi_p 
  \end{eqnarray*}
  where $P$ and~$A$ range over $\AP$ and $2^{\Agt}$, resp. 
\end{definition}
The size and DAG-size of an \ATL* formula are defined in the same 
way as for~\ATL. 
\ATL* formulae are interpreted over states of a game
structure~$\calS$, the semantics of the main modalities is as follows 
(if~$\rho=\ell_0\,\ell_1\,\ldots$, we write~$\rho^i$ for the~$i+1$-st 
suffix, starting from~$\ell_i$):
\begin{xalignat*}1 \ell \sat_{\calS} \Diam[A] \phi_p &
  \qquad\mbox{iff}\qquad
  \exists F_A\in\Strat(A).\ \forall \rho\in\Out(\ell,F_A).\  \rho \sat_{\calS} \phi_p, \\
  \rho  \sat_{\calS} \varphi_s  & \qquad\mbox{iff}\qquad \rho[0] \sat_{\calS} \varphi_s  \\
  \rho \sat_{\calS}  \X \phi_p & \qquad\mbox{iff}\qquad    \rho^1 \sat_{\calS} \phi_p, \\
  \rho \sat_{\calS} \varphi_p \Until\psi_p & \qquad\mbox{iff}\qquad
  \rho^i \sat_{\calS} \psi_p \ \mbox{and}\ \forall 0\leq j < i.\
  \rho^j\sat_{\calS} \phi_p 
\end{xalignat*}

\ATL is the fragment of \ATL* where each modality $\Until$ or $\X$
has to be preceded by a strategy quantifier~$\Diam[A]$. 
Several other fragments of \ATL* are also classically defined:
\begin{enumerate}[$\bullet$]
\item \ATL+ is the restriction of \ATL* where a strategy quantifier
  $\Diam[A]$ has to be inserted between two embedded temporal
  modalities $\Until$ or $\X$ but boolean combination are allowed.
\item \EATL extends \ATL by allowing the operators $\Diam[A] \G \F$
  (often denoted as~$\Diam[A]\Finf$) and $\Diam[A] \F \G$ (often written
  $\Diam[A]\Ginf$). They are especially useful to express fairness
  properties.
\end{enumerate}
For instance, 
\begin{xalignat*}1
\Diam[A] \Bigl(\F P_1 \:\et\: \F P_2 \:\et\: P_3 \Until P_4 \Bigr)&\text{ is
  in~\ATL+,} \\
\Diam[A] \F \Bigl(P_1 \:\et \: \Diam[A'] \Finf P_2\Bigr)&\text{ is in \EATL,}\\
\Diam[A] \Bigl(\F P_1 \:\et\: P_2 \Until (P_3\:\et\: \F P_4)\Bigr)&\text{ is in \ATL*.}
\end{xalignat*}

\subsubsection{Model checking \ATL+}
 
First note that \ATL+ extends \ATL and allows to express properties
with more succinct formulae~\cite{wilke99,AI01}  
but these two logics have the same
expressive power: every \ATL+ formula can be translated into an
equivalent \ATL formula~\cite{schobryan}.

The complexity of model checking \ATL+ over \ATS{}s has been settled
\DD3-complete in~\cite{Schobbens-LCMAS03}. But the \DD3-hardness proof
of~\cite{Schobbens-LCMAS03} is in \LOGSPACE{} only 
w.r.t.~the DAG-size of the formula. 
Below, we prove that model checking \ATL+ is \DD3-complete 
(with the classical definition of the size of a formula) 
for our three kinds of game structures. 
%
%
\begin{proposition}
Model checking \ATL+ can be achieved in~\DD3 on \CGSi{}s.
\end{proposition}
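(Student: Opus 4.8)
The plan is to proceed by the usual bottom-up labelling of the locations with the state subformulas they satisfy. Since an \ATL+ formula has polynomially many subformulas and a \CGSi has polynomially many locations, it suffices to treat a single modality $\Diam[A]\gamma$ under the assumption that every proper state subformula occurring in $\gamma$ has already been evaluated and is available as a fresh atomic proposition. Then $\gamma$ is a boolean combination of formulas of the form $\X c$ and $c\Until d$, where $c,d$ are such atoms; in particular $\gamma$ contains no nested temporal operator. The whole algorithm runs a polynomial-time procedure that, for each such modality and each location, issues oracle queries, and the goal is to keep every query at the \Ss2 level so that the overall procedure stays in \DD3, i.e.\ polynomial time with an \Ss2 oracle.

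For the core case I would exploit the absence of temporal nesting. Along any outcome the truth value of $\gamma$ is determined by the first transition (for the $\X c$ conjuncts) together with, for each conjunct $c\Until d$, either the first position at which it becomes fulfilled or the fact that it stays forever pending; and by a pumping argument any fulfilment that a controlled outcome can achieve is achieved within $\size\Loc$ steps. This replaces the infinite game for $\gamma$ by reasoning over a polynomially bounded horizon. Concretely, I would decide $\ell\models\Diam[A]\gamma$ by guessing a polynomially bounded \emph{schedule}: a choice, for the witnessing strategy, of the order and bounded horizon in which the required $\Until$-obligations are discharged, of which $\Until$-atoms are kept permanently pending or avoided, and of the successor forced for the $\X$-atoms. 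Checking that coalition $A$ can force every outcome to respect such a schedule is an \ATL-style controllability problem whose only nonelementary ingredient is the controllable-predecessor operator: each step tests membership in $\CPre(A,\cdot)$, which on \CGSi{}s is decidable in \Ss2 by the proposition on computing $\CPre$.

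The complexity bookkeeping is where the real work lies. The fixpoint equivalence~\eqref{eq-fixpoint} lets me compute the \ATL-style controllability region by a polynomial number of $\CPre$ computations, and on \CGSi{}s each $\CPre$ is a single \Ss2 query. The point to get right is that deciding $\Diam[A]\gamma$ must be organised so that it remains a polynomial-time computation making \Ss2 queries: a naive algorithm that first existentially guesses a schedule and then verifies it by a $\CPre$-fixpoint would place an existential layer on top of an \Ss2 computation, costing $\Sigma_3^{p}$ per modality and hence an a~priori $\Delta_4^{p}$ overall. I would therefore merge the guess of the schedule with the move-guessing internal to $\CPre$, so that the schedule is never materialised separately; the bottom-up procedure then issues only \Ss2 queries and stays in \DD3.

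The main obstacle is thus entirely due to the boolean combination. First, $\gamma$ is a combination of temporal goals for which memoryless strategies do not suffice and which does not reduce to a single $\CPre$-fixpoint as in plain \ATL, so one must argue that the memory a winning strategy needs---the record of which $\Until$-obligations remain pending---is captured by a polynomially bounded, guessable witness. Second, and more importantly, one must keep the quantifier alternation under control: the extra existential content carried by this witness has to be absorbed into the \Ss2 oracle rather than stacked on top of it, since otherwise the algorithm slips from \DD3 to a higher level of the polynomial hierarchy. As computing $\CPre$ on \CGSi{}s is already \Ss2-complete, the mathematical content of the proof is precisely this control of the alternation, not the game-theoretic reasoning itself.
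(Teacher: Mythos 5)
There is a genuine gap, and it sits exactly where you say the ``real work'' is. The paper's proof needs no schedules at all: for each subformula $\Diam[A]\gamma$ it guesses the \emph{state-based strategy of the coalition itself} --- a polynomial-size witness --- substitutes the guessed moves into the implicit transition formulas, and is then left with the purely universal question of whether the \CTL+ formula $\All\gamma$ holds in the pruned structure. That question is in \DD2 (as for \CTL+ model checking), modulo polynomially many independent \NP{} calls to decide which implicit transitions remain enabled; hence each query has the shape ``$\exists$ polynomial witness, then a \DD2 verification'', which lies in \Ss2 (an $\NP^{\NP}$ machine guesses the strategies and runs the $\PTIME^{\NP}$ verification), and the outer bottom-up labelling is a polynomial-time computation with a \Ss2 oracle, \ie~\DD3. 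This also dissolves your alternation worry: the danger you fight ($\exists$ stacked over a fixpoint of \Ss2 queries, giving $\Sigma_3^p$ per modality and $\Delta_4^p$ overall) arises only because your post-guess verification is itself a \DD3 fixpoint computation; once the verification after the guess is \DD2, no ``merging'' is needed.

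Your own route has two holes that the proposal does not close. First, completeness fails for uniform schedules: requiring \emph{every} outcome to respect one guessed discharge pattern is strictly stronger than $\Diam[A]\gamma$, because distinct outcomes of a single winning strategy may fulfil the boolean combination in different ways. Already $\Diam[\emptyset](\F p \ou \F q)$ is true in a structure where the (adversarial) unique player moves from $\ell_0$ into either a $p$-sink or a $q$-sink, yet no single choice of ``which \Until-atoms are discharged, in which order, within which horizon'' is respected by all outcomes; and letting the schedule depend on the history reintroduces exactly the unbounded-memory witness you set out to avoid. Second, the repair you invoke --- merging the schedule guess into the move-guessing internal to $\CPre$ --- is the entire mathematical content of the argument, and it is never defined: in a \DD3 computation the deterministic outer machine makes adaptive queries that the \Ss2 oracle answers independently, so there is no mechanism by which polynomially many $\CPre$ queries can share one consistent global guess. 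Your instinct that memory is the crux is sound; the paper, following Schobbens, resolves it by committing to state-based strategy witnesses for the \ATL+ modalities, and your proposal, having rejected that simplification, puts nothing workable in its place.
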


\begin{proof}
A \DD3 algorithm is given in~\cite{Schobbens-LCMAS03} for \CGSe{}s.
We extend it to handle \CGSi{}s: for each subformula of
the form~$\Diam[A]\phi$, guess (state-based) strategies for players in~$A$. 
In each state, the choices of each player in~$A$ can be replaced in the
transition functions. We then want to compute the set of states where the
\CTL+ formula $\All\phi$ holds. This can be achieved in \DD2~\cite{CES86,fossacs2001-LMS}, but 
 requires to first compute the possible
transitions in the remaining structure, \emph{i.e.}, to check which of the
transition formulae are satisfiable. This is done by a polynomial number
of independent calls to an \NP{} oracle, and thus does not increase the
complexity of the algorithm.
\end{proof}

\begin{proposition}
Model checking \ATL+ on turn-based two-player \CGSe{}s is \DD3-hard.
\end{proposition}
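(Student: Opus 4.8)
The plan is to reduce from the \DD3-complete problem \SNSAT[2] already used above for \CGSi{}s, but to replace the implicit transition formulae (which evaluated each $\phi_i$ for free) by a Boolean combination of temporal eventualities. This is exactly the extra expressive power \ATL+ provides over \ATL, and the reason the problem jumps from \PTIME to \DD3 on explicit structures. Fix an instance $\calI$ of \SNSAT[2] with 3-CNF formulae $\phi_1,\dots,\phi_m$ over $X_i$, $Y_i$ and $z_1,\dots,z_{i-1}$, and let $v_\calI$ be the induced valuation. I build a two-player turn-based \CGSe with a protagonist (player~$1$) and an opponent (player~$2$). For each level~$i$ there is an entry state $q_i$, followed by a chain of player-$1$ states in which player~$1$ chooses a value for every $x_i^a$ and \emph{claims} a value for every $z_k$ ($k<i$) occurring in $\phi_i$, then a chain of player-$2$ states in which player~$2$ chooses a value for every $y_i^a$. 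Every individual bit-choice passes through a state carrying a dedicated atomic proposition $p_\ell$ recording the corresponding literal, so that, read from $q_i$, the predicate ``literal~$\ell$ is set to true'' is captured by the eventuality $\F p_\ell$. After the assignment phase, player~$2$ chooses either to \emph{verify} $\phi_i$ (moving to an absorbing state) or to \emph{challenge} a claimed value of some $z_k$, which---exactly as player~$D$ did in the proof of Proposition~\ref{implicit-hard}---routes the play to $q_k$ if $z_k$ was claimed true, or through an $s$-labelled state to $q_k$ if it was claimed false.

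The core of the reduction is the observation that the 3-CNF $\phi_i=\bigwedge_j S_i^j$ can be tested, on the single outcome produced once the assignment is fixed, by the \ATL+ path formula $\Theta_i\eqdef\bigwedge_j\bigvee_{\ell\in S_i^j}\F p_\ell$: under a fixed assignment exactly one of $p_{x_i^a=\top}$, $p_{x_i^a=\bot}$ is ever seen, so $\Theta_i$ holds along the verify-outcome iff $\phi_i$ is satisfied. Consequently the query $\exists X_i.\forall Y_i.\phi_i$ is captured by a single strategy quantifier: $\Diam[\{1\}]$ supplies the $\exists X_i$ (player~$1$ fixes $X_i$ in a phase preceding player~$2$'s moves), quantification over all outcomes supplies the $\forall Y_i$, and the Boolean combination inside $\Theta_i$ performs the clause check that plain \ATL cannot. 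The sequential dependency on $z_1,\dots,z_{i-1}$ is then handled by a recursively defined formula in the style of Proposition~\ref{implicit-hard},
\[
  \Psi_0 \eqdef \top,\qquad
  \Psi_{k+1}\eqdef \Diam[\{1\}]\Bigl((\non s)\Until\bigl(\Theta \ou \Ex\X(s\et\Ex\X\non\Psi_k)\bigr)\Bigr),
\]
in which the verify-branch is resolved by the clause-test $\Theta$ (a conjunction of the level-indexed $\Theta_i$, guarded by the verify marker), the ``challenge-false'' branch forces $q_k\models\non\Psi_k$, and the ``challenge-true'' branch lets the play continue one level down under the same strategy. Writing $f_\calI$ for the (memoryless) strategy that plays every $x$- and $z$-variable according to $v_\calI$, I would then prove the invariant lemma: for every $i\le m$ and every $k\ge i$, the statements $q_i\models\Psi_k$, ``$f_\calI$ witnesses $q_i\models\Psi_k$'', and ``$z_i$ evaluates to true in $v_\calI$'' are equivalent, by induction on~$i$; the whole reduction is then the instance $i=k=m$.

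The main obstacle is to make the clause-test $\Theta$ and the challenge-recursion coexist inside one path formula without interference: I must use level-indexed propositions $p_\ell$ so that the eventualities $\F p_\ell$ read from $q_i$ reflect \emph{only} the level-$i$ assignment (legitimate because challenges descend to strictly lower levels and the level-$i$ phase is traversed exactly once), and I must guard each clause-test by an atomic proposition marking the verify branch, so that it is required precisely on verify-outcomes and left vacuous on challenge-outcomes. Proving the invariant then amounts to checking, at the inductive step, that the opponent's two kinds of challenges catch every incorrect $z_k$-claim---which is where the non-determinacy argument of Proposition~\ref{implicit-hard} is reused. A secondary but essential point is to keep the formula of polynomial \emph{tree}-size: since $\Psi_{k+1}$ refers to $\Psi_k$ only once its size grows linearly in $m$, and each $\Theta_i$ is linear in $\size{\phi_i}$, so the total size is polynomial and the hardness holds for the ordinary (not merely the DAG-) size of the formula, which is the improvement over~\cite{Schobbens-LCMAS03}.
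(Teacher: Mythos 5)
Your overall plan --- reducing from \SNSAT[2], letting player~$1$ fix the $X$-variables and claim values for the $z_k$'s, letting player~$2$ fix the $Y$-variables and then verify or challenge, recording literals by atomic propositions visited along the play, and testing clauses by Boolean combinations of eventualities $\F p_\ell$ --- is exactly the right idea, and it is the idea of the paper's proof. But the formula you actually build is not an \ATL+ formula, and that is a genuine gap rather than a cosmetic one. In
\[
\Psi_{k+1}\eqdef \Diam[\{1\}]\Bigl((\non s)\Until\bigl(\Theta \ou \Ex\X(s\et\Ex\X\non\Psi_k)\bigr)\Bigr),
\qquad \Theta_i=\bigwedge_j\bigvee_{\ell}\F p_\ell,
\]
the eventualities $\F p_\ell$ sit \emph{inside} the right argument of an \Until. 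The defining restriction of \ATL+ is precisely that a strategy quantifier must be inserted between two embedded temporal modalities; only Boolean combinations of $\X$- and $\Until$-formulas over \emph{state} formulas are allowed under a single $\Diam[A]$ (which is why the earlier \ATL reductions could put the state formula $\Ex\X(s\et\Ex\X\non\psi_k)$, but not a clause test, in that position). Your $\Psi_{k+1}$ is therefore an \ATL* formula, and the reduction as written proves nothing about \ATL+ --- expressing the clause test and the challenge recursion within \ATL+'s restricted syntax is the whole difficulty of this proposition. A secondary symptom of the same problem: in \ATL* the right argument of \Until is evaluated on the suffix at which the \Until is fulfilled, so your intended reading ``$\Theta$ holds along the outcome read from $q_i$'' does not match the formula unless you additionally argue that the \Until may be taken as fulfilled at position~$0$; your correctness sketch glosses over this.

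The paper's proof performs exactly the restructuring that your proof is missing: it flattens everything into one top-level Boolean combination in which every temporal operator is applied to state formulas only,
\[
\psi_{k+1} = \Diam[A]\Bigl[\G\non s \;\et\; \G\bigl(\overline z \thn \Ex\X(s\et\Ex\X\non\psi_k)\bigr)\;\et\;
\bigwedge_{w\leq p}\bigl((\F z_w)\thn \bigwedge_{j}\bigvee_{k}\F l_w^{j,k}\bigr)\Bigr],
\]
so that the \Until-skeleton of the \ATL reductions is replaced by $\G\non s$ together with a $\G$-guarded recursion clause checked at every visited $\overline z$-state, and the clause tests are guarded by $(\F z_w)\thn\cdots$ at the same Boolean level, all eventualities being read from the start of the play. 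Note also that the paper's model then needs no explicit challenge moves at all: its chain-shaped turn-based \CGSe{} makes a single play fix all $z$-, $x$- and $y$-values, and the ``challenges'' are carried out by the formula itself through the state subformula $\Ex\X(s\et\Ex\X\non\psi_k)$. Repairing your proof means replacing your $(\non s)\Until(\cdot)$ skeleton by this $\G$-and-implication shape (with markers for the verify/challenge branches), at which point your construction essentially becomes the paper's.
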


\begin{proof}
This reduction is a quite
straightforward extension of the one presented in~\cite{fossacs2001-LMS}
for~\CTL+. In particular, it is quite different from the previous reductions,
since the boolean formulae are now encoded in the \ATL+ formula, and not in the
model. 

We  encode an instance~$\calI$ of \SNSAT[2], 
keeping the notations used in the proofs of Prop.~\ref{implicit-hard} (for the
\SNSAT[2] problem) and~\ref{ats-hard} (for clause numbering).
Fig.~\ref{fig-dd3} depicts the turn-based two-player \CGS~$\mathcal C$
associated to~$\calI$. 
\begin{figure*}[!ht]
\centering
\begin{tikzpicture}
  \everymath{\scriptstyle}
  \tikzstyle{every node}=[rectangle, rounded corners]

  \foreach \x / \y / \nod / \lab in 
    { 0.5 / 4.0 /znb/ \overline{z_p} , 
      0.5 / 2.0 /zn/ \vphantom{\overline z}z_p ,
      1.8 / 4.0 /znmob/ \overline{z_{p-1}} , 
      1.8 / 2.0 /znmo/ \vphantom{\overline z}z_{p-1} ,
      3.5 / 4.0 /zub/ \overline{z_{1\vphantom p}} , 
      3.5 / 2.0 /zu/ \vphantom{\overline{z_p}}z_1 }
  { \path (\x,\y)  node[draw] (\nod) {\hbox to 15pt{\hss$\lab$\hss}}; }

  \foreach \x / \y / \nod / \lab in 
    { 0.5 / 3.0 /sn/ {s_p} , 
     1.8 / 3.0 /snmo/ {s_{p-1}} , 
     3.5 / 3.0 /su/ {s_1} }
  {\path (\x,\y)  node[draw] (\nod) {\hbox to 12pt{\hss$\lab$\hss}}; }

  \foreach \x / \y / \nod / \lab in 
    { 4.5 / 3.5 /anb/ \overline{x_1^1} , 
      4.5 / 2.5 /an/ \vphantom{\overline{x^1_1}}x_1^1 ,
      5.5 / 3.5 /anmob/ \overline{x_1^2} , 
      5.5 / 2.5 /anmo/ \vphantom{\overline{x_1^2}}x_1^2 ,
      7.0 / 3.5 /aub/ \overline{x_{m\vphantom{1}}^{n\vphantom{2}}} , 
      7.0 / 2.5 /au/ \vphantom{\overline{x_{m\vphantom{1}}^{n\vphantom{2}}}}x_m^n }
  { \path (\x,\y)  node[draw] (\nod) {\hbox to 12pt{\hss$\lab$\hss}}; }

  \foreach \x / \y / \nod / \lab in 
    { 8.0 / 3.5 /bnb/ \overline{y_1^1} , 
      8.0 / 2.5 /bn/ \vphantom{\overline{y_1^1}}y_1^1 ,
      9.0 / 3.5 /bnmob/ \overline{y_1^2} , 
      9.0 / 2.5 /bnmo/ \vphantom{\overline{y_1^2}}y_1^2 ,
      10.5 / 3.5 /bub/ \overline{y_{m\vphantom{1}}^{n\vphantom{2}}} , 
      10.5 / 2.5 /bu/ \vphantom{\overline{y_{m\vphantom{1}}^{n\vphantom{2}}}}y_m^n }
  { \path (\x,\y)  node[draw] (\nod) {\hbox to 12pt{\hss$\lab$\hss}}; }

  \foreach \na / \nb in
    {znb/znmob,znb/znmo,znb/sn,
     zn/znmob,zn/znmo,
     znmob/snmo,snmo/znmo,sn/zn,
     zub/su,su/zu,zub/anb,zub/an,zu/anb,zu/an,
     anb/anmob,anb/anmo,an/anmob,an/anmo,
     aub/bnb,aub/bn,au/bnb,au/bn,
     bnb/bnmo,bnb/bnmob,bn/bnmo,bn/bnmob}
  { \draw[arrows=-latex'] (\na) -- (\nb); }

  \foreach \nod / \angle in
    {bub/0,bu/0}
  { \draw[arrows=-latex'] (\nod.\angle-20) 
        .. controls +(\angle-20:6mm) and +(\angle+20:6mm) 
        .. (\nod.\angle+20); }

  \foreach \nod / \x / \y in
    {znmob/.5/-.75,znmob/.75/0,znmo/.5/.75,znmo/.75/0,
     anmob/.75/0,anmob/.6/-.4,anmo/.75/0,anmo/.6/.4,
     bnmob/.75/0,bnmob/.6/-.4,bnmo/.75/0,bnmo/.6/.4}
  { \draw[dashed] (\nod) -- +(\x,\y); }

  \foreach \nod / \x / \y in
    {zub/.5/.75,zub/.75/0,zu/.5/-.75,zu/.75/0,
     aub/.75/0,aub/.6/.4,au/.75/0,au/.6/-.4,
     bub/.75/0,bub/.6/.4,bu/.75/0,bu/.6/-.4}
  { \draw[dashed,arrows=latex'-] (\nod) -- +(-\x,-\y); }

  \draw[snake=brace,segment amplitude=2mm] (7.3,1.6) -- (.2,1.6) node[midway,below=2mm]
  {controlled by player~A} ; 
  \draw[snake=brace,segment amplitude=2mm] (10.8,1.6) -- (7.7,1.6) node[midway,below=2mm]
  {controlled by player~B} ;

\end{tikzpicture}
\caption{The CGS $\mathcal C$}
\label{fig-dd3}
\end{figure*}
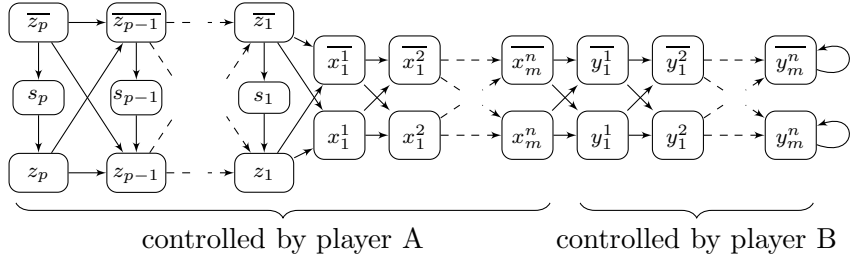
States~$s_1$ to~$s_m$ are labeled by atomic proposition~$s$,
states~$\overline{z_1}$ to~$\overline{z_m}$ are labeled by atomic
proposition~$\overline z$, and the other states are labeled by their name as
shown on Fig.~\ref{fig-dd3}.

The \ATL+ formula is built recursively, with~$\psi_0=\top$ and
\[
\psi_{k+1} = \Diam[A] [\G \non s \et \G(\overline{z} \thn \Ex\X(s\et \Ex\X
  \non\psi_{k})) \et  
 \bigwedge_{w\leq p}[ (\F z_w) \thn \bigwedge_{j\leq n\vphantom k} \bigvee_{k\leq 3} \F
  l_w^{j,k} ] ]
\]
where 
$l_w^{j,k} = v$ when $s_w^{j,k}=v$ and $\alpha_w^{j,k}=1$, and
$l_w^{j,k} = \overline{v}$ when $s_w^{j,k}=v$ and $\alpha_w^{j,k}=0$.
We then have:
\begin{lemma}
For any $r\leq p$ and~$t\geq r$, the following statements are equivalent:
\begin{enumerate}[(a)]
\item\label{staa} $z_r\models\psi_{t}$;
\item\label{stbb} the strategies~$f_\calI$ witness the fact that~$q_r\models\psi_{t}$;
\item\label{stcc} variable $z_r$ evaluates to true in~$v_\calI$.
\end{enumerate}
\end{lemma}

When~$r=1$, since no~$s$- or~$\overline z$-state is
reachable from~$z_1$, the fact that $z_1\models\psi_t$, with~$t\geq 1$, is equivalent
to $z_1\models \Diam[A] \bigwedge_j \bigvee_k \F l_1^{j,k}$. This in turn is
equivalent to the fact that~$z_1$ evaluates to true in~$\calI$.

We now turn to the inductive case. If~$z_{r+1}\models\psi_{t+1}$ with~$t\geq r$,
consider a strategy for~$A$ s.t.~all the outcomes satisfy the property, and
pick one of those outcomes, say~$\rho$. Since it cannot run into any~$s$-state, it
defines a valuation~$v_\rho$ for variables~$z_1$ to~$z_{r+1}$ and~$x_1^1$
to~$x_m^n$ in the obvious way. Each time the outcome runs in
some~$\overline{z_u}$-state, it satisfies~$\Ex\X(s\et\Ex\X\psi_{t})$.
Each time it runs in some~$z_u$-state, the suffix of the outcome
witnesses formula~$\psi_{t+1}$ in~$z_u$. Both cases entail, thanks to the~i.h.,
that $v_\rho(z_u)=v_\calI(z_u)$ for any~$u<r+1$. Now, the subformula
$\bigwedge_{w}[ (\F z_w) \thn \bigwedge_{j\leq n\ } \bigvee_{\ k\leq 3} \F
  l_w^{j,k}$, when~$w=r+1$, 
entails that $\phi_{r+1}$ is indeed satisfied whatever the values of
the~$y_{r+1}^j$'s, \emph{i.e.}, that~$z_{r+1}$ evaluates to true in~$\calI$.

Conversely, if~$z_r$ evaluates to true, then strategy~$f_\calI$ clearly 
witnesses the fact that $\psi_t$ holds in state~$z_r$. 
\end{proof}

As an immediate corollary, we end up with:
\begin{theorem}
Model checking \ATL+ is \DD3-complete on~\ATS{}s as well as on \CGSe{}s and 
\CGSi{}s.
\end{theorem}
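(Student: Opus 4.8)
The plan is to obtain \DD3-completeness on all three models by combining the two preceding propositions---membership in \DD3 for \CGSi{}s and \DD3-hardness already on turn-based two-player \CGSe{}s---and then transferring each bound to the remaining model classes by cheap, structure-preserving encodings. For the upper bound I would first note that every \CGSe is a special case of a \CGSi: the explicit table at a location~$\ell$ can be rewritten as a list of pairs $(\phi_i,\ell_i)$ where each~$\phi_i$ is the conjunction $\bigwedge_j (A_j=c_j)$ describing one move vector, closed by a final pair $(\top,\cdot)$. This rewriting is polynomial in~$\size{\Edg}$, so the \DD3 membership on \CGSi{}s immediately yields membership on \CGSe{}s.

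For \ATS{}s I would rerun the same \DD3 algorithm rather than translate. Proceeding bottom-up over subformulae, for each outermost strategy quantifier $\Diam[A]\phi$ I guess a state-based strategy for the players in~$A$ (a subset of~\Loc per player) and then model-check the residual \CTL+ property~$\All\phi$, which is a \DD2 task. The only point that differs from the \CGSi case is that the residual transition relation is not given explicitly; but to answer a query ``$\ell'\in\Next(\ell,A,m_A)$?'' one asks whether some opponent move~$m_{\bar A}$ makes the singleton intersection equal to~$\ell'$, which is exactly the \NP question underlying the computation of~\CPre on \ATS{}s. Thus the residual structure is accessible through an \NP oracle, the \CTL+ verification runs in \DD2 relative to it, and labelling all states with all subformulae costs only a polynomial number of \Ss2 queries, so the whole procedure stays in~\DD3.

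For the lower bound, the hardness proposition already establishes \DD3-hardness on turn-based two-player \CGSe{}s, which are simultaneously a subclass of general \CGSe{}s and, via the \CGSe-to-\CGSi rewriting above, of \CGSi{}s; hence hardness propagates to both. It then remains to transfer hardness to \ATS{}s, for which I would encode the turn-based \CGS produced by the reduction as an \ATS in the obvious way: at a state where the active player has successor set $T=\{t_1,\dots,t_k\}$, let that player's moves be the singletons $\{t_1\},\dots,\{t_k\}$ and let the inactive player have the single move~$T$. Every intersection $\{t_i\}\cap T=\{t_i\}$ is then a singleton as required, the construction is essentially syntactic and polynomial, and the resulting \ATS is alternating-bisimilar to the original turn-based \CGS, hence satisfies exactly the same \ATL+ (indeed \ATL*) formulae. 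Combining the two bounds yields \DD3-completeness on \ATS{}s, \CGSe{}s and \CGSi{}s.

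The step I expect to require the most care is the \DD3 membership on \ATS{}s: one must check that replacing the explicit residual transition relation of the \CGSi algorithm by \NP-oracle reachability queries does not push the complexity beyond~\DD3, i.e.\ that the relevant queries---guess a coalition strategy, then run a universal \CTL+ verification---remain \Ss2 questions invoked only polynomially often, and that the bottom-up labelling composes correctly across nested modalities. By contrast, the turn-based-to-\ATS encoding is routine once the singleton-intersection condition is verified, and the two inclusion-based transfers of the bounds are immediate.
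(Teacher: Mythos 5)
Your proposal is correct and takes essentially the same route as the paper: the theorem is obtained by combining the \DD3 membership proposition (for \CGSi{}s, which subsumes \CGSe{}s) with the \DD3-hardness proposition on turn-based two-player \CGSe{}s, and transferring the bounds to the remaining models through polynomial, alternating-bisimulation-preserving encodings---where the paper appeals to its translation results of Section~\ref{traduc-models} and to the known \DD3 algorithm on \ATS{}s, you simply spell out the encodings and re-derive the \ATS upper bound directly. One harmless imprecision: once the coalition's state-based strategy $m_A$ is fixed in an \ATS, deciding $\ell'\in\Next(\ell,A,m_A)$ is actually polynomial (check that $\ell'$ belongs to every coalition move and that each opponent has some move containing $\ell'$; the singleton requirement does the rest), so your \NP-oracle appeal overshoots but does not affect the \DD3 bound.
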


%
%

\subsubsection{Model checking \EATL}

In the classical branching-time temporal logics, adding the modality
$\Ex\Finf$ to \CTL increases its expressive power
(see~\cite{emerson90}), this is also true when considering 
alternating-time temporal logics, 
as we will see in Section~\ref{sec-expre-finf}.

From the theoretical-complexity point of view, there is no
difference between \ATL and \EATL:

\begin{theorem}
 Model checking \EATL is:
\begin{enumerate}[$\bullet$]
\item \PTIME-complete over \CGSe{}s;
\item   \DD2-complete over \ATS{}s;
\item   \DD3-complete over \CGSi{}s.
\end{enumerate}
\end{theorem}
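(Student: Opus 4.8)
The plan is to establish matching upper and lower bounds for \EATL model checking over each of the three classes of game structures, reusing as much of the \ATL machinery as possible.

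For the upper bounds, the key observation is that \EATL adds only the two modalities $\Diam[A]\Finf$ and $\Diam[A]\Ginf$ on top of \ATL. The plan is to show that each of these can be evaluated by a polynomial number of calls to $\CPre$, just as the \Until modality was handled via the fixpoint equation~\eqref{eq-fixpoint}. Concretely, $\Diam[A]\Finf\phi$ (i.e.\ $\Diam[A]\G\F\phi$) and its dual $\Diam[A]\Ginf\phi$ correspond to nested least/greatest fixpoints of $\CPre(A,\cdot)$, which can be computed by a polynomially bounded iteration (for instance $\Diam[A]\Finf\phi \equiv \nu Z.\,\mu Y.\,\CPre(A,(\phi\wedge Z)\vee Y)$, or the analogous fixpoint used for repeated reachability in the control-theoretic literature). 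Since each level of the fixpoint stabilises after at most $\size\Loc$ iterations and the nesting depth is constant, the whole evaluation still reduces to polynomially many $\CPre$ computations. Feeding in the respective complexities of $\CPre$ established earlier---$\AC^0\subseteq\PTIME$ for \CGSe{}s, \NP{} for \ATS{}s, and \Ss2 for \CGSi{}s---and observing that the labelling procedure makes only independent oracle calls whose results can be reused, the overall bounds \PTIME, \DD2 and \DD3 follow exactly as in the corresponding \ATL propositions (Thm.~for \CGSe{}s, Prop.~\ref{ats-easy} and Prop.~\ref{implicit-easy}).

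For the lower bounds, the point is simply that \EATL syntactically contains \ATL, so \PTIME-hardness over \CGSe{}s, \DD2-hardness over \ATS{}s and \DD3-hardness over \CGSi{}s are all inherited directly from the corresponding \ATL results (Prop.~\ref{ats-hard} and Prop.~\ref{implicit-hard}, together with the \PTIME-completeness of \ATL over \CGSe{}s). No new reduction is required, since the hardness instances are already \ATL formulae and hence \EATL formulae.

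The main obstacle will be the correct treatment of the two new fixpoint operators under the \emph{co-strategy} subtlety that was flagged for \CGSi{}s: because these games are not determined, one must verify that the fixpoint characterisation of $\Diam[A]\Finf$ and $\Diam[A]\Ginf$ genuinely captures the existence of a single (memoryless) strategy enforcing the $\omega$-regular winning condition against all opponent behaviours, rather than conflating it with the dual co-strategy notion. I would therefore justify the fixpoint formulas by appealing to the standard memoryless-determinacy results for Büchi and co-Büchi objectives in concurrent games, confirming that positional strategies suffice (as for plain \ATL), so that the iterative $\CPre$-based computation is sound and complete; the complexity accounting is then routine.
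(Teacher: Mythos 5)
Your proposal is correct and follows essentially the same route as the paper: the paper likewise expresses $\Diam[A]\Finf$ and $\Diam[A]\Ginf$ as nested $\mu$/$\nu$ fixpoints of $\Diam[A]\X$ (\ie, of $\CPre$), evaluated by a polynomial number of $\CPre$ computations so that the \ATL upper bounds carry over, and it inherits all three lower bounds directly from \ATL hardness. One terminological caveat: what justifies the fixpoint characterisation is not ``memoryless determinacy'' (these games are \emph{not} determined, as the paper stresses), but the one-sided fact that the fixpoints compute exactly the states from which the coalition has a (memoryless) sure-winning strategy---precisely the property you identify in your preceding sentence, and which the paper simply delegates to a citation.
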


\begin{proof}
We extend the model-checking algorithm for \ATL. This is again 
achieved by expressing  modalities~$\Diam[A]\Finf$ 
and~$\Diam[A]\Ginf$ as fixpoint formulas~\cite{dealfaro2001vcd}:
\begin{xalignat*}1
\Diam[A]\Finf p &\equiv \nu y.\mu x. \left(\Diam[A]\X(x)\lor \left( p \land
  \Diam[A]\X(y)\right)\right)
\\
\Diam[A]\Ginf p &\equiv \mu y.\nu x. \left(\Diam[A]\X(x)\land \left( p \lor 
  \Diam[A]\X(y)\right)\right)
\end{xalignat*}

Computing these fixpoints can again be achieved by a polynomial number of 
computations of~$\CPre$.

Hardness directly follows from the hardness of~\ATL model checking.
\end{proof}

\subsubsection{\ATL* model-checking.}

When considering \ATL* model checking, the complexity is the same for
\CGSe, \CGSi and \ATS since it mainly comes from the formula to be
checked:

\begin{theorem}
  Model checking \ATL* is \TWOEXPTIME-complete  on~\ATS{}s as well as on
  \CGSe{}s and \CGSi{}s.
\end{theorem}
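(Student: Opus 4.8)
The plan is to establish both directions — membership in \TWOEXPTIME and \TWOEXPTIME-hardness — and to observe that the complexity is insensitive to the choice of model encoding (\CGSe, \CGSi, or \ATS), which is the reason a single statement suffices for all three. The starting point is the classical result for \ATL* over \CGSe{}s, which is already known to be \TWOEXPTIME-complete~\cite{jacm49(5)-AHK}. The work therefore reduces to transferring this bound to \CGSi{}s and \ATS{}s.

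For the upper bound, I would first recall the automata-theoretic algorithm for \ATL* over \CGSe{}s: for each subformula $\Diam[A]\varphi_p$ with $\varphi_p$ a path formula, one builds a nondeterministic tree automaton recognizing the strategy trees whose branches all satisfy~$\varphi_p$, using a doubly-exponential determinization of the corresponding word automaton for the \LTL-like path formula. The emptiness check then runs in time doubly exponential in $\size\phi$ but only polynomial in the size of the structure. To lift this to \CGSi{}s, the key observation is that the implicit transition tables can be ``unfolded'' only when needed: the automaton construction depends on the structure through its transition relation, and I can evaluate the boolean guards $\phi_i$ on the fly. Since the double-exponential blow-up already comes from the formula, the extra cost of resolving implicit transitions (an \NP-computation per query, which is absorbed into the \TWOEXPTIME bound) is negligible. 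For \ATS{}s, I would use the translation of Section~\ref{traduc-models}, which converts any \ATS into an alternating-bisimilar \CGS; since \ATL* is invariant under alternating bisimulation, model checking on the \ATS reduces to model checking on the resulting \CGS without increasing the complexity class.

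For the lower bound, \TWOEXPTIME-hardness already holds for \CGSe{}s by~\cite{jacm49(5)-AHK}, and \CGSe{}s are a special case of both \CGSi{}s (a fully-explicit table can be written as an implicit one at polynomial cost) and, via the model translation, of \ATS{}s. Thus hardness transfers immediately to the two remaining models, and no new reduction is required.

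\emph{The main obstacle} I anticipate is the upper-bound argument for \CGSi{}s, specifically verifying that interleaving the on-the-fly evaluation of implicit transition guards with the automaton-emptiness procedure does not inflate the complexity beyond \TWOEXPTIME. One must be careful that the automaton's transitions range over the \emph{moves} of the agents rather than over successor states directly, so that the exponential number of joint moves does not multiply with the doubly-exponential automaton size in a way that escapes the class; the point to check is that each such query is independent and polynomially-many per automaton transition, so the overall time stays doubly exponential in the formula and polynomial (with an \NP-oracle) in the structure. Once this bookkeeping is in place, the other steps are routine, relying only on the already-established \CGSe result and on the bisimulation-invariance of \ATL* together with the model translations established earlier in the paper.
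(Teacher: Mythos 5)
Your proposal is correct, and its core coincides with the paper's proof: the paper also extends the automata-theoretic algorithm of the original \ATL* paper, building for each subformula $\Diam[A]\psi$ a deterministic Rabin tree automaton for~$\psi$ and a B\"uchi tree automaton for the outcome trees of coalition~$A$'s strategies, and it observes that for \ATS{}s and \CGSi{}s the transition function of the latter automaton can be computed by enumerating the (exponentially many) joint moves of the coalition --- an exponential cost that is absorbed by the doubly-exponential emptiness check; the lower bound is likewise inherited from \CGSe{}s in both arguments. Where you genuinely diverge is the treatment of \ATS{}s: the paper handles them directly, by the same enumeration of joint moves used for \CGSi{}s (computing $\Next(\ell,A,m)$ once the joint move is fixed), whereas you reduce \ATS{}s to \CGS{}s via the translations of Section~\ref{traduc-models} together with invariance of \ATL* under alternating bisimulation. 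Your route makes the \ATS case a one-liner, but it imports an extra external ingredient (bisimulation-invariance of \ATL*, from~\cite{AHKV-concur98}) that the paper's self-contained, uniform construction does not need; it also requires a small amount of care you should make explicit: the cheap translation (quadratic) goes from \ATS{}s to \emph{implicit} \CGS{}s, so your \ATS bound actually rests on your \CGSi upper bound, while the translation to \emph{explicit} \CGS{}s is exponential, and then one must note that the \CGSe algorithm is only polynomial in the structure (the doubly-exponential dependence being on the formula) for the composition to stay inside \TWOEXPTIME. Either variant works, so this is a stylistic rather than substantive difference.
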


\begin{proof}
  We extend the algorithm of~\cite{jacm49(5)-AHK}. This algorithm recursively
  labels each location with the subformulae it satisfies.
  Formulas~$\Diam[A]\psi$, with $\psi\in\LTL$, are handled by building a
  deterministic Rabin tree automaton~$\mathcal A_\psi$ for~$\psi$, and a
  B\"uchi tree automaton~$\mathcal A_{\mathcal C,A}$ recognizing trees
  corresponding to the sets of outcomes of each possible strategy of
  coalition~$A$ in the structure~$\mathcal C$. We~refer
  to~\cite{jacm49(5)-AHK} for more details on the whole proof, and only focus
  on the construction of~$\mathcal A_{\mathcal C,A}$.

  The states of~$\mathcal A_{\mathcal C,A}$ are the states of~$\mathcal C$.
  From location~$\ell$, there are as many transitions as the number of
  possible joint moves~$m=(m_{A_i})_{A_i\in A}$ of coalition~$A$. Each
  transition is a set of states that should appear at the next level of the
  tree. Formally, given~$p\in 2^{\AP}$, 
\[
\delta(\ell,p) = \{\Next(\ell,A,m) \mid m=(m_{A_i})_{A_i\in A} \text{ with }
  \forall A_i\in A.\ m_{A_i} \in \Chc(\ell,A_i) \}
\]
  when~$p=\Lab(\ell)$, and $\delta(\ell,p)=\varnothing$ otherwise.
  
  For \CGSe{}s, this transition function is easily computed in polynomial
  time. For~\ATS{}s and \CGSi{}s, the transition function is computed by
  enumerating the (exponential) set of joint moves of coalition~$A$ (computing
  $\Next(\ell,A,m)$ is polynomial once the joint move is fixed).

  Computing $\mathcal A_{\mathcal C,A}$ can thus be achieved in exponential
  time. Testing the emptiness of the product automaton then requires
  doubly-exponential time. The~whole algorithm thus runs in \TWOEXPTIME.
  The~lower bound directly follows from the lower bound for \CGSe{}s.
\end{proof}

Let us finally mention that 
our results could easily be lifted to
Alternating-time $\mu$-calculus~(AMC)~\cite{jacm49(5)-AHK}: 
the \PTIME algorithm proposed in~\cite{jacm49(5)-AHK} for
\CGSe{}s, which again consists in a polynomial number of computations of
\CPre, is readily adapted to \ATS{}s and \CGSi{}s: as~a result,
model checking the alternation-free fragment has the same complexities as
model checking \ATL, and model checking the whole~AMC is in \EXPTIME for
our three kinds of models.

\section{Expressiveness}\label{sec-expr}

We have seen that the ability of quantifying over the possible
strategies of the agents increases the complexity of model checking
and makes the analysis more difficult. 

We now turn to expressivity issues. We first focus on translations between
our different models (\CGSe, \CGSi and \ATS). We then
consider the expressiveness of ``Until'' and ``Always'' modalities,
proving that they cannot express the dual of ``Until''.

\subsection{Comparing the expressiveness of \CGS{}s and \ATS{}s}
\label{traduc-models}\label{sec:translation}

We prove in this section that \CGS{}s and \ATS{}s are closely related: they
can model the same concurrent games. 
In order to make this statement formal, we use the following definition:
\begin{definition}[\cite{AHKV-concur98}]
Let $\calA$ and~$\calB$ be two models of concurrent games (either \ATS{}s
or~\CGS{}s) over the same set~\Agt of agents. Let $R\subseteq
\Loc_{\calA}\times \Loc_{\calB}$ be a (non-empty) relation between states of~$\calA$ and
states of~$\calB$. That relation is an \emph{alternating bisimulation} when,
for any~$(\ell,\ell')\in R$, the following conditions hold:
\begin{enumerate}[$\bullet$]
\item $\Lab_{\calA}(\ell) = \Lab_{\calB}(\ell')$;
\item for any coalition~$A\subseteq\Agt$, we have
\begin{multline*}
\forall m\colon A \to \Chc_{\calA}(\ell,A).\ \exists m'\colon A \to
\Chc_{\calB}(\ell',A).\\
\forall q'\in\Next(\ell',A,m').\ \exists q\in\Next(\ell,A,m).\ 
  (q,q')\in R.
\end{multline*}
\item symmetrically, for any coalition~$A\subseteq\Agt$, we have
\begin{multline*}
\forall m'\colon A \to \Chc_{\calB}(\ell',A).\ \exists m\colon A \to
\Chc_{\calA}(\ell,A).\\
\forall q\in\Next(\ell,A,m).\ \exists q'\in\Next(\ell',A,m').\ 
  (q,q')\in R.
\end{multline*}
\end{enumerate}
where $\Next(\ell,A,m)$ is the set of locations that are reachable from~$\ell$
when each player~$A_i\in A$ plays~$m(A_i)$. 

Two models are said to be alternating-bisimilar if there exists an alternating
bisimulation involving all of their locations.
\end{definition}

With this equivalence in mind, \ATS{}s and \CGS{}s (both implicit and explicit
ones) have the same expressive power\footnote{The translations between \ATS{}s 
and \CGSe{}s was already mentionned in~\cite{GJ-synth139(2)}.}: 
\newcounter{thmcgsats}
\setcounter{thmcgsats}{\value{theorem}}
\begin{theorem}
\begin{enumerate}
\setcounter{enumiv}{0}
\item\label{eiCGS} Any~explicit \CGS can be translated into an alternating-bisimilar implicit one
in linear time;
\item\label{ieCGS} Any~implicit \CGS can be translated into an alternating-bisimilar explicit one
in exponential time;
\item\label{eATS} Any~explicit \CGS can be translated into an alternating-bisimilar \ATS
in cubic time;
\item\label{ATSe} Any~\ATS can be translated into an alternating-bisimilar explicit \CGS
in exponential time;
\item\label{iATS} Any~implicit \CGS can be translated into an alternating-bisimilar \ATS
in exponential time;
\item\label{ATSi} Any~\ATS can be translated into an alternating-bisimilar implicit \CGS
in quadratic time;
\end{enumerate}
\end{theorem}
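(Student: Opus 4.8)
The plan is to prove the six translations by giving explicit constructions and then verifying, in each case, that the identity (or an appropriate) relation between the locations of the source and target models is an alternating bisimulation. Since alternating bisimulation is a conjunction of a label-matching condition and two symmetric ``move-matching'' conditions, and all constructions below keep \Loc and \Lab untouched (or embed them injectively), the label condition will always be immediate, and the work concentrates on checking that for every move of a coalition on one side there is a matching move on the other with the same set of reachable successors. In fact, for most items I would arrange the construction so that the set $\Next(\ell,A,m)$ is literally preserved, which makes both move-matching conditions trivially satisfied by taking $R$ to be the identity.

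Here is how I would handle the individual items. For~\iteqref{eiCGS} (explicit $\to$ implicit \CGS), I would read off the explicit table $\Edg(\ell,\cdot)$ and, for each target location~$\ell'$ reachable from~$\ell$, write the guard $\phi_{\ell'} = \bigvee \bigl(\bigwedge_j (A_j = m_{A_j})\bigr)$ ranging over the joint moves sending $\ell$ to~$\ell'$, ending with $(\top,\cdot)$; this is linear in $\size\Edg$ and preserves every transition exactly, so the identity relation works. For~\iteqref{ieCGS} (implicit $\to$ explicit) I would simply evaluate every guard on every joint move, which is the exponential enumeration, again preserving \Next. For~\iteqref{eATS} (explicit \CGS $\to$ \ATS), the natural move for agent~$A_i$ playing~$m_{A_i}$ is the set $\Next(\ell,A_i,m_{A_i})$ of locations she leaves open; the key is the singleton requirement, which holds because the explicit table is a function, and the cubic bound comes from the sizes of these successor sets. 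For~\iteqref{ATSe} and~\iteqref{ATSi} (\ATS $\to$ \CGS), I would number each agent's moves and define $\Edg(\ell,m_{A_1},\ldots,m_{A_k})$ to be the unique element of $\bigcap_i Q_i$ when the intersection is a singleton (and undefined otherwise); the explicit version is exponential because the table has $\prod_i \size{\Chc(\ell,A_i)}$ entries, while for the implicit version I would generate guards $\bigwedge_i (A_i = m_{A_i})$ only for the joint moves yielding a genuine successor, giving the quadratic bound in $\size\Chc$. Item~\iteqref{iATS} (implicit \CGS $\to$ \ATS) I would obtain by composing~\iteqref{ieCGS} with~\iteqref{eATS}, the cost remaining exponential.

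For each construction the verification is the same pattern: because \Next is preserved location-by-location and move-by-move, I take $R = \{(\ell,\ell) : \ell\in\Loc\}$ (identifying source and target state spaces), observe $\Lab$ is unchanged, and note that for any coalition $A$ and any move $m$ the sets $\Next_{\calA}(\ell,A,m)$ and $\Next_{\calB}(\ell,A,m')$ coincide for the matching move~$m'$, so both the existential ``$\forall m \exists m'$'' clauses hold with $q = q'$. The one place where this needs genuine care is the \ATS-to-\CGS and \CGS-to-\ATS directions: I must check that the \ATS singleton requirement is both consumed correctly (when building a \CGS from an \ATS, the intersection being a singleton is exactly what makes $\Edg$ well defined) and produced correctly (when building an \ATS from a \CGS, I must verify $\bigcap_i \Next(\ell,A_i,m_{A_i})$ is a singleton, which follows from $\Edg$ being a function of the full joint move).

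The main obstacle I anticipate is the \ATS-to-\CGS construction (items~\iteqref{ATSe} and~\iteqref{ATSi}): the subtlety is that an \ATS move is a \emph{set} of locations and agents may have moves whose pairwise structure does not obviously yield singleton intersections under arbitrary recombination, so I must argue that the intersection over \emph{all} $k$ agents is always a singleton before $\Edg$ is even definable, and then confirm that the coalition-level \Next sets match. The other place to be careful about is the complexity bookkeeping --- in particular justifying the cubic bound in~\iteqref{eATS} (summing $\size{\Next(\ell,A_i,m)}$ over agents, moves and locations) and the quadratic bound in~\iteqref{ATSi} --- but these are routine size estimates once the constructions are fixed. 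I would therefore present the constructions compactly, prove the bisimulation claim once via the identity-relation argument, and reserve the detailed discussion for the singleton-requirement checks in the \ATS/\CGS translations.
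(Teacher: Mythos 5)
Your proposal handles four of the six items essentially as the paper does (guards per table entry for explicit-to-implicit, exhaustive evaluation for implicit-to-explicit, the intersection table for \ATS-to-explicit-\CGS, and composition for implicit-\CGS-to-\ATS), but it breaks down on exactly the two items that carry the real content of the theorem: the cubic translation from explicit \CGS{}s to \ATS{}s and the quadratic translation from \ATS{}s to implicit \CGS{}s. For the first of these, your key claim --- that the singleton requirement for the candidate moves $\Next(\ell,A_i,m_{A_i})$ ``follows from $\Edg$ being a function'' --- is false. Functionality of $\Edg$ only guarantees that $\bigcap_i\Next(\ell,A_i,m_{A_i})$ \emph{contains} $\Edg(\ell,m_{A_1},\ldots,m_{A_k})$; it does not prevent it from containing more. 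Take a matching-pennies \CGS: two players with two moves each, $\Edg(\ell,1,1)=\Edg(\ell,2,2)=a$ and $\Edg(\ell,1,2)=\Edg(\ell,2,1)=b$. Then $\Next(\ell,A_i,m)=\{a,b\}$ for every player and every move, so every intersection equals $\{a,b\}$ and your construction does not produce an \ATS at all. Nor can any identity-style repair work: an alternating bisimulation over the same state space would force every \ATS move of each player to offer both $a$ and $b$ as possible successors (because this holds of every move in the \CGS), hence every move of each player contains both locations, and then no pairwise intersection is a singleton. This is precisely why the paper abandons the identity relation here: its \ATS has locations $(\ell,\ell',m_{A_1},\ldots,m_{A_k})$ recording the previous location \emph{and the joint move that was played}, so that distinct joint moves reaching the same \CGS location yield distinct \ATS locations; the singleton property then holds by construction, the bisimulation is the projection relation $R=\{(\ell,(\ell,\ell',m_{A_1},\ldots,m_{A_k}))\}$ rather than the identity, and the state splitting is also the source of the cubic bound $O(\size{\Loc}^2\cdot\size{\Edg})$.

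The second gap concerns the \ATS-to-implicit-\CGS translation: you propose one guard $\bigwedge_i(A_i=m_{A_i})$ per joint move, but there are $\prod_i\size{\Chc(\ell,A_i)}$ joint moves, which is exponential in the number of agents; moreover your filter ``only the joint moves yielding a genuine successor'' discards nothing, since in an \ATS \emph{every} joint move yields a successor by the singleton requirement. So this construction gives an exponential-size implicit \CGS and cannot justify the claimed quadratic bound. To get quadratic you must index guards by \emph{successor location} rather than by joint move, as the paper does: for each $\ell'$, take $\phi_{\ell'}=\bigwedge_{A_i\in\Agt}\bigl(\bigvee_{j\,:\,\ell'\in j\text{-th move of }\Chc(\ell,A_i)}A_i=j\bigr)$. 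This is correct because the joint move $(Q_1,\ldots,Q_k)$ leads to $\ell'$ iff $\ell'\in Q_i$ for every $i$, the guards for distinct $\ell'$ are mutually exclusive, and the total size is $O(\size{\Loc}\cdot\size{\Chc})$.
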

Figure~\ref{fig-transl} summarizes those results. From our complexity results
(and the assumption that the polynomial-time hierarchy does not collapse), the
costs of the above translations is optimal. 
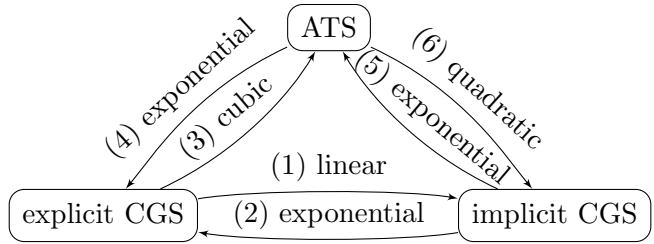
\begin{figure}[!ht]
\centering
\begin{tikzpicture}
  \path (4,2.5) node[draw,rectangle,rounded corners=2mm,inner sep=5pt] (s0) {\ATS};
  \path (1,0) node[draw,rectangle,rounded corners=2mm,inner sep=5pt] (s1) {explicit \CGS};
  \path (7,0) node[draw,rectangle,rounded corners=2mm,inner sep=5pt] (s2) {implicit \CGS};

  \draw[arrows=-latex'] (s0) .. controls +(-155:2cm) and +(50:1cm) .. (s1)
     node[pos=.4, above, sloped] {\eqref{ATSe} exponential};
  \draw[arrows=-latex'] (s1) .. controls +(25:2cm) and +(-120:1cm) .. (s0)
     node[pos=.4, above, sloped] {\eqref{eATS} cubic};
  \draw[arrows=-latex'] (s0) .. controls +(-25:2cm) and +(130:1cm) .. (s2)
     node[pos=.4, above, sloped] {\eqref{ATSi} quadratic};
  \draw[arrows=-latex'] (s2) .. controls +(155:2cm) and +(-60:1cm) .. (s0)
     node[pos=.4, above, sloped] {\eqref{iATS} exponential};
  \draw[arrows=-latex'] (s1) .. controls +(10:2cm) and +(170:2cm) .. (s2)
     node[pos=.5, above, sloped] {\eqref{eiCGS} linear};
  \draw[arrows=-latex'] (s2) .. controls +(-170:2cm) and +(-10:2cm) .. (s1)
     node[midway, above, sloped] {\eqref{ieCGS} exponential};
\end{tikzpicture}
\caption{Costs of translations between the three models}
\label{fig-transl}
\end{figure}

%
%

\begin{proof}
Points~\ref{eiCGS}, \ref{ieCGS}, and~\ref{ATSe} are reasonnably easy. 

For point~\ref{ATSi}, it suffices to write, for each possible next location,
the conjunction (on each agent) of the disjunction of the choices that contain
that next location. 
For instance, if we have $\Chc_{\calA}(\ell_0, A_1) =
\{\{\ell_1,\ell_2\},\{\ell_1,\ell_3\}\}$ and $\Chc_{\calA}(\ell_0, A_2) =
\{\{\ell_2,\ell_3\},\{\ell_1\}\}$ in the \ATS~$\calA$, 
then each player will have two choices in the associated \CGS~$\calB$, and 
\[
\Edg_{\calB}(\ell_0) = 
\left(
\begin{array}{rcl}
(A_1=1\ou A_1=2)\et (A_2=2), &\ell_1 \\
(A_1=1) \et (A_2=1), &\ell_2 \\
(A_1=2) \et (A_2=1), &\ell_3
\end{array}
\right)
\]

Formally, let $\calA =
(\Agt,\Loc_{\calA},\AP,\Lab_{\calA},\Chc_{\calA})$ be an \ATS. We then
define~$\calB=(\Agt,\Loc_{\calB},\AP,\Lab_{\calB},\Chc_{\calB},\Edg_{\calB})$
as follows:
\begin{enumerate}[$\bullet$]
\item $\Loc_{\calB}=\Loc_{\calA}$, $\Lab_{\calB}=\Lab_{\calA}$;
\item $\Chc_{\calB}\colon \ell\times A_i \to
  [1,\size{\Chc_{\calA}(\ell,A_i)}]$;
\item $\Edg_{\calB}$ is a function mapping each location~$\ell$ to 
  the sequence
  $((\phi_{\ell'},\ell'))_{\ell'\in\Loc_{\calA}}$ (the order is not important
  here, as the formulas will be mutually exclusive)  with 
\[
\phi_{\ell'} = \ET_{A_i\in\Agt}\Biggl(
    \OU_{\genfrac{}{}{0pt}{1}{\ell'\text{ appears in the $j$-th}}
    {\text{set of }\Chc_{\calA}(\ell,A_i)}}
     A_i\iseq j
  \Biggr)
\]
\end{enumerate}
Computing $\Edg_{\calB}$ requires quadratic time (more
precisely~$O(\size{\Loc_{\calA}}\times\size{\Chc_{\calA}})$). 
It is now easy to prove that the identity $\Id\subseteq
\Loc_{\calA}\times\Loc_{\calB}$ is an alternating bisimulation, since there is
a direct correspondance between the choices in both structures. 

\medskip
We now explain how to transform an explicit \CGS into an \ATS, showing
point~\ref{eATS}. 
Let $\calA=(\Agt,\Loc_{\calA},\AP,\Lab_{\calA},\Chc_{\calA},\Edg_{\calA})$ be
an explicit \CGS. We define the \ATS $\calB =
(\Agt,\Loc_{\calB},\AP,\Lab_{\calB},\Chc_{\calB})$ as follows (see
Figure~\ref{fig-cgs2ats} for more intuition on the construction):
\begin{enumerate}[$\bullet$]
\item $\Loc_{\calB} \subseteq \Loc_{\calA} \times \Loc_{\calA}\times \Nat^k$,
  where $k=\size\Agt$, with $(\ell,\ell',m_{A_1},\ldots,m_{A_k}) \in
  \Loc_{\calB}$ iff $\ell = \Edg_{\calA}(\ell',m_{A_1},\ldots,m_{A_k})$;
\item $\Lab_{\calB}(\ell,\ell',m_{A_1},\ldots,m_{A_k}) = \Lab_{\calA}(\ell)$;
\item From a location~$q=(\ell,\ell',m_{A_1},\ldots,m_{A_k})$, player~$A_j$ has
  $\size{\Chc_{\calA}(\ell,A_j)}$ possible moves:
\begin{multline*}
\Chc_{\calB}(q,A_j) = \Bigl\{
  \bigl\{ 
    (\ell'',\ell,m'_{A_1},\ldots,m'_{A_{j}}=i,\ldots,m'_{A_k}) 
     \ \mid\ 
  m'_{A_n} \in\Chc_{\calA}(\ell,A_n) \\
  \text{ and }\ell''=\Edg_{\calA}(\ell,m_{A_1},\ldots,m_{A_{j}}=i,\ldots,m_{A_k})
  \bigr\} \ \mid\ i\in\Chc_{\calA}(\ell,A_j)
\Bigr\}
\end{multline*}
\end{enumerate}
This \ATS is built in time $O(\size{\Loc_{\calA}}^2\cdot
\size{\Edg_{\calA}})$. It remains to show alternating bisimilarity between
those structures. We define the relation
\[
R = \{(\ell,(\ell,\ell',m_{A_1},\ldots,m_{A_k}))\ \mid\ \ell\in\Loc_{\calA}, 
(\ell,\ell',m_{A_1},\ldots,m_{A_k}))\in\Loc_{\calB} \}.
\]
It is now only a matter of bravery to prove that $R$ is an alternating
bisimulation between~$\calA$ and~$\calB$.

\medskip
Point~\ref{iATS} is now immediate (through explicit~\CGS{}s), but it 
could also be proved in a similar way as point~\ref{eATS}.
\end{proof}

Let us mention that our translations are optimal (up to a polynomial): our
exponential translations cannot be achieved in polynomial time because of our
complexity results for \ATL model-checking. Note that it does not mean that
the resulting structures must have exponential size.

\begin{figure}[!ht]
\centering
\begin{minipage}{\linewidth}
\begin{minipage}{.45\linewidth}
\centering
\begin{tikzpicture}
  \tikzstyle{every node}=[circle,minimum size=6mm]
  \foreach \x / \y / \nod / \lab in 
    { 3.0 / 1.6 /b1/ b ,
      1.0 / 2.8 /s1/ a ,
      1.0 / 0.5 /s0/ d ,
      4.0 / 2.8 /a1/ c } 
  { \path (\x,\y)  node[draw] (\nod) {$\lab$};}
  \tikzstyle{every node}=[rectangle]
  \draw[arrows=-latex'] (s1.60) 
        .. controls +(60:6mm) and +(120:6mm) 
        .. (s1.120) node[midway,left=1pt] {$\scriptstyle\listof{3.1}$};
  \draw[arrows=-latex'] (s1) -- (a1) node[pos=.7,above=-1pt] {$\scriptstyle\listof{2.2+2.3}$};
  \foreach \noda / \nodb / \lab in 
    {s1/b1/1.1}
  {\draw[arrows=-latex'] (\noda) -- (\nodb) 
      node[midway,above=-1pt,sloped] {$\scriptstyle\listof{\lab}$};}
  \draw[arrows=-latex'] (s1) -- (s0) 
      node[midway,left=0pt] {$\scriptstyle\stackof{1.2+1.3+2.1}$}
      node[midway,right=0pt] {$\scriptstyle\stackof{3.2+3.3}$};
\end{tikzpicture}
\end{minipage}%
\begin{minipage}{.45\linewidth}

\centering
Moves from location~$A$:

\medskip
\begin{tabular}{|rl|}\hline
\multicolumn{2}{|c|}{Player $1$} \\\hline
move~$1$:& $\{b_{a,1,1},d_{a,1,2},d_{a,1,3}\}$  \\
move~$2$:& $\{c_{a,2,2},c_{a,2,3},d_{a,2,1}\}$  \\
move~$3$:& $\{a_{a,3,1},d_{a,3,2},d_{a,3,3}\}$  \\\hline
\end{tabular}

\medskip
\begin{tabular}{|rl|}\hline
\multicolumn{2}{|c|}{Player $2$} \\\hline
move~$1$:& $\{a_{a,3,1},b_{a,1,1},d_{a,2,1}\}$ \\
move~$2$:& $\{c_{a,2,2},d_{a,1,2},d_{a,3,2}\}$ \\
move~$3$:& $\{c_{a,2,3},d_{a,1,3},d_{a,3,3}\}$ \\\hline
\end{tabular}

\end{minipage}
\end{minipage}
\caption{Converting an~\CGSe into an~\ATS}\label{fig-cgs2ats} 
\end{figure}

%
%

\subsection{Some remarks on the expressiveness of \texorpdfstring{\ATL}{ATL}}\label{expr}
\subsubsection{$\Diam[A]\Rel$ cannot be expressed with $\Diam[A]\Until$ and~$\Diam[A]\G$}

In the original papers defining~\ATL~\cite{focs1997-AHK,jacm49(5)-AHK}, the
syntax of that logic was slightly different from the one we used in this
paper: following classical definitions of the syntax of~\CTL, it was defined
as:
\begin{eqnarray*}
\ATLorig \ni \phi_s,\psi_s& ::=&\top\,\mid\, p \,\mid\, \non\phi_s \,\mid\, \phi_s\ou\psi_s
   \,\mid\, \Diam[A]\phi_p   \\
          \phi_p& ::= &  \X\phi_s\,\mid\, \G\phi_s \,\mid\, \phi_s\Until\psi_s.
\end{eqnarray*}

Duality is a fundamental concept in modal and temporal logics: for instance,
the dual of modality~\Until, often denoted by~\Rel and read \emph{release}, is
defined by $p \Rel q \equivdef \non((\non p)
\Until(\non q))$. Dual modalities allow, for instance, to put negations
inner inside the formula, which is often an important property when manipulating
formulas. 

In \LTL, modality~\Rel can be expressed using only~\Until and~\G: 
\begin{equation}
p \Rel q \equiv \G q \ou q \Until( p \et q).
\label{eqrel}
\end{equation}
In the same way, it is well known that \CTL can be defined using only
modalities \Ex\X, \Ex\G and \Ex\Until, and that we have
\begin{xalignat*}2
\Ex p \Rel q  &\equiv \Ex\G q \ou \Ex q \Until(p \et q) &
\All p \Rel q &\equiv \non\Ex(\non p)\Until(\non q).
\end{xalignat*}

It is easily seen that, in the case of \ATL, it is not the case that 
$\Diam[A] p \Rel q$ is equivalent to $\Diam[A]\G q \ou \Diam[A] q\Until (p \et
q)$: it could be the case that part of the outcomes satisfy $\G q$ and the
other ones satisfy $q\Until(p \et q)$. In fact, we prove that \ATLorig is
strictly less expressive than~\ATL:

\begin{theorem}
\label{th-expres}
There is no \ATLorig formula equivalent to $\Phi = \Diam[A](a\Rel b)$.
\end{theorem}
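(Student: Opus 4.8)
The plan is to exhibit a family of structures that $\Phi = \Diam[A](a\Rel b)$ distinguishes but that no fixed $\ATLorig$ formula can distinguish. The key semantic feature of $\Diam[A](a\Rel b)$ is that a single strategy must handle two kinds of outcomes simultaneously: along each outcome, either $b$ holds forever, or $b$ holds up to and including a position where $a\et b$ holds. Crucially, the strategy does not know in advance which branch the opponents will force, so it must keep $b$ true while preparing, on every branch, for the eventual $a$. The positive modalities of $\ATLorig$ — namely $\Diam[A]\X$, $\Diam[A]\G$, $\Diam[A]\Until$ — cannot mimic this "disjunction under a common strategy" because each of them commits all outcomes of the witnessing strategy to one uniform temporal pattern.

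First I would construct, for each $n$, a pair of pointed structures $\calS_n$ and $\calS_n'$ (most naturally \CGS{}s, or equivalently \ATS{}s by Theorem~\thethmcgsats) that agree on all $\ATLorig$ formulas of nesting depth at most $n$, yet disagree on $\Phi$. The intended design is a structure in which coalition $A$ can force $b$ to hold along a long path and can, at a moment of its choosing, steer into an $a\et b$ state — so that $\Diam[A](a\Rel b)$ holds — while the "copy" structure removes exactly the ability to combine "stay in $b$" with "reach $a\et b$ on the forced branches," so that $\Phi$ fails there, all the while keeping the two structures indistinguishable by the release-free modalities. The natural tool here is an Ehrenfeucht--Fra\"iss\'e / bisimulation-style game tailored to $\ATLorig$: I would define an $n$-round game characterizing equivalence with respect to $\ATLorig$ formulas of modal depth $n$, and show the duplicator has a winning strategy between $\calS_n$ and $\calS_n'$.

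The central lemma is therefore a \emph{locality} or \emph{depth-boundedness} statement: any $\ATLorig$ formula of modal depth $k$ can only "see" $k$ steps into the future, and each of its strategy quantifiers imposes a \emph{uniform} temporal obligation ($\X$, $\G$, or $\Until$) on all outcomes of one strategy. I would argue that such uniform obligations are preserved by the duplicator's moves, whereas the mixed obligation encoded by $a\Rel b$ — some outcomes ending their $b$-phase at an $a\et b$ state, others continuing indefinitely in $b$ — is exactly what cannot be reconstructed from a bounded Boolean combination of uniform ones. In particular I would verify that $\Diam[A]\G b$, $\Diam[A]\,b\Until(a\et b)$, and their finite Boolean combinations each fail to separate $\calS_n$ from $\calS_n'$, since neither structure admits a single strategy whose outcomes \emph{all} satisfy $\G b$ nor a single strategy whose outcomes \emph{all} satisfy $b\Until(a\et b)$; only the genuinely mixed release behaviour, present in one structure and absent in the other, makes the difference.

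\textbf{The main obstacle} will be getting the two structures right so that they are genuinely $\ATLorig$-equivalent up to the required depth while still differing on $\Phi$: one must prevent the release-free fragment from detecting, via iterated $\Diam[A]\X$ and nested coalition quantifiers, the asymmetry that $\Phi$ exploits. This is delicate because \ATL coalitions are powerful — a clever nesting of $\Diam[A]\X$ operators can probe the branching structure quite deeply — so the construction must ensure that every uniform obligation expressible in $\ATLorig$ is satisfied identically (or fails identically) in both structures, and that the only surviving distinction is the mixed-branch release pattern. I expect the proof of the duplicator's winning strategy in the game, rather than the construction itself, to carry the real technical weight.
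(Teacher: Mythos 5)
Your high-level plan coincides with the paper's (and with the classical technique the paper explicitly invokes for \CTL/\ECTL): exhibit two families of pointed models, indexed by a bound, that agree on all ``small'' \ATLorig formulas yet disagree on~$\Phi$, the semantic core being that a witnessing strategy for $\Diam[A](a\Rel b)$ may have genuinely mixed outcomes (some satisfying $\G b$, others $b\Until(a\et b)$) while neither uniform obligation is enforceable by itself. That diagnosis is right. The gap is that your text defers exactly the two components that constitute the proof. First, the models: you never construct $\calS_n$ and $\calS'_n$, you only state design desiderata. The paper's entire argument rests on one explicit inductive two-player \CGS (states $s_i$, $s'_i$, $a_i$, $b_i$, $s_0$) in which $s'_i$ differs from $s_i$ \emph{only} by one additional move for player~$A_1$; this single extra move makes $s'_i\sat\Phi$ immediate, while an induction on~$i$ (the opponent can always force the play down to~$s_{i-1}$) shows $s_i\not\sat\Phi$. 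Getting such a minimal, ``telescoping'' difference is the delicate part, and without it your claim that neither structure admits a uniform $\G b$-strategy nor a uniform $b\Until(a\et b)$-strategy is unverifiable. Second, the indistinguishability argument: the paper proves it by two lemmas established by simultaneous induction on~$i$ and on formula structure --- a telescoping lemma ($b_i$, $s_i$, $s'_i$ agree with $b_{i+1}$, $s_{i+1}$, $s'_{i+1}$ on formulas of size $\leq i$) and then the separating lemma ($s_i$ and $s'_i$ agree on formulas of size $\leq i$), each with a case analysis over $\Diam[A_1]\X$, $\Diam[A_1]\G$, $\Diam[A_1]\Until$ and the $\Diam[A_2]$ modalities. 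Nothing in your proposal substitutes for this case analysis.

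The Ehrenfeucht--Fra\"iss\'e detour you propose instead is not a shortcut and hides a real hazard: a ``round'' for $\Diam[A]\psi_1\Until\psi_2$ is not a one-step move, since the quantifier commits an entire strategy and constrains every infinite outcome; the game must therefore exchange strategies and positions along outcomes, and its adequacy (duplicator wins the $n$-round game iff the states agree on all formulas of the corresponding bound) must itself be proved --- work comparable to the paper's direct structural induction, which is presumably why the paper does not take this route. Note also the paper's Remark: \ATLorig and \ATL have the same \emph{distinguishing} power (via alternating bisimulation), so no fixed pair of models can separate them; your family-indexed plan respects this, but it forces the resource bound of any such game to interact correctly with the unbounded lookahead of \Until. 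The paper resolves this tension by bounding formula \emph{size} and by the telescoping lemma relating level~$i$ to level~$i+1$, which is precisely the device that makes the \Until case of the induction close. As it stands, your proposal correctly states the target and the standard strategy, but both load-bearing steps --- the concrete construction and the bounded-equivalence proof --- are missing.
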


The proof of Theorem~\ref{th-expres} is based on techniques similar to
those used for proving expressiveness results for temporal logics like
\CTL or \ECTL~\cite{emerson90}: we build two families of
models~$(s_i)_{i\in\Nat}$ and~$(s'_i)_{i\in\Nat}$ s.t.~(1)~$s_i \not\sat
\Phi$, (2)~$s'_i\sat \Phi$ for any~$i$, and (3)~$s_i$ and $s'_i$
satisfy the same \ATLorig formula of size less than~$i$.
Theorem~\ref{th-expres} is a direct consequence of the existence of
such families of models. In order to simplify the presentation, the
theorem is proved for formula\footnote{This formula can also be
  written $\Diam[A]a\Wuntil b$, where $\Wuntil$ is the ``weak until''
  modality.} $\Phi=\Diam[A] ( b\Rel(a\ou b))$.

  The models are described by one single inductive
  \CGS\footnote{Given the translation from \CGS to~\ATS (see Section~\ref{traduc-models}), the result also
    holds for~\ATS{}s.}~$\mathcal C$, involving two players.  It is depicted
  on~Fig.~\ref{fig-models}.  
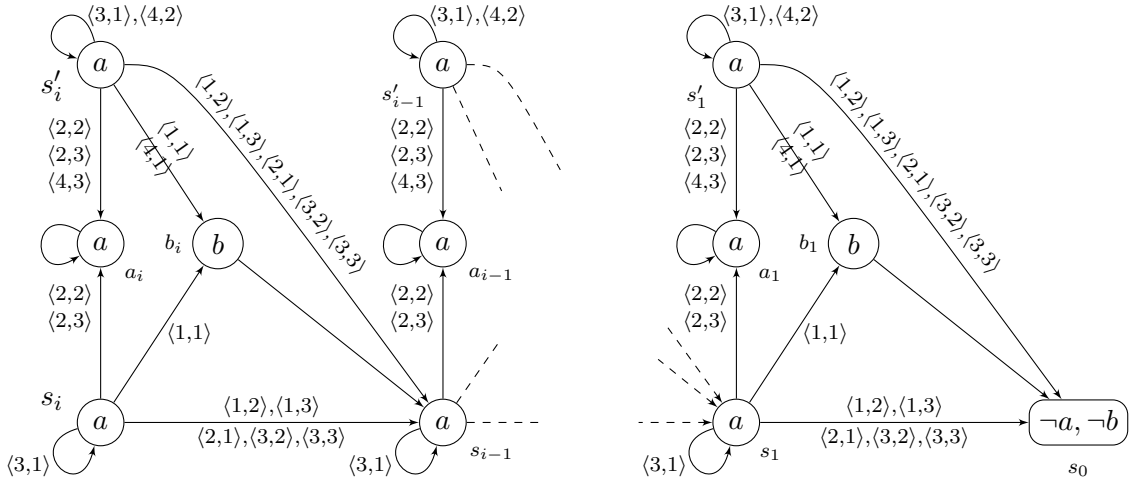
\begin{figure}[!ht]
\centering
\begin{tikzpicture}
\begin{scope}[xscale=1.3,yscale=1.4]
  \tikzstyle{every node}=[circle,minimum size=6mm]
  \foreach \x / \y / \nod / \lab / \nam in 
    { 1.0 / 3.0 /ai/ a / a_i,
      4.5 / 1.3 /sim1/ a /\quad s_{i-1},
      4.5 / 3.0 /aim1/ a /\quad a_{i-1},
      7.5 / 1.3 /s1/ a / s_1,
      7.5 / 3.0 /a1/ a / a_1} 
  { \path (\x,\y)  node[draw] (\nod) {$\lab$}
       -- +(.35,-.3) node {$\scriptstyle \nam$};}
  \foreach \x / \y / \nod / \lab / \nam in 
    { 2.2 / 3.0 /bi/ b /\ b_i, 
      8.7 / 3.0 /b1/ b /\ b_1}
  { \path (\x,\y)  node[draw] (\nod) {$\lab$}
       -- +(-.5,0) node {$\scriptstyle \nam$};}
  \foreach \x / \y / \nod / \lab / \nam / \hei in 
    { 1.0 / 1.3 /si/ a / s_i / .2,
      1.0 / 4.7 /spi/ a / s'_i / -.2}
  { \path (\x,\y)  node[draw] (\nod) {$\lab$}
       -- +(-.5,\hei) node {$\nam$};}
  \foreach \x / \y / \nod / \lab / \nam in 
    { 4.5 / 4.7 /spim1/ a / s'_{i-1},
      7.5 / 4.7 /sp1/ a / s'_1 }
  { \path (\x,\y)  node[draw] (\nod) {$\lab$}
       -- +(-.4,-.3) node {$\scriptstyle \nam$};}
  \path (11.0,1.3) node[draw,rectangle,rounded corners=2mm] (s0) {$\neg a,\neg b$}
       -- +(0,-.45) node {$\scriptstyle s_0$};
  \tikzstyle{every node}=[rectangle]
  \foreach \nod / \angle / \lab in 
    {si/-135/{3,1},sim1/-135/{3,1},s1/-135/{3,1},
     ai/-180/,aim1/180/,a1/180/}
    {\draw[arrows=-latex'] (\nod.\angle-30) 
        .. controls +(\angle-30:6mm) and +(\angle+30:6mm) 
        .. (\nod.\angle+30) node[midway,left=-1pt] {$\scriptstyle\listof{\lab}$};}
  \foreach \nod / \angle / \lab in 
    {spi/135/{3,1+4,2},spim1/135/{3,1+4,2},sp1/135/{3,1+4,2}}
    {\draw[arrows=-latex'] (\nod.\angle-30) 
        .. controls +(\angle-30:6mm) and +(\angle+30:6mm) 
        .. (\nod.\angle+30) node[pos=.3,right=1pt] {$\scriptstyle\listof{\lab}$};}
  \foreach \noda / \nodb / \lab in 
    {si/ai/{2,2+2,3},sim1/aim1/{2,2+2,3},s1/a1/{2,2+2,3}}
    {\draw[arrows=-latex'] (\noda) -- (\nodb) node[pos=.7,left=-1pt] {$\scriptstyle\stackof{\lab}$};}
  \foreach \noda / \nodb / \lab in 
    {spi/ai/{2,2+2,3+4,3},spim1/aim1/{2,2+2,3+4,3},sp1/a1/{2,2+2,3+4,3}}
    {\draw[arrows=-latex'] (\noda) -- (\nodb) node[pos=.5,left=-1pt] {$\scriptstyle\stackof{\lab}$};}
  \foreach \noda / \nodb / \lab in 
    {si/bi/{1,1},s1/b1/{1,1},bi/sim1/,b1/s0/}
    {\draw[arrows=-latex'] (\noda) -- (\nodb) node[midway,right=-1pt] {$\scriptstyle\listof{\lab}$};}
  \foreach \noda / \nodb / \lab in 
    {spi/bi/{1,1+4,1},sp1/b1/{1,1+4,1}}
    {\draw[arrows=-latex'] (\noda) -- (\nodb) node[midway,above=-12pt,sloped] {$\scriptstyle\stackof{\lab}$};}
  \foreach \noda / \nodb / \lab / \labb in 
    {si/sim1/{1,2+1,3}/{2,1+3,2+3,3},s1/s0/{1,2+1,3}/{2,1+3,2+3,3}}
    {\draw[arrows=-latex'] (\noda) -- (\nodb) 
      node[midway,above=-2pt] {$\scriptstyle\listof{\lab}$}
      node[midway,below=-2pt] {$\scriptstyle\listof{\labb}$};}
  \foreach \noda / \nodb / \lab in 
    {spi/sim1/{1,2+1,3+2,1+3,2+3,3},sp1/s0/{1,2+1,3+2,1+3,2+3,3}}
    {\draw[arrows=-latex'] (\noda) .. controls +(0:8mm) .. (\nodb) 
      node[pos=.71,above=-1pt,sloped] {$\scriptstyle\listof{\lab}$};}

  \draw[style=dashed] (sim1) -- +(10mm,0mm);
  \draw[style=dashed] (sim1) -- +(6mm,8mm);
    \draw[style=dashed] (spim1) -- +(6mm,-12mm);
    \draw[style=dashed] (spim1) .. controls +(6mm,0mm) .. +(12mm,-10mm);
    \draw[style=dashed,arrows=latex'-] (s1) -- +(left:1cm);
    \draw[style=dashed,arrows=latex'-] (s1) -- +(-8mm,6mm);
    \draw[style=dashed,arrows=latex'-] (s1) -- +(-7mm,9mm);
\end{scope}
\end{tikzpicture}
\caption{The \CGS~$\mathcal C$, with states~$s_i$ and~$s'_i$ on the left}
\label{fig-models}
\end{figure} 
  A~label
  $\listof{\alpha,\beta}$ on a transition indicates that this
  transition corresponds to move~$\alpha$ of player~$A_1$ and to
  move~$\beta$ of player~$A_2$.  In that~\CGS, states~$s_i$ and~$s'_i$
  only differ in that player~$A_1$ has a fourth possible move
  in~$s'_i$.  This ensures that, from state~$s'_i$ (for any~$i$),
  player~$A_1$ has a strategy (namely, he should always play~$4$) for
  enforcing~$a\Wuntil b$. But this is not the case from state~$s_i$:
  by induction on~$i$, one can prove $s_i \not\sat \Diam[A_1] a \Wuntil
  b$. The base case is trivial. Now assume the property holds for $i$:
  from $s_{i+1}$, any strategy for $A_1$ starts with a move in
  $\{1,2,3\}$ and for any of these choices, player $A_2$ can choose a
  move (2, 1 and 2 resp.) that enforce the next state to be $s_i$
  where by i.h. $A_1$ has no strategy for~$a\Wuntil b$.

  We now prove that $s_i$ and $s'_i$ satisfy the same ``small''
  formulae.  First, we have the following equivalences:
\newcounter{lemmappun}
\setcounter{lemmappun}{\value{lemma}}
\begin{lemma}\label{lemma1}
For any $i>0$, for any $\psi\in\ATLorig$ with~$\size\psi\leq i$:
 \begin{xalignat}1
 b_i\models \psi &\text{ iff } b_{i+1}\models \psi \label{eq1}\\
 s_i\models \psi &\text{ iff } s_{i+1}\models \psi \label{eq2}\\
 s'_i\models \psi &\text{ iff } s'_{i+1}\models \psi \label{eq3}
 \end{xalignat}
\end{lemma}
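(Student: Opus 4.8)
The plan is to prove the three equivalences~\eqref{eq1}, \eqref{eq2} and~\eqref{eq3} simultaneously, by induction on the size of~$\psi$, for every index~$i$ with $\size\psi\leq i$. The engine of the argument is the observation that the fragment of~$\mathcal C$ reachable from~$s_{i+1}$ (resp.~$b_{i+1}$, $s'_{i+1}$) is the \emph{index-shifted copy} of the fragment reachable from~$s_i$ (resp.~$b_i$, $s'_i$): the bijection~$\tau$ sending $s_j\mapsto s_{j+1}$, $b_j\mapsto b_{j+1}$, $a_j\mapsto a_{j+1}$ and $s'_j\mapsto s'_{j+1}$ maps, for every coalition~$A$ and every (common) move~$m$, the set $\Next(q,A,m)$ onto $\Next(\tau(q),A,m)$. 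The only place where $\tau$ fails to be a genuine bisimulation is at the very bottom, where it would pair the dead state~$s_0$ (labelled $\non a,\non b$) with the live state~$s_1$; the whole difficulty is to show that a formula of size at most~$i$ cannot probe that far.

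First I would dispose of the easy cases. For atomic formulas ($\top$ and the propositions) the equivalences hold at every index, since $s_i$ and~$s_{i+1}$ (resp.\ $b_i$ and~$b_{i+1}$, $s'_i$ and~$s'_{i+1}$) carry the same label. The boolean cases $\non\phi$ and $\phi\ou\phi'$ follow immediately from the induction hypothesis applied to the strictly smaller subformulas at the \emph{same} index~$i$. The prototypical modal case is $\Diam[A]\X\phi$: satisfaction at~$s_i$ amounts to $s_i\in\CPre(A,\{\,q\mid q\models\phi\,\})$, a one-step property. By the correspondence above, a move of~$A$ from~$s_i$ forces~$\phi$ at the next step iff the same move from~$s_{i+1}$ does, \emph{provided} each pair of corresponding successors agrees on~$\phi$. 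The relevant pairs are $b_i/b_{i+1}$, $s_i/s_{i+1}$, $s_{i-1}/s_i$ and $a_i/a_{i+1}$; agreement follows from the induction hypothesis, invoked at index~$i$ for $b_i/b_{i+1}$ and $s_i/s_{i+1}$ and at index~$i-1$ for $s_{i-1}/s_i$, while $a_i$ and~$a_{i+1}$ are bisimilar hence agree outright. Since $\size\phi\leq i-2$, both invocations are legitimate (and $i\geq 2$ ensures $i-1>0$). The states $b_i$ and~$s'_i$ are handled identically, using that $b_i$ has the single successor~$s_{i-1}$.

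The heart of the argument---and the step I expect to be the real obstacle---is the unbounded modalities $\Diam[A]\G\phi$ and $\Diam[A](\phi\Until\phi')$, for which the one-step reduction is unavailable and for which a finite-round bisimulation does \emph{not} suffice (an unbounded modality of size~$2$ can distinguish states that agree to any fixed finite depth). Here I would exploit the rigid shape of~$\mathcal C$: along any outcome the index is non-increasing, so every infinite play is eventually absorbed in some~$a_j$ (label~$a$ forever), or trapped in a self-loop~$s_j$ (label~$a$ forever), or reaches~$s_0$ (label $\non a,\non b$ forever). Restricting to memoryless strategies (legitimate for~\ATL), I would transfer a witnessing strategy from~$s_i$ to~$s_{i+1}$ along~$\tau$, obtaining two outcomes whose states are matched with indices shifted by one. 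By the induction hypothesis the sub-state-formulas $\phi$ and~$\phi'$ are \emph{shift-invariant} on all $s$- and $b$-states of index at least~$\size\phi$ (resp.~$\size{\phi'}$) and take fixed values on the $a$-sinks, so the two outcomes agree on~$\phi$ and~$\phi'$ at every position whose matched index stays above that threshold.

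The remaining difficulty is the boundary. Since $\size\phi,\size{\phi'}\leq i-2$, the positions where the matched indices fall below the threshold occur only after at least $i-\size\phi\geq 2$ steps, i.e.\ far from the start, and I would argue that the discrepancy between~$s_0$ and the shifted live state cannot affect the value of a size-$\leq i$ formula. Concretely, I would carry out the (finite) case analysis of how $\G$ and $\Until$ evaluate on the three possible tail behaviours listed above, checking that the verdict depends only on the shift-invariant data and on the fixed boundary states~$a_j$ and~$s_0$, which $s_i$ and~$s_{i+1}$ reach in the same way. Concluding that $s_i$ and~$s_{i+1}$ (and likewise $b_i/b_{i+1}$ and $s'_i/s'_{i+1}$) agree on $\Diam[A]\G\phi$ and $\Diam[A](\phi\Until\phi')$ then closes the induction.
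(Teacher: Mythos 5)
Your induction scheme (structural induction on $\psi$ with the index $i$ universally quantified) is equivalent to the paper's double induction, and your handling of the atomic, boolean and $\Diam[A]\X$ cases matches the paper's proof. The gap is exactly where you anticipate it, in the $\Diam[A]\G$ and $\Diam[A]\Until$ cases, and your boundary discussion does not repair it. The strategy obtained by transferring a winning strategy along~$\tau$ is in general \emph{not} winning, and the failure is not confined to the pair $s_0/s_1$: it can occur at every state of index below~$\size\phi$, about which your induction hypothesis says nothing. Concretely, take $\phi = a \ou \Diam[A_1]\X(\non a \et \non b)$. Then $\phi$ holds at every state of~$\mathcal C$ \emph{except} the states $b_j$ with $j\geq 2$ (because $b_1$'s unique successor is~$s_0$, while $b_j$'s unique successor is~$s_{j-1}$, labelled~$a$). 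The memoryless strategy ``play move~$3$ in every $s_j$ with $j\geq 2$, play move~$1$ in~$s_1$'' witnesses $s_i\models\Diam[A_1]\G\phi$ for any $i$ with $\size{\Diam[A_1]\G\phi}\leq i$; but its $\tau$-shift plays move~$1$ in~$s_2$, so one of its outcomes from~$s_{i+1}$ visits~$b_2$, where $\phi$ fails. Hence the shifted strategy is not winning, and no case analysis of tail behaviours ($a_j^\omega$, $s_j^\omega$, $s_0^\omega$) can rescue it, because the violation happens at a \emph{transient} low-index state; nor does the ``at least $i-\size\phi\geq 2$ steps from the start'' observation help --- as you yourself note, unbounded modalities see arbitrarily far.

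What is needed instead (and what the paper does) is to abandon the shift below the top level: to turn a winning strategy~$f$ at~$s_i$ into one at~$s_{i+1}$, play a single stalling or mimicking move at level~$i+1$ (move~$3$ for~$\G$; the move $f$ prescribes at~$s_i$ for~$\Until$), whose successors are only level-$(i+1)$ states and~$s_i$, and upon reaching~$s_i$ switch to $f$ itself, \emph{unshifted}. The induction hypothesis is then invoked only at the top index (for the pairs $s_i/s_{i+1}$, $b_i/b_{i+1}$, $a_i/a_{i+1}$, and $s_{i-1}/s_i$ in the $\X$ case), never below the threshold, so the boundary problem never arises. For the converse direction the paper uses that player~$A_1$ cannot avoid~$s_i$ when playing from~$s_{i+1}$, so a winning memoryless strategy at~$s_{i+1}$ restricts to a winning strategy at~$s_i$. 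Your proof needs this replacement (or an equivalent one) for the full shift in the $\G$ and $\Until$ cases; as written, its key transfer step is false.
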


\begin{proof}
The proof proceeds by induction on~$i$, and on the structure 
of the formula~$\psi$.

\paragraph{Base case: $i=1$.} Since we require that $\size\psi\leq i$, $\psi$
can only be an atomic proposition. The result is then obvious.

\paragraph{Induction step.} We assume the result holds up 
to some~$i-1\geq 1$, and prove that it then still holds for~$i$. 
Let $\psi$ s.t. $\size\psi\leq i$. We now proceed by structural induction on~$\psi$:

\begin{enumerate}[$\bullet$]
\item The result is again obvious for atomic propositions, as well as for
  boolean combinations of subformulae.
  
\item Otherwise, the ``root'' combinator of~$\psi$ is a modality. If it is a
  \CTL modality, the results are quite straightforward. Also, since there is
  only one transition from~$b_i$, any \ATLorig modality can be expressed as a \CTL
  modality in that state, and~\eqref{eq1} follows.

\item If $\psi = \Diam[A_1]\X \psi_1$: 
  Assume~$s_{i}\models\psi$. Then, depending on the strategy, either $b_i$
  and~$s_{i-1}$, or $a_i$ and~$s_{i-1}$, or $s_i$ and~$s_{i-1}$, should
  satisfy~$\psi_1$. By i.h., this propagates to the next level, and the same strategy
  can be mimicked from~$s_{i+1}$.  

  The converse is
  similar (hence~\eqref{eq2}), as well as the proof for~\eqref{eq3}.

\item If $\psi = \Diam[A_1]\G \psi_1$: 
  If~$s_{i}\models\psi$, then $s_{i}$, thus~$s_{i+1}$, satisfy~$\psi_1$.
  Playing move~$3$ is a strategy for player~$A_1$ to enforce~$\G\psi_1$
  from~$s_{i+1}$, since the game will either stay in~$s_{i+1}$ or go to~$s_i$,
  where player~$A$ has a winning strategy. 

  The converse is immediate, as player~$A_1$ cannot avoid~$s_{i}$ when
  playing from~$s_{i+1}$. Hence~\eqref{eq2} for~$\Diam[A_1]\G$-formulae.

\medskip

  If~$s'_{i}\models\psi$, then both~$s'_{i}$ and~$s'_{i+1}$
  satisfy~$\psi_1$. Also, player~$A_1$ cannot avoid the play to go in
  location~$s_{i-1}$. Thus, $s_{i-1}\models\psi_1$ ---and by~i.h., so
  does~$s_{i}$--- and $s_i\models\psi$, as above. 
  Now, following the same strategy in~$s'_{i+1}$ as the winning strategy
  of~$s'_i$ clearly enforces~$\G\psi_1$. 
  The converse is similar: it suffices to mimic, from~$s'_{i}$, 
  the strategy witnessing the fact that~$s'_{i+1}\models\psi$. This
  proves~\eqref{eq3}, and concludes this case.

\item If $\psi = \Diam[A_1]\psi_1\Until\psi_2$: 
  If $s_{i}\models\psi$, then either~$\psi_2$ or~$\psi_1$ holds in~$s_i$, thus
  in~$s_{i+1}$. The former case is trivial. In the latter, player~$A_1$ can
  mimic the winning strategy in~$s_{i+1}$: the game will end up in~$s_i$, with
  intermediary states satisfying~$\psi_1$ (or~$\psi_2$), and he can then apply
  the original strategy.

  The converse is obvious, since from~$s_{i+1}$, player~$A_1$ cannot avoid
  location~$s_{i}$, from which he must also have a winning strategy.\medskip

  If $s'_{i}\models\psi$, omitting the trivial case where $s'_{i}$
  satisfies~$\psi_2$, we have that~$s_{i-1}\models\psi$.
  Also, a (state-based) strategy in~$s'_i$ witnessing~$\psi$ necessary
  consists in playing move~$1$ or~$2$. Thus $a_i$ and~$b_i$ satisfy~$\psi$,
  and the same strategy (move~$1$ or~$2$, resp.) enforces~$\G\psi_1$
  from~$s_i$. It is now easy to see that the same strategy is correct
  from~$s'_{i+1}$. 
  Conversely, apart from trivial cases, the strategy can again only consist in
  playing moves~$1$ or~$2$. In both cases, the game could end up in~$s_{i}$,
  and then in~$s_{i-1}$. Thus $s_{i-1}\models\psi$, and the same strategy as
  in~$s'_{i+1}$ can be applied in~$s'_{i}$ to witness~$\psi$.

\item The proofs for $\Diam[A_2]\X \psi_1$, $\Diam[A_2]\G \psi_1$,
  and~$\Diam[A_2]\psi_1\Until\psi_2$ are very similar to the previous ones. \qed
\end{enumerate}
\end{proof}

%
%

\begin{lemma}
$\forall i>0$, $\forall \psi\in\ATLorig$ with~$\size\psi\leq i$:
$s_i\models \psi \;\text{ iff }\; s'_{i}\models \psi$.
\end{lemma}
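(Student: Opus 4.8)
The plan is to prove the statement by induction on~$i$, and within the inductive step to argue by the shape of the outermost connective of~$\psi$. The heart of the argument is a single auxiliary observation, which I will call~$(\star)$: \emph{for every $\chi\in\ATLorig$ with $\size\chi\leq i-1$, the four states $s_{i-1}$, $s_i$, $s'_{i-1}$ and~$s'_i$ all satisfy the same such~$\chi$.} This follows by chaining three equivalences: $s_i\models\chi\Leftrightarrow s_{i-1}\models\chi$ and $s'_i\models\chi\Leftrightarrow s'_{i-1}\models\chi$ are instances of Lemma~\ref{lemma1} (namely~\eqref{eq2} and~\eqref{eq3} taken at index~$i-1$), while $s_{i-1}\models\chi\Leftrightarrow s'_{i-1}\models\chi$ is exactly the induction hypothesis of the present lemma applied at $i-1<i$. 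The base case $i=1$ is immediate, since then $\psi$ can only be an atomic proposition and $s_1$, $s'_1$ carry the same label~$a$.

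Once $(\star)$ is available, everything of size at most~$i-1$ is settled, as are all Boolean combinations of size exactly~$i$ (their immediate subformulas have size $\leq i-1$). Hence the only remaining case is $\psi=\Diam[A]\phi_p$ with $\size\psi=i$, where $\phi_p$ is $\X\psi_1$, $\G\psi_1$ or $\psi_1\Until\psi_2$ with subformulas of size $\leq i-1$, and $A$ ranges over $\emptyset$, $\{A_1\}$, $\{A_2\}$ and~$\Agt$. The key structural fact is that the transitions out of~$s_i$ and out of~$s'_i$ coincide on moves~$1,2,3$ of player~$A_1$ (the self-loop merely targeting $s_i$ resp.~$s'_i$), the sole difference being the extra move~$4$ available to~$A_1$ in~$s'_i$, whose one-step successors are exactly $\{b_i,s'_i,a_i\}$. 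By~$(\star)$ the ``new'' self-loop successor $s'_i$ carries the same truth value as $s_i$ and as~$s_{i-1}$ on every $\psi_1,\psi_2$, and the other two successors $b_i,a_i$ are already reachable from~$s_i$ (via moves~$1$ and~$2$). Thus, evaluating the successor subformulas through~$(\star)$, the truth value of each modality becomes one and the same Boolean function of $v_1=[s_i\models\psi_1]=[s'_i\models\psi_1]$, $v_2$, $[a_i\models\psi_1]$, $[b_i\models\psi_1]$, and so on.

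For the $\X$ case this is a finite verification I would carry out for each coalition. For $A=\{A_1\}$ one checks that move~$3$ already realises everything move~$4$ could: move~$4$ succeeds only when $\{a_i,b_i,s'_i\}\models\psi_1$, which by~$(\star)$ forces $s_{i-1}\models\psi_1$ and hence makes move~$2$ (successors $\{a_i,s_{i-1}\}$) succeed in~$s_i$. For $A=\{A_2\}$, $\emptyset$ and~$\Agt$, the only outcome that move~$4$ adds to a given move is~$s'_i$, which is value-equivalent to an outcome already present, so the governing conjunction or disjunction is unchanged. In all four subcases $s_i\models\Diam[A]\X\psi_1$ and $s'_i\models\Diam[A]\X\psi_1$ reduce to the identical Boolean condition.

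The main obstacle is the treatment of $\G$ and $\Until$, where the relevant outcomes are no longer one step long but run along the forced descent $s_i\to s_{i-1}\to\cdots\to s_0$. The point to establish is that move~$4$ is never advantageous. When $A_1$ belongs to the coalition, its optimal safe move is~$3$: any use of move~$1$ or~$4$ lets the opponent route the play through~$b_i$ (and thence down to~$s_{i-1}$), imposing the extra requirement $b_i\models\psi_1$, while move~$2$ risks being trapped in the sink~$a_i$; consequently any winning strategy using move~$4$ yields a winning move-$3$ strategy, which is equally available in~$s_i$. When $A_1$ is not in the coalition, move~$4$ reaches no state not already reachable up to value-equivalence by~$(\star)$, so it grants the spoiler no new way to falsify the objective. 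In every case the situation collapses to the pure ``loop-or-descend'' analysis on the $s$-states, which by~$(\star)$ and Lemma~\ref{lemma1} is identical for~$s_i$ and~$s'_i$; the case split is governed solely by whether $A_1\in A$, since only $A_1$ owns the extra move~$4$. This yields $s_i\models\psi\Leftrightarrow s'_i\models\psi$ and closes the induction.
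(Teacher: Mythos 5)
Your overall plan is essentially the paper's proof in a slightly reorganized form: the paper also inducts on~$i$, uses Lemma~\ref{lemma1} to transport subformula values between levels (your observation~$(\star)$ is a clean packaging of that), dismisses $\Diam[\emptyset]$ and $\Diam[A_1,A_2]$ as \CTL modalities, and for $\Diam[A_1]$ argues that a winning strategy starting with move~$4$ can be replaced by one available in~$s_i$. Your treatment of the $\X$~case and of the coalitions containing~$A_1$ is correct.

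There is, however, a genuine gap in the $\Diam[A_2]$ case for the path modalities, in the direction $s_i\models\Diam[A_2]\phi_p\Rightarrow s'_i\models\Diam[A_2]\phi_p$. You assert that since move~$4$ ``reaches no state not already reachable up to value-equivalence,'' it ``grants the spoiler no new way to falsify the objective,'' so $A_2$'s winning strategy can simply be kept. That is false: value-equivalence of \emph{states} (agreement on formulas of size $\leq i-1$) does not give equivalence of the \emph{plays} through them. Concretely, if $A_2$'s winning strategy at~$s_i$ plays move~$2$, its successor set at~$s_i$ is $\{s_{i-1},a_i\}$; but at~$s'_i$ player~$A_1$ can answer with move~$4$ (transition $\listof{4,2}$) and self-loop, and by doing so forever produce the outcome $(s'_i)^\omega$, which has no counterpart among the outcomes from~$s_i$ under that strategy. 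For $\phi_p=\psi_1\Until\psi_2$ this outcome violates the objective whenever $s'_i\not\models\psi_2$, and this situation is perfectly compatible with the move-$2$ strategy being winning at~$s_i$: take $\psi_1=a$ and $\psi_2=\Diam[\emptyset]\G\non b$. Then playing move~$2$ everywhere wins $\psi_1\Until\psi_2$ from~$s_i$ (every outcome either exits to some sink~$a_j$, which satisfies~$\psi_2$, or descends to~$s_0$, which satisfies~$\psi_2$), yet $s_i\not\models\psi_2$, hence $s'_i\not\models\psi_2$, so the copied strategy loses at~$s'_i$ against $A_1$ playing~$4$ forever. The lemma is still true, but the transfer requires \emph{changing} the strategy: this is exactly the point where the paper has $A_2$ replace move~$2$ by move~$3$ at~$s'_i$ (move~$3$ at~$s'_i$ yields successors $\{s_{i-1},a_i\}$ even against move~$4$, matching move~$2$ at~$s_i$). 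Your closing sentence --- that the ``loop-or-descend analysis \ldots is identical for $s_i$ and $s'_i$'' --- asserts precisely what fails: the per-move situations are not identical; they only match after re-pairing $A_2$'s moves, and that substitution is the missing step.
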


\begin{proof}
  The proof proceeds by induction on~$i$, and on the structure of the
  formula~$\psi$. The case $i=1$ is trivial, since $s_1$ and $s'_1$
  carry the same atomic propositions. For the induction step,
  dealing with \CTL modalities (\Diam[\emptyset] and
  \Diam[A_1,A_2]) is also straightforward, then we just consider
  \Diam[A_1]{}- and~\Diam[A_2]{}-modalities. 

  First we  consider \Diam[A_1]{}-modalities. 
  It is well-known that we can restrict to state-based strategies in this
  setting. 
  If player~$A_1$ has a
  strategy in~$s_i$ to enforce something, then he can follow the same
  strategy from~$s'_i$. Conversely, if player~$A_1$ has a strategy in~$s'_i$ to
  enforce some property, two cases may arise: either the strategy consists in
  playing move~$1$, $2$ or~$3$, and it can be mimicked from~$s_i$. Or the
  strategy consists in playing move~$4$ and we distinguish three cases:
  \begin{enumerate}[$\bullet$]
  \item $\psi=\Diam[A_1]\X\psi_1$: that move~$4$ is a winning strategy entails  
  that~$s'_i$, $a_i$ and~$b_i$ must satisfy~$\psi_1$.
  Then $s_i$ (by~i.h. on the formula) and~$s_{i-1}$ (by Lemma~\ref{lemma1}) both
  satisfy~$\psi_1$. Playing move~$1$ (or~$3$) in $s_i$ ensures that the next
  state will satisfy~$\psi_1$.
\item $\psi=\Diam[A_1]\G\psi_1$: by playing move~$4$, 
  the game could end up in~$s_{i-1}$ (\emph{via}~$b_i$), and in~$a_i$
  and~$s'_i$. Thus $s_{i-1}\models\psi$, and in particular~$\psi_1$. By~i.h.,
  $s_i\models\psi_1$, and playing move~$1$ (or~$3$) in~$s_i$, and then
  mimicking the original strategy (from~$s'_i$), enforces~$\G\psi_1$.
\item $\psi=\Diam[A_1]\psi_1\Until\psi_2$: a strategy starting with
  move~$4$ implies $s'_i\models\psi_2$ (the game could stay in~$s'_i$ for
  ever). Then $s_i\models\psi_2$ by~i.h., and the result follows.
  \end{enumerate}

  \noindent We now turn to $\Diam[A_2]$-modalities: clearly
  if~$\Diam[A_2]\psi_1$ holds in~$s'_i$, it also holds in~$s_i$.
  Conversely, if player~$A_2$ has a (state-based) strategy to enforce
  some property in~$s_i$: If it consists in playing moves~$1$ or~$3$,
  then the same strategy also works in~$s'_i$. Now if the strategy
  starts with move~$2$, then playing move~$3$ in~$s'_i$ has the same
  effect, and thus enforces the same property. \popQED
\end{proof}

\begin{remark}
\ATLorig and \ATL have the same distinguishing power as the
fragment of~\ATL involving only the $\Diam\X$ modality (see~\cite[proof of
  Th.~6]{AHKV-concur98}). This means 
that we cannot exhibit two models $M$ and~$M'$ s.t.~(1)~$M \sat
\Phi$, (2)~$M'\not\sat \Phi$, and (3)~$M$~and~$M'$ satisfy the same
\ATLorig formula.
\end{remark}

\begin{remark}
In~\cite{jacm49(5)-AHK}, a
restriction of \CGS{}~---the turn-based 
\CGS{}s--- is considered. In any location of these models~(named TB-\CGS
hereafter),  only one player has several moves (the other players have
only one possible choice). Such models have the property of \emph{determinedness}: given a
set of players~$A$, either there is a strategy for $A$ to win some
objective~$\Phi$, or there is a strategy for other players ($\Agt
\backslash A$) to enforce~$\non\Phi$. In such systems, 
modality~\Rel can be expressed as follows:
\(
\Diam[A] \phi \Rel \psi \equiv_{\mbox{\tiny TB-\CGS}} 
\non\Diam[\Agt\backslash A] (\non \phi) \Until (\non \psi). 
\)
\end{remark}

\subsubsection{$\Diam[A] \Ginf$ and $\Diam[A] \Finf$ cannot be expressed in \ATL}
\label{sec-expre-finf}
It is well known that \ECTL formulae of the form $\Ex\Finf P$ (and its
dual $\All\Ginf P$) cannot be expressed in \CTL~\cite{emerson90}. 
On the other hand, the following equivalences hold:
\begin{xalignat*}2
\Ex\Ginf P &\equiv \Ex\F\Ex\G P &
\All\Finf P &\equiv \All\G\All\F P.
\end{xalignat*}

The situation is again different in~\ATL: neither~$\Diam[A]\Finf$ nor
$\Diam[A]\Ginf$ are expressible in~\ATL.
Indeed, assume that $\Diam[A]\Finf$ could be expressed by the \ATL
formula~$\Phi$. This holds in particular in $1$-player games (\ie, Kripke
structures). In the case where coalition~$A$ contains the only player, we
would end up with a \CTL equivalent of~$\Ex\Finf$, which is known not to
exist.
A~similar argument applies for~$\Diam[A]\Ginf$.

\section{Conclusion}

In this paper, we considered the basic questions of expressiveness and
complexity of~\ATL.  
We precisely characterized the complexity of \ATL, \ATL+, \EATL
and \ATL* model-checking, on both ATSs and~CGSs, when the number of
agents is not fixed.
These results complete the previously known results  about these
formalisms (and corrects some of them). It is interesting to see that their complexity classes
(\DD2~or~\DD3) are unusual in the area of model-checking.
We also showed that \ATL, as originaly defined
in~\cite{focs1997-AHK,compos1997-AHK,jacm49(5)-AHK}, is not as expressive as it could
be expected, and we argue that the modality ``Release'' should be
added in its definition.

\bibliographystyle{alpha}
\bibliography{ATL}
\vskip-50 pt
\end{document}